\documentclass[a4paper,UKenglish,cleveref, autoref,thm-restate]{lipics-v2021}

\usepackage[noadjust]{cite}
\usepackage[most]{tcolorbox}

\usepackage{datetime}
\usepackage{booktabs}
\usepackage{enumitem}
\usepackage{pifont}
\usepackage{caption}
\usepackage{subcaption}
\usepackage[T1]{fontenc}
\usepackage[utf8]{inputenc}
\usepackage{pgfplots}
\usepackage{amsmath, amssymb}
\usepackage{amsthm}
\usepackage{color}
\usepackage{cases}
\usepackage{xparse}
\usepackage{xargs}
\usepackage{appendix}
\usepackage{algorithm}
\usepackage{algpseudocode}
\usepackage{verbatim, xspace}
\usepackage{tikz}
\usepackage{mathtools}
\usepackage{stmaryrd}
\usepackage{xcolor}
\usepackage{framed}
\usepackage[framemethod=TikZ]{mdframed}
\usepackage{hyperref}

\usetikzlibrary{decorations.pathreplacing,angles,quotes}
\usetikzlibrary{arrows}

\Crefname{assumption}{Assumption}{Assumptions}
\Crefname{fact}{Fact}{Facts}
\Crefname{enumi}{Property}{Properties}
\Crefname{observation}{Observation}{Observations}
\Crefname{claim}{Claim}{Claims}
\Crefname{equation}{Equation}{Equations}

\newcommand{\floor}[1]{\left\lfloor #1 \right\rfloor}
\newcommand{\eps}{\varepsilon}
\newcommand{\Oh}{\mathcal{O}}
\newcommand{\Os}{\Oh^{\star}}
\newcommand{\nat}{\mathbb{N}}
\newcommand{\N}{\mathbb{N}}
\newcommand{\Z}{\mathbb{Z}}
\newcommand{\Ff}{\mathcal{F}}
\newcommand{\Dd}{\mathcal{D}}
\newcommand{\Rr}{\mathcal{R}}
\newcommand{\Ww}{\mathcal{W}}
\newcommand{\Zz}{\mathcal{Z}}
\newcommand{\poly}{\mathrm{poly}}
\newcommand{\DP}{\mathtt{DP}} 
\newcommand{\cost}{\mathtt{cost}}
\newcommand{\ans}{\mathtt{ans}}

\newcommand{\BP}{\mathtt{BP}}

\newcommand{\OPT}{\mathtt{OPT}}

\renewcommand{\leq}{\leqslant}
\renewcommand{\geq}{\geqslant}
\renewcommand{\le}{\leqslant}
\renewcommand{\ge}{\geqslant}

\renewcommand{\subset}{\subseteq}

\newcommand{\Cmax}{C_{\max}}

\DeclareMathOperator*{\argmin}{arg\,min} 

\newcommand{\defproblem}[3]{\begin{tcolorbox}[
            enhanced,
            colback=white,
            colframe=black,
            boxrule = 0.5pt,
            coltitle=black,
            title=#1,
            rounded corners,
            attach boxed title to top left={yshift=-10pt, xshift=6pt},
            bottom=-0.5mm,
            boxed title style={
                    interior style={fill=white},
                    frame hidden
                }
        ]
        \begin{tabularx}{12.5cm}{ r X p {0.5cm}}
            {}Input:     & #2        \\
            {}Task:  & #3
        \end{tabularx}
    \end{tcolorbox}
}

\newcommand{\binpacking}{{\textup{\textsc{Bin Packing}}}\xspace}
\newcommand{\threepartition}{{\textup{\textsc{$3$-way Partitioning}}}\xspace}
\newcommand{\PSwjCj}{{\textup{$P \mid \mid  \Sigma w_j C_j$}}\xspace}
\newcommand{\PmSwjCj}[1]{{\textup{$P {#1} \mid \mid  \Sigma w_j C_j$}}\xspace}
\newcommand{\PSwjUj}{{\textup{$P \mid \mid \Sigma w_j U_j$}}\xspace}
\newcommand{\PSCj}{{\textup{$P \mid \mid  \Sigma  C_j$}}\xspace}
\newcommand{\PSUj}{{\textup{$P \mid \mid \Sigma U_j$}}\xspace}
\newcommand{\PmSUj}[1]{{\textup{$P {#1} \mid \mid \Sigma U_j$}}\xspace}
\newcommand{\PSwjTj}{{\textup{$P \mid \mid  \Sigma w_j T_j$}}\xspace}
\newcommand{\PSTj}{{\textup{$P \mid \mid \Sigma T_j$}}\xspace}
\newcommand{\PSwjfj}{{\textup{$P \mid \mid \Sigma w_j f_j$}}\xspace}
\newcommand{\PCmax}{{\textup{$P \mid \mid \Cmax$}}\xspace}

\newcommand{\dc}{{\downarrow}}
\newcommand{\uc}{{\uparrow}}

\newlist{caseenum}{enumerate}{1}
\setlist[caseenum,1]{label= \textbf{\textsc{Case}} \Alph*.,
		ref= \Alph*,
		leftmargin=*, labelindent=0pt, itemsep=0.5em,topsep=0.25em}
\Crefname{caseenum}{Case}{Cases}
 
\hideLIPIcs
\author{Anubhav Dhar}{Max Planck Institute for Informatics, Saarbrücken, Germany \and Saarland University, Saarbrücken, Germany}{adhar@mpi-sws.org}{https://orcid.org/0009-0006-5922-8300}{Supported by the Deutsche Forschungsgemeinschaft (DFG, German Research Foundation) grant number 559177164.}
\author{Anita D\"urr}{ETH, Zurich, Switzerland}{anita.duerr@inf.ethz.ch}{https://orcid.org/0000-0003-0440-5008}{Part of this work was done while affiliated to Saarland University and Max Planck Institute for Informatics, Saarbrücken, Germany, where this work was part of the project TIPEA that has received funding from the European Research Council (ERC) under the European Union's Horizon 2020 research and innovation programme (grant agreement No. 850979).}
\author{Ahmed Ghazy}{CISPA Helmholtz Center for Information Security, Saarbrücken, Germany \and Saarland University, Saarbrücken, Germany}{ahmed.ghazy@cispa.de}{https://orcid.org/0009-0009-7414-5871}{}
\author{Jakob Greilhuber}{CISPA Helmholtz Center for Information Security, Saarbrücken, Germany \and Saarland University, Saarbrücken, Germany}{jakob.greilhuber@cispa.de}{https://orcid.org/0009-0001-8796-6400}{}
\author{Karol W\k{e}grzycki}{Max Planck Institute for Informatics, Saarbrücken, Germany}{kwegrzyc@mpi-inf.mpg.de}{https://orcid.org/0000-0001-9746-5733}{Supported by the Deutsche Forschungsgemeinschaft (DFG, German Research Foundation) grant number 559177164.}

\title{Faster Exponential Algorithms for Multi-Machine Scheduling Problems}

\nolinenumbers
 
\begin{document}

\authorrunning{A. Dhar, A. D\"urr, A. Ghazy, J. Greilhuber, K. W\k{e}grzycki}
\Copyright{Anubhav Dhar, Anita D\"urr, Ahmed Ghazy, Jakob Greilhuber, Karol W\k{e}grzycki}
\keywords{Scheduling, exact algorithms, exponential-time algorithms, multi-machine scheduling, combinatorial optimization}

\begin{CCSXML}
<ccs2012>
   <concept>
       <concept_id>10003752.10003809.10010052</concept_id>
       <concept_desc>Theory of computation~Parameterized complexity and exact algorithms</concept_desc>
       <concept_significance>500</concept_significance>
       </concept>
 </ccs2012>
\end{CCSXML}

\ccsdesc[500]{Theory of computation~Parameterized complexity and exact algorithms}
\maketitle

\begin{abstract}
	Minimizing the weighted completion times ($P \mid \mid  \Sigma w_j C_j$) and weighted number of tardy jobs ($P \mid \mid  \Sigma w_j U_j$) on multiple identical machines are two classical NP-hard scheduling problems.
As shown by Lent\'e et al. (2014), both problems can be solved in time $\mathcal{O}^{\star}(3^n)$.\footnote{A function $f(n)$ is in $\Os(g(n))$ if it is in $\Oh(g(n) \cdot n^c)$ for some constant $c$.}
In this paper, we improve these bounds to $\mathcal{O}(2.755^n)$ and $\mathcal{O}^{\star}(2^n)$, respectively.
Our algorithm for $P \mid \mid  \Sigma w_j C_j$ exploits the meet-in-the-middle paradigm and an efficient data structure answering linear programming queries.
Additionally, when the number of machines is at most $6$, we show that the running time for $P \mid \mid  \Sigma w_j C_j$ can further be improved.

Both scheduling problems are generalizations of the classical \textsc{Bin Packing} problem, which can be solved in $\mathcal{O}^{\star}(2^n)$ time.
Improving this running time is an important open question. We show that, when assuming the Asymptotic Rank Conjecture (ARC), \textsc{Bin Packing} can be solved in time $\mathcal{O}((2-\varepsilon)^n)$ for some $\varepsilon >0$. Our algorithm makes use of two main ingredients: the recent $\mathcal{O}((2-\varepsilon)^n)$-time algorithm of Nederlof et al.~[SICOMP'23] for \textsc{Bin Packing} when the number of bins is a fixed constant, and the $\mathcal{O}((2-\varepsilon)^n)$-time algorithm of Björklund et al.~[SODA'25] for special instances of the \textsc{$3$-way Partitioning} problem when assuming ARC. \end{abstract}

\newpage

\section{Introduction}\label{sec:intro}

A central goal in the study of NP-hard problems is to determine the smallest possible running time, specifically for problems that admit exponential-time algorithms.
For many fundamental scheduling problems a $\Os(3^n)$-time algorithm constitutes a natural baseline, where $n$ is the number of jobs~\cite{lente14}.
Contrasting this, the Exponential Time Hypothesis (ETH) excludes $2^{o(n)}$-time algorithms~\cite{DBLP:journals/siamdm/JansenLL16}.
Therefore, it is a key question to find $\Os(C^n)$-time algorithms, for a constant $C > 0$ as small as possible.
In this paper, we address this question for two fundamental scheduling problems. 

Typically, 
the input of scheduling problems consists of $n$ jobs with processing times $p_1, \dots, p_n$, weights $w_1, \dots, w_n$ and due dates $d_1, \dots, d_n$, as well as a number $m \in \N$ of machines available. 
A schedule is an assignment of each job to a machine and a starting time, such that no two jobs are processed at the same time by the same machine. The goal is to find a schedule that minimizes some {cost}, where the used cost function varies from one scheduling problem to another. In this paper, we focus on two cost functions: the total weighted completion time $\sum_{j=1}^n w_j C_j$ and the weighted number of tardy jobs $\sum_{j=1}^n w_j U_j$, where $C_j$ denotes the completion time of job $j$ and $U_j$ is the indicator variable taking value 1 if job $j$ completes after its due date $d_j$ (i.e.~$C_j > d_j$, in which case job $j$ is called \emph{tardy}) and 0 otherwise. 
In the Graham et al.~\cite{graham1979optimization} notation\footnote{This useful notation is discussed in more detail later in the introduction.}
these problems are denoted by \PSwjCj and \PSwjUj, respectively.

The problems \PSwjCj and \PSwjUj are among the most fundamental problems in theoretical computer science.
Minimizing the weighted number of tardy jobs is NP-hard even on a single machine~\cite{lenstra1977complexity,Karp:72:Reducibility-among}. 
The problem \PSwjCj is polynomial-time solvable on a single machine~\cite{smith1956various}, but NP-hard already when $m = 2$~\cite{DBLP:journals/cacm/BrunoCS74,lenstra1977complexity}.
As a result, numerous approximation algorithms~\cite{DBLP:journals/jacm/Sahni76,DBLP:journals/dam/GensL81} as well as exact algorithms with parameterized or pseudopolynomial running time for (special cases of) both \PSwjUj 
\cite{lawler1969functional,moore1968n,DBLP:books/el/93/LawlerLKS93,DBLP:journals/corr/tightSETH,DBLP:conf/soda/Klein0R23,DBLP:journals/informs/HermelinMS24,DBLP:journals/jcss/AbboudBHS22,DBLP:journals/algorithmica/BringmannFHSW22,DBLP:journals/theoretics/FischerW25,heeger2024minimizing} and \PSwjCj
\cite{DBLP:books/el/93/LawlerLKS93,lawler1969functional,rothkopf1966scheduling,DBLP:journals/scheduling/KnopK18,DBLP:journals/corr/abs-2502-13631}
are present in the literature.

Lenté et al.~\cite{lente14} were the first to provide non-trivial exact exponential-time algorithms for \PSwjCj and \PSwjUj, concretely they proposed $\Os(3^n)$-time algorithms.
Contrasting this result, the work of Jansen et al.~\cite{DBLP:journals/siamdm/JansenLL16} implies that those problems do not admit any $2^{o(n)}$-time algorithms, unless ETH fails.
In this work, we improve upon the results of Lenté et al.~\cite{lente14}.

\begin{restatable}{theorem}{SumwjCj}\label{thm:SumwjCj}
    \PSwjCj can be solved in time $\Oh(2.755^n)$.
\end{restatable}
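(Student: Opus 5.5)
By Smith's rule, once the assignment of jobs to machines is fixed, each machine processes its jobs in non-increasing order of $w_j/p_j$. So I would first sort all jobs by Smith's ratio (ties broken consistently) and view a schedule as a partition of $[n]$ into $m$ sets. The cost of a machine receiving set $S$ is then a fixed function
\[
f(S)=\sum_{j\in S} w_j \sum_{i\in S,\, i\preceq j} p_i .
\]
The baseline $\Os(3^n)$ algorithm is the subset-convolution-style DP
\[
\DP[k][X]=\min_{S\subseteq X}\bigl(\DP[k-1][X\setminus S]+f(S)\bigr).
\]
To beat it, I would combine two ingredients: a cheap DP for the case where every machine receives few jobs, and a meet-in-the-middle split for machines that receive many jobs.

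\textbf{Step 1 (structure and balancing).} Machines with more than $\alpha n$ jobs are few, at most $1/\alpha$ of them. For the remaining small machines, the DP only ranges over sets $S$ with $|S|\le \alpha n$. This costs $\sum_{|X|}\binom{n}{|X|}\binom{|X|}{\le \alpha n}$, which is much smaller than $3^n$ for small $\alpha$. I would use the DP, restricted to small sets, to tabulate the best cost of covering an arbitrary set $X$ by small machines.

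\textbf{Step 2 (meet in the middle for a big machine).} In the Smith order, split the jobs into a prefix $L$ (the first $n/2$ jobs) and a suffix $R$. For a machine set $S=S_L\cup S_R$ we have
\[
f(S)=f(S_L)+f(S_R)+p(S_L)\,w(S_R),
\]
which is bilinear in the two halves. Enumerating all pairs of halves therefore amounts to minimizing, for each $S_R$, the expression $g(S_L)+p(S_L)\,w(S_R)$ over the $S_L$ compatible with the rest of the solution. That is a lower-envelope (linear programming) query in the single parameter $w(S_R)$, and it is exactly the kind of data structure the abstract mentions. The same decomposition extends to several machines by taking the sum over machines: the cross term becomes $\sum_i p(S_L^i)\,w(S_R^i)$, an inner product in dimension $m$. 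For constant dimension this is answered by a data structure for LP or dot-product queries with polylogarithmic query time after near-linear preprocessing. This naturally gives a separate improvement for $m\le 6$.

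\textbf{Step 3 (combining and optimizing constants).} The general algorithm would guess how many jobs on each big machine lie in $L$ versus $R$. It would enumerate, for each half, the assignment of that half's jobs to the big machines (or "small") together with the best small-machine completion of the half. It would then join the two halves through the dot-product structure. The problem is that the small machines straddle both halves, so their interaction also has to be handled through the precomputed DP, at cost $\Os(3^{n/2})$-type per half times a join cost. Balancing $\alpha$ against the exponents $\binom{n}{\le\alpha n}$ and the half-enumeration then yields a numeric optimum, which should be $2.755^n$.

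The main obstacle is Step 3. The query dimension grows with the number of big machines, and the cross terms of the small machines cannot be made low-dimensional. I would first try to use the fact that an optimal schedule has balanced loads, so that each machine's set within $R$ is determined up to few parameters. If that fails, I would bound the number of big machines by a constant $1/\alpha$, restrict meet-in-the-middle to those machines only, handle all small machines with the restricted DP, and accept a larger constant from the final numeric optimization.
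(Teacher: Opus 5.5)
Your proposal has a genuine gap. Step~3, the step that would have to produce the bound, is never carried out. You note yourself that the small machines straddle the $L/R$ split and that their cross terms cannot be made low-dimensional. The constant $2.755$ is asserted, not derived.

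The fallback does not close the gap either. With an absolute threshold $\alpha n$, your table of small-machine covers of arbitrary sets $X$ enumerates all pairs $S'\subseteq X$ with $|S'|\le\alpha n$. There are about $\binom{n}{\alpha n}2^{(1-\alpha)n}$ such pairs. At $\alpha=1/4$ this is already roughly $2.95^n$, and it drops below $2.755^n$ only for $\alpha$ around $0.15$, where up to six machines can be big. Joining six big machines by meet-in-the-middle needs inner-product (LP) queries in dimension well above $3$. Your claim in Step~2 of polylogarithmic query time after near-linear preprocessing in any constant dimension is not available. That is exactly what \cref{lem:lp_query} gives for $d\in\{2,3\}$, and the known higher-dimensional structures have substantially worse preprocessing/query trade-offs.

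The missing idea is to make the threshold relative to the current subproblem and to use meet-in-the-middle only as a base case. Among $i$ machines processing a set $S$, the one with the fewest jobs has at most $|S|/i$ of them. So, by \cref{lem:dp-correct-recursion} with $j=1$,
\[\DP[i,S]=\min_{S'\in\binom{S}{\le |S|/i}}\{\DP[1,S']+\DP[i-1,S\setminus S']\}.\]
For $i\ge 4$ one entry costs $\binom{|S|}{\le |S|/4}=\Oh(1.7548^{|S|})$. Summing over all $S$ gives $\sum_r\binom{n}{r}1.7548^r=2.7548^n$, up to a polynomial factor for the range of $i$.

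The recursion bottoms out at $i\le 3$. For \emph{every} subset $S$, compute $\DP[2,S]$ and $\DP[3,S]$ by splitting $S$ itself (in Smith order) into two halves. Then apply precisely your bilinear identity $f(S_L\cup S_R)=f(S_L)+f(S_R)+p(S_L)\,w(S_R)$ with a $2$- or $3$-dimensional LP query structure. This costs $\Os(3^{|S|/2})$ per subset, i.e.\ $\Os((1+\sqrt3)^n)=\Oh(2.733^n)$ over all $S$.

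Here the meet-in-the-middle acts on a subproblem consisting of $S$ and at most three machines. Nothing straddles the split, and the query dimension never exceeds $3$. Your fixed $\alpha n$ threshold and global $L/R$ split are what create both obstacles.
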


\begin{restatable}{theorem}{SumwjUj}\label{thm:SumwjUj}
        \PSwjUj can be solved in time $\Os(2^n)$.
    \end{restatable}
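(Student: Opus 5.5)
The statement immediately above is \Cref{thm:SumwjUj}, so the plan targets the $\Os(2^n)$ bound for \PSwjUj. It rests on two classical facts. First, an optimal schedule may be assumed to put the tardy jobs at the end of an arbitrary machine, where they cost their full weight and do not affect the other jobs. Second, on a single machine a set $S$ of jobs can all be finished on time if and only if the earliest-due-date (EDD) order of $S$ meets every deadline. The problem therefore becomes: pick at most $m$ pairwise disjoint sets $S_1,\dots,S_m$, each EDD-feasible, maximising the total weight $w(S_1\cup\dots\cup S_m)$ of on-time jobs. The $3^n$ bound of Lent\'e et al.\ comes from naive subset convolution over machines. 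To reach $2^n$ I would exploit that feasibility is downward closed and process jobs one at a time in EDD order, in the spirit of the $\Os(2^n)$ dynamic program for \binpacking.

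Sort the jobs so that $d_1\le\dots\le d_n$. For every subset $X\subseteq[n]$ define $\DP[X]$ as the lexicographically best pair (fewest machines used, then smallest load on the current ``open'' machine) over all ways of scheduling exactly $X$ on time. Machines are filled one after another, and inside each machine jobs appear in index (EDD) order. The difficulty is that the open machine receives jobs in increasing index order, so its last job matters. I would therefore not take the largest index in $X$ but allow any job to start a new machine. This gives the recurrence: from state $X$ with open load $L$, add a job $j\notin X$ with $j$ greater than every job on the open machine and $L+p_j\le d_j$, or close the machine and start a new one with $j$.

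Tracking ``greater than every job on the open machine'' needs extra state. My fix is to define the state differently: instead of all of $X$, record the set $Y$ of jobs among $1,\dots,k$ that have been decided, where job $k$ is being processed now. Each job $k$ is either tardy (discarded), or appended to some machine. Appending to machine $i$ is feasible iff that machine's load plus $p_k$ is at most $d_k$. The loads can be canonicalised in a Bin-Packing-like way: for the subset $X\subseteq[k]$ of on-time jobs, store the minimum number of machines together with the minimum load of the open machine. I would argue an exchange lemma: there is an optimal assignment in which machines are filled sequentially, each being a maximal EDD-feasible chain. Under that lemma the classic $\Os(2^n)$ BP-style DP over subsets $X$ applies, with $\DP[X]=$ (machines, last load) and transitions adding $j=\max$-order-compatible elements. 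Finally the answer is $\sum_j w_j - \max\{w(X):\DP[X]\text{ uses}\le m\text{ machines}\}$.

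The main obstacle is justifying the sequential-filling canonical form. For \binpacking the insertion order is free, but here each machine must receive jobs in EDD order. If that exchange argument fails, my fallback is a different route. I would count, via inclusion--exclusion and fast zeta transform over the downward-closed family $\Ff$ of EDD-feasible sets, the number of $m$-tuples of feasible sets covering $X$; the cover polynomial $\sum_{Y\subseteq X}(-1)^{|X\setminus Y|}f(Y)^m$, where $f(Y)=|\{S\in\Ff:S\subseteq Y\}|$, is computed in $\Os(2^n)$. Disjointness follows from downward closure, and weights are handled by maximising $w(X)$ over coverable $X$. Computing $f$ only needs the indicator of $\Ff$, which is obtainable for all subsets in $\Os(2^n)$ via EDD checks. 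This fallback seems cleanest, and I would likely present it as the main proof.
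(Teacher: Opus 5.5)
Your final argument, the inclusion--exclusion route you say you would present as the main proof, is correct, but it differs from the paper's route.

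\textbf{Shared starting point.} Both proofs begin with the same reduction, which the paper states as a claim inside its proof. The optimum equals $w([n])$ minus the largest $w(X)$ over sets $X$ that split into $m$ sets, each schedulable on time on a single machine.

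\textbf{The paper's route.} It fills a table $\DP[i,S]$ with the minimum \emph{unweighted} number of tardy jobs of $S$ on $i$ machines, and obtains $\DP[1,\cdot]$ from Moore's algorithm. It then computes $\DP[i,\cdot]$ from $\DP[1,\cdot]$ and $\DP[i-1,\cdot]$ by $m-1$ min-plus fast subset convolutions. Each costs $\Os(2^n)$ because all values lie in $\{0,\dots,n\}$, and disjointness of the parts is enforced by the convolution itself.

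\textbf{Your route.} You only need the $0/1$ indicator of the family of on-time sets, so the EDD test replaces Moore's algorithm. You replace the $m-1$ convolutions by one zeta transform, a pointwise $m$-th power, and one M\"obius transform. Downward closure of the family, together with $\emptyset$ belonging to it, turns $m$-covers into partitions into at most $m$ parts. This is the Bj\"orklund--Husfeldt--Koivisto counting technique behind the classical $\Os(2^n)$ \binpacking algorithm.

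\textbf{Trade-offs.} Your route is shorter, and the number of transforms does not depend on $m$. The paper's route stays within the generic recursion of \cref{lem:dp-correct-recursion}, shared with the $\sum w_jC_j$ part. It also yields the optimal tardy count for every subset and every number of machines, not just a feasibility bit.

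\textbf{A detail you must add.} $f(Y)^m$ has up to $nm$ bits, and $m$ is an input number that may be huge. So first replace $m$ by $\min(m,n)$: more than $n$ machines never help, and the paper treats $m\ge n$ separately for the same reason. After this, all numbers have $O(n^2)$ bits and the transforms really take $\Os(2^n)$ time.

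\textbf{Your first sketch.} The Bin-Packing-style sketch does not work as written. A state consisting only of $X$ and the open load does not reveal the largest index on the open machine, so the EDD-order transition cannot be checked. The variant that processes jobs in global index order would need the loads of all machines, not just one. Indexing states by $(X,j)$, with $j$ the last job on the open machine, and keeping the lexicographically smallest pair (machines opened, open load) would repair this within $\Os(2^n)$ states. Since you present the inclusion--exclusion argument instead, this is moot.
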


Both \cref{thm:SumwjCj,thm:SumwjUj} are proven using standard dynamic programming. 
Interestingly, for \PSwjCj{} the base cases (computing the dynamic programming tables for a constant number of machines) can be handled efficiently by exploiting the meet-in-the-middle paradigm combined with an efficient data structure that supports linear programming queries in $m$ dimensions.
For \PSwjUj we observe that Fast Subset Convolution~\cite{bjorklund2007fourier} can be used to solve the problem in the stated time, our concrete approach uses an algorithm of Moore~\cite{moore1968n} to handle the base case.

Additionally, we consider \PmSwjCj{} for a fixed number of machines $m$, and show that even faster algorithms are possible when $m \leq 6$.
For each integer $m \geq 1$, we denote the problem \PSwjCj where the number of machines is the fixed constant $m$ as \PmSwjCj{m}.

\begin{restatable}{theorem}{PmSumwjCj}\label{thm:PmSumwjCj}
    \PmSwjCj{m} can be solved in time:
    \begin{itemize}
        \item $\Os(m^{n/2})$ if $m \le 3$,
        \item $\Oh(2.389^n)$ if $m = 4$,
        \item $\Oh(2.726^n)$ if $m = 5$,
        \item $\Oh(2.733^n)$ if $m = 6$.
    \end{itemize}
\end{restatable}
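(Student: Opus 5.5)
The plan rests on the structure of optimal schedules for \PSwjCj. By Smith's rule, on each machine the jobs are processed in order of non-increasing ratio $w_j/p_j$. We fix this global order once (ties broken consistently). Then a schedule is fully described by the assignment $\sigma\colon [n]\to[m]$ of jobs to machines. Its cost is
\[
  \sum_j w_j \sum_{i \preceq j,\ \sigma(i)=\sigma(j)} p_i ,
\]
which is a sum of pairwise terms $w_j p_i$ over pairs of jobs placed on the same machine.

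For $m\le 3$ we use a meet-in-the-middle approach. Split the ordered job list into a first half $A$ (the jobs of highest ratio) and a second half $B$, each of size $n/2$. Enumerate all $m^{n/2}$ assignments of $A$, and likewise of $B$. For an assignment $\sigma_A$, let $L_k$ be the load it places on machine $k$. Every job of $B$ on machine $k$ starts after all the $A$-jobs on $k$, so the total cost splits as
\[
  \cost(\sigma_A) + \cost(\sigma_B) + \sum_k L_k W_k ,
\]
where $W_k$ is the total weight of $B$ on machine $k$. Thus for each $\sigma_B$ we need
\[
  \min_{\sigma_A}\Bigl(\cost(\sigma_A) + \langle L(\sigma_A), W(\sigma_B)\rangle\Bigr).
\]
This is a minimum of linear functions in the $m$-dimensional query vector $W(\sigma_B)$, i.e., a lower-envelope (linear programming) query. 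Since $\sum_k L_k$ is fixed, one dimension can be eliminated, leaving a dimension of $m-1\le 2$. For dimension at most $2$, lower envelopes of $N$ planes admit preprocessing in $\Os(N)$ time and $\mathrm{polylog}$ query time, via planar point location on the dual convex hull. This gives $\Os(m^{n/2})$ overall.

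For $m=4,5,6$ this envelope approach no longer yields polylog queries with near-linear preprocessing. We will combine two ingredients. The first is a dynamic program over subsets, $T_k[S]=\min$ cost of scheduling $S$ on $k$ machines, computed as
\[
  T_{k}[S]=\min_{S'\subseteq S}\bigl(T_{k_1}[S']+T_{k_2}[S\setminus S']\bigr).
\]
Splitting at the subset level costs $3^n$, but we avoid the full convolution by restricting the subsets involved to small or large sizes. The second ingredient is the meet-in-the-middle base case above, applied to $2$ or $3$ machines on a subset $S$ of size $s$, costing $\Os(3^{s/2})$ or $2^{s/2}$ per subset.

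Concretely, for $m=4$ we guess the partition into two halves of two machines each. Each half is solved by $\Os(2^{|S|/2})$ meet-in-the-middle. Combining requires $\sum_S 2^{|S|/2}\cdot(\ldots)$ over the subsets. We instead apply the split directly: enumerate the set $S$ of jobs on machines $1$–$2$ with $|S|$ within a balanced threshold and compute $T_2[S]$ via table lookup. We then optimize the threshold parameter, obtaining bounds of the form $\max_\alpha \binom{n}{\alpha n}\cdot c^{\alpha n}$ that numerically give $2.389^n$. The cases $m=5,6$ split as $2+3$ and $3+3$ (or $2+2+2$), with analogous binomial-entropy optimizations yielding $2.726^n$ and $2.733^n$.

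The main obstacle is the middle regime. For a balanced split, the total $\sum_S \Os(3^{|S|/2})$ behaves like $(1+\sqrt3)^n\approx 2.73^n$, and this already dominates the $m=5,6$ constants. Getting exactly the stated constants requires carefully choosing, for each subset size, whether to use table lookup, meet-in-the-middle, or the envelope data structure with its query cost in higher dimension. I would first set up the cost as an entropy maximization $\max_\alpha h(\alpha)+\alpha\log c_1+(1-\alpha)\log c_2$ and solve it numerically for each $m$.
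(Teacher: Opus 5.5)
Your $m\le 3$ argument is correct and is essentially the paper's. You store the $A$-assignments and query with the $B$-weights, whereas the paper does the reverse; eliminating a coordinate via $\sum_k L_k=p(A)$ is the same as the paper's query with $x_0=1$. Your $3+3$ split for $m=6$ gives $(1+\sqrt3)^n\le 2.733^n$, exactly as in the paper. For $m=5$ your $2+3$ split also matches the paper, but the fact that makes it work should be stated. Take the three machines with the fewest jobs; then $T_3$ is needed only on sets with $|S|\le 3n/5$. This gives $\sum_{r\le 0.6n}\binom{n}{r}3^{r/2}=\Oh(2.726^n)$, because the unrestricted sum peaks at $r\approx 0.634n$. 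Without this cut you only get $(1+\sqrt3)^n\approx 2.732^n$.

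The genuine gap is $m=4$: the subset-DP route you sketch cannot reach $2.389^n$.
\begin{itemize}
\item Computing $\min_S T_2[S]+T_2[[n]\setminus S]$ from a table of $T_2$ values built by two-machine meet-in-the-middle costs $\sum_S 2^{|S|/2}=(1+\sqrt2)^n\approx 2.414^n$.
\item Restricting $|S|$ to a window does not help. An optimal schedule with job counts $(0.4,0.2,0.2,0.2)n$ admits only $2+2$ splits of type $0.6n+0.4n$. So you must tabulate $T_2$ on all $0.6n$-sets, which costs $\binom{n}{0.6n}2^{0.3n}\approx 2.413^n$.
\item The $1+3$ splits are worse still: they need $T_3$ on all $0.6n$- or $0.8n$-sets (about $2.73^n$ and $2.56^n$).
\end{itemize}
Your claim that a threshold optimization ``numerically gives $2.389^n$'' is therefore unfounded; what your route actually yields is $\Os((1+\sqrt2)^n)$.

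The missing idea is a direct four-machine meet-in-the-middle whose queries are reduced from dimension $4$ to dimension $3$.
\begin{itemize}
\item Split unevenly, $|A|=\lfloor 0.628n\rfloor$, and assume machine $1$ receives the fewest jobs of $A$. Then $|X_1|\le |A|/4$, so $t=p(X_1)$ takes at most $\binom{|A|}{\le|A|/4}\le 1.7548^{|A|}$ values.
\item For each such $t$, merge $z_0+z_1t$ into a single coordinate. Build a separate $3$-dimensional LP-query structure over the $4^{|B|}$ partitions of $B$.
\item Each of the $4^{|A|}$ partitions of $A$ then queries the structure for its own $t$ with the point $(1,p(X_2),p(X_3))$.
\end{itemize}
The total time is $\Os(1.7548^{|A|}4^{|B|}+4^{|A|})$. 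The uneven split balances the two terms and gives $\Oh(2.389^n)$.
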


\paragraph*{Bin Packing}
The scheduling problems we considered above are a generalization
of the scheduling problem where the cost function is given by the maximum completion time $\Cmax \coloneq \max_j C_j$, called the \emph{makespan} (see \cite{DBLP:journals/corr/tightSETH}). This scheduling problem is denoted as \PCmax in the Graham notation.
Its decision version coincides with the classical NP-hard \binpacking problem.

\binpacking can be solved in time $\Os(2^n)$~\cite{DBLP:journals/siamcomp/BjorklundHK09}, and faster algorithms are known when $m = \Oh(1)$ or $m = \Theta(n)$~\cite{DBLP:conf/esa/Nederlof16,nederlof2023faster}. However, in the general case, it remains a major open problem to improve the running time of $\Os(2^n)$~\cite{DBLP:journals/csr/Nederlof26}.
As a step towards answering that question, we can ask whether a faster algorithm is possible when assuming Strassen's \emph{Asymptotic Rank Conjecture} (ARC), a longstanding conjecture that provides significant algorithmic power.
Indeed, this has been a successful approach in the past. Faster algorithms for several problems were designed under the ARC, e.g. for $k$-\textsc{Set Cover}~\cite{DBLP:conf/stoc/BjorklundK24},~\textsc{Chromatic Number}~\cite{radu25}, \textsc{Hamiltonian Cycle} on bounded treewidth graphs~\cite{bjorklund_et_al:LIPIcs.ICALP.2026.36}, \textsc{Generalized Convolution}~\cite{DBLP:conf/sosa/BrandCLP26}, and $k$-\textsc{Orthogonal Vectors}~\cite{DBLP:conf/icalp/DurrKLW26}.
In this work, we follow the same strategy and show the existence of a faster than $2^n$-time algorithm for \binpacking when assuming ARC. 

\begin{restatable}{theorem}{BPARC}
\label{thm:bin-packing-arc}
    Assuming the Asymptotic Rank Conjecture, there exists a constant $\varepsilon>0$ such that the \binpacking problem on $n$ items can be solved with high probability in time $\Oh((2-\varepsilon)^n)$.
\end{restatable}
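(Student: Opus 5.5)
We will prove \cref{thm:bin-packing-arc} by splitting on the number $m$ of bins relative to $n$. Both ingredients named in the abstract are used as black boxes: the algorithm of Nederlof et al.\ for \binpacking with a constant number of bins, and the ARC-based $\Oh((2-\varepsilon)^n)$ algorithm of Björklund et al.\ for \threepartition on special instances. The plan has three regimes.

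\textbf{Few bins.} Fix a large constant $K$. If $m \le K$, we run the algorithm of Nederlof et al.\ directly. It runs in $\Oh((2-\varepsilon_m)^n)$ time for every fixed $m$, so taking $\varepsilon_{\le K} = \min_{m\le K}\varepsilon_m > 0$ settles this regime.

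\textbf{Many bins.} If $m \ge \delta n$ for a suitably small constant $\delta$ (relative to the other regime), we use the known faster algorithms for $m=\Theta(n)$. Alternatively, we exploit that many bins contain at most $n/m$ items and enumerate small bins cheaply. We try to make this regime fit into the general splitting argument below.

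\textbf{Intermediate $m$ (main step).} The key structural observation is a balanced split. In any feasible packing into $m$ bins, order the bins arbitrarily and cut the sequence into three consecutive groups of about $m/3$ bins each. Shift the cut points by single bins so that each group holds roughly $n/3$ items, up to an additive $\max$-bin-size error. Bins with more than $n/K$ items are few (at most $K$), so they can be guessed and handled separately, or placed in one group. This shows that the instance is a YES-instance iff the item set can be partitioned into three parts $A_1,A_2,A_3$, each of size about $n/3$, such that each $A_i$ is packable into $m_i$ bins with $m_1+m_2+m_3=m$.

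We then set up the \threepartition instance. For every subset $S$ of size about $n/3$ (and every guess of $m_i$, which is only polynomially many choices), we decide whether $S$ fits into $m_i$ bins. All these answers are obtained at once by the standard $\Os(2^n)$-style dynamic programming restricted to small sets, in time $\Os\binom{n}{\le n/3} = \Oh(1.89^n)$. We then ask whether some triple of compatible feasible sets partitions $[n]$. This is exactly the special \threepartition form (three families of $n/3$-subsets; decide if a partition exists) that Björklund et al.\ solve in $\Oh((2-\varepsilon)^n)$ under ARC, with high probability.

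\textbf{Expected obstacle.} The main difficulty is making the balanced split exact: the sizes must be exactly $n/3$ (or within the slack the \threepartition algorithm tolerates), and large bins must be handled. I would first try guessing the exact part sizes $(n_1,n_2,n_3)$ with $|n_i - n/3| \le n/K$, paying polynomially many guesses. The key check is that the Björklund et al.\ bound still holds for mildly unbalanced part sizes, which I expect holds by continuity of their exponent. The final $\varepsilon$ is the minimum of the regime-wise gains.
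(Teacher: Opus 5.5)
Your proposal has a genuine gap. It sits exactly where you say that large bins ``can be guessed and handled separately, or placed in one group''.

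First, \cref{thm:arc_3partition} is not restricted to families of $n/3$-sets. It accepts any $(1-\eps)/2$-bounded families, so your worry about mildly unbalanced part sizes is moot. The real constraint is that each of the three parts contains at most $(1-\eps)n/2$ items. Hence the reduction to \threepartition{} only captures solutions whose bins can be grouped into three groups of at most $(1-\eps)n/2$ items each. The paper guarantees such a grouping under $b_1\le(1-\eps)n/2$ and $\sum_{i\le 6}b_i<(1-\eps)n$ (\cref{lem:case_ARC}).

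Neither of your remedies handles the remaining solutions. Putting the large bins into one group violates the $(1-\eps)/2$ bound. Guessing the contents of a bin with about $n/2$ items costs $\binom{n}{n/2}$, which is $2^n$ up to polynomial factors. Your ``few bins'' regime only helps when the total number $m$ of bins is constant, so it misses two kinds of solutions:
\begin{itemize}
\item A constant number of bins hold all but a small fraction of the items while $m$ grows with $n$. For example, take four bins with $(1-\gamma)n/4$ items each for some $\gamma<\eps$, and let the other $\gamma n$ items have size $t$, so they occupy singleton bins. Every grouping into three parts puts two big bins together, giving $(1-\gamma)n/2>(1-\eps)n/2$ items in one part.
\item Above all, one bin holds roughly $n/2$ items and the rest are spread over many bins.
\end{itemize}

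The paper closes this gap with three further cases.
\begin{itemize}
\item If $b_1\ge(1+\eps)n/2$, it guesses the complement of the largest bin, which has at most $(1-\eps)n/2$ items. It then checks with the DP of \cref{lem:binpacking_DP_computation} that this set fits into $m-1$ bins (\cref{lem:case_DP}).
\item If $\sum_{i\le 6}b_i\ge(1-\eps)n$, it guesses the at most $\eps n$ items outside the six largest bins and packs them into $m-6$ bins by DP. It then runs the six-bin algorithm of \cref{thm:nederlof_binpacking} on the rest; this is why $\eps$ is chosen with $H(\eps)<\eps_6$ (\cref{lem:case_cstBP}).
\item The genuinely new ingredient is the near-half case $b_1\in[\frac n2\pm\eps\frac n2]$. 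Here guessing, the constant-bin algorithm and the ARC algorithm all fail. The paper samples $2^{(1-\eps)n}n^2$ random $n/2$-subsets $X$ and accepts if $\BP[\ell,X]+\BP[m-\ell-1,I\setminus X]\le t$ for some $\ell$. It shows that at least $2^{\eps n}/(\eps n+1)$ half-sets of the form $L\cup D$ with $D\subseteq X_1$ are witnesses. In such a witness the largest bin is split between the two leftover parts $D$ and $X_1\setminus D$. The DP cost is bounded by $|\dc\Ww|+|\uc\Ww|\le \Oh(2^{(1-\rho)n})$ for a constant $\rho>0$ via \cref{lem:nederlof2023_bound} (\cref{lem:case_balanced}).
\end{itemize}

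Without an argument of this kind for the near-half case, your outline does not yield an $\Oh((2-\varepsilon)^n)$ algorithm. Your separate ``many bins'' regime, on the other hand, is unnecessary: the paper's case analysis treats all $m\ge 7$ uniformly.
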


As it seems hard to obtain such a fast algorithm unconditionally, one might ask whether there exists a lower bound based on the \emph{Strong Exponential Time Hypothesis} (SETH) that rules out algorithms running in time $\Os((2-\varepsilon)^n)$ for any $\varepsilon > 0$.
However, such a lower bound would show that at most one of the SETH or the ARC is true, which would represent a major breakthrough in the understanding of these hypotheses.
Therefore, we view our result as a barrier to obtaining a tight SETH lower bound.
Contrasting this, since it is already known that at most one of the ARC and the \emph{Set Cover Conjecture (SCC)} are true~\cite{DBLP:conf/stoc/BjorklundK24}, it might be feasible to obtain a tight lower bound under SCC.
We leave a lower bound under SCC as an interesting direction for future work.

\paragraph*{Further scheduling problems}
In this paper we mainly focus on scheduling problems where the cost function is given by $\sum_{j=1}^n w_j C_j$ and $\sum_{j=1}^n w_j U_j$.
Since there are many other possibilities, one might wonder about the state-of-the-art for different scheduling problems, for example for unweighted variants or entirely different cost functions.
This leads to a systematic study of scheduling problems.

In order to discuss this further, we first provide additional background on problems on parallel machines.
The three-field notation $\alpha \mid \beta \mid \gamma$ introduced by Graham et al.~\cite{graham1979optimization} is a shorthand notation for scheduling problems. 
The first field $\alpha$ describes the machine environment, which in this paper will always be ``$P$'' or ``$Pm$'', denoting identical parallel machines where the number of machines is respectively part of the input or the fixed constant $m$.
The second field $\beta$ specifies additional constraints, and will always be empty in this paper. The third field $\gamma$ denotes the objective function that should be minimized. 
Natural objective functions are $\sum_{j=1}^n f_j$,
where $f_j$ is a cost function associated to the job $j$ which may take either of the following values:
\begin{itemize}
    \item the completion time $C_j$ of job $j$;
    \item the indicator variable $U_j$ which takes value 1 if job $j$ completes after its due date (i.e.~$C_j >d_j$) and 0 otherwise;
    \item the lateness $L_j \coloneq C_j - d_j$ of job $j$; or
    \item the tardiness $T_j \coloneq \max\{C_j - d_j, 0\}$ of job $j$;
\end{itemize}
as well as their weighted variants, i.e.~where the above measures are multiplied by $w_j$. 
By restricting the study to identical parallel machines without additional constraints (i.e.~the first field $\alpha$ is always ``$P$'', and the second field $\beta$ is always empty), this defines 8 distinct scheduling problems. 
Note, however, that the objectives $\sum_{j=1}^n w_j L_j$ and $\sum_{j=1}^n w_j C_j$ (as well as their unweighted variants) simply differ by the quantity $\sum_{j} d_j w_j$, which is independent of any underlying schedule.
Additionally, the problem of minimizing the total completion time \PSCj can be solved in polynomial time~\cite{DBLP:journals/cacm/BrunoCS74}.
This leaves 5 scheduling problems worth studying in the context of exponential-time algorithms. All of these remaining 5 problems were shown to be NP-hard early on: \PSwjCj and \PSUj are NP-hard already for $m=2$, and \PSwjUj is NP-hard already for $m=1$ \cite{lenstra1977complexity,DBLP:journals/cacm/BrunoCS74,Karp:72:Reducibility-among}.\footnote{For $m=1$, the problems \PSwjCj and \PSUj become polynomial-time solvable~\cite{moore1968n,maxwell1970sequencing,smith1956various}.}
Meanwhile, Du and Leung~\cite{DBLP:journals/mor/DuL90} showed that \PSTj (and thus \PSwjTj) is NP-hard when $m=1$.
On the other hand, Lenté et al.~\cite{lente14} proposed an algorithm solving any of the above scheduling problems in time $\Os(3^n)$.\footnote{In fact, the work of Lenté et al.~\cite{lente14} is applicable for an even wider range of cost functions $f_j$.} 

\cref{thm:SumwjUj,thm:SumwjCj} show that this upper bound can be improved for \PSwjCj and \PSwjUj.
We leave whether the upper bound can also be improved for \PSwjTj and \PSTj as an interesting question for future work.

\paragraph*{Organization of the paper}
We give a technical overview of our results in \cref{sec:tech_overview}.
After defining the used notation and stating important technical tools in \cref{sec:preliminaries}, we prove \cref{thm:SumwjCj,thm:PmSumwjCj}, our results about the weighted sum of completion times in \cref{sec:wjCj}.
\cref{thm:SumwjUj}, the result about the weighted number of tardy jobs, is proven in \cref{sec:wjUj}.
We conclude the paper with the proof of \cref{thm:arc_3partition} in \cref{sec:binpacking}. \section{Technical Overview}
\label{sec:tech_overview}
We now define the studied problems and give a high-level overview of \cref{thm:SumwjCj,thm:SumwjUj,thm:PmSumwjCj,thm:bin-packing-arc}.

\subsection{Weighted completion time}

\PSwjCj denotes the scheduling problem where the objective is to minimize the total weighted completion time.

\defproblem{\PSwjCj}{$n$ jobs, where the $j$-th job consists of processing time $p_j \in \mathbb{N}$ and weight $w_j \in \mathbb{N}$;
    number of machines $m \in \N$ with $m \geq 1$.}{Compute the minimum total weighted completion time $\sum_{j=1}^n w_j C_j$ of scheduling the given $n$ jobs on $m$ identical parallel machines.}

Whenever the number of machines $m$ is a fixed constant, we denote the problem as \PmSwjCj{m}, e.g. \PmSwjCj{3} denotes the special case of \PSwjCj with $m = 3$.

We solve \PSwjCj using a standard dynamic programming approach as follows.
Each dynamic programming entry tracks the minimum total weighted completion time for a given subset of jobs and a given number of machines.
Instead of considering the naive transition which iterates through all possible subsets of jobs over a single machine, we restrict our focus to the set of all possible jobs on the machine with the least number of jobs.
This ensures that the transitions are computable in time faster than $\Os(3^n)$.
Another crucial observation is that the base cases corresponding to smaller instances with $m \leq 3$ machines can be computed in time $\Os(m^{n/2})$ (as stated in \cref{thm:PmSumwjCj}).
In the end, we show that the worst-case running time is $\Oh(2.755^n)$.
We refer to \cref{sec:wjCj} for more details of the proof of \cref{thm:SumwjCj}.
For now, we discuss the algorithm for $\PmSwjCj{m}$ for $m \leq 3$.
Similar ideas can be used to improve the running time for $m \in \{4, 5, 6\}$ machines as well.

\subparagraph{Number of machines $m \in \{2, 3\}$.}

To solve \PmSwjCj{m} in time $\Os(m^{n/2})$ for $m \in \{2, 3\}$, we use a meet-in-the-middle approach, along with the following data structure due to Guibas et al.~\cite{DBLP:journals/tcs/GuibasSC87}, which is capable of supporting efficient linear programming queries.\footnote{The data structure of Guibas et al.~\cite{DBLP:journals/tcs/GuibasSC87} preprocesses a convex polyhedron (the intersection of $n$ given half-spaces). We obtain the required polyhedron by computing the convex hull of the $n$ input points in time $O(n \log n)$~\cite{10.1145/359423.359430,DBLP:journals/dcg/Chan96}.}

\begin{restatable}[Linear Programming Queries~{\cite{DBLP:journals/tcs/GuibasSC87}}]{lemma}{lpQueries}\label{lem:lp_query}
    Let $d \in \{2, 3\}$ and $\mathcal{P} \subseteq \Z^d$ be a set of size $n$.
    A data structure can be constructed in time $\Oh(n \log{n})$ that answers
    each following query in time $\Oh(\log{n})$:
    Given $(q_1, \dots, q_{d}) \in \Z^{d}$,
    compute $\min_{(r_1, \dots, r_d) \in \mathcal{P}} \sum_{i \in [d]} q_i \cdot r_i$.
\end{restatable}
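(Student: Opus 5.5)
The plan is to reduce the query to extreme-point search on the convex hull $\mathrm{conv}(\mathcal{P})$. A linear function attains its minimum over a finite point set at a vertex of the convex hull, and the minimum over $\mathcal{P}$ equals the minimum over $\mathrm{conv}(\mathcal{P})$. So it suffices to preprocess the hull so that, for a query direction $q$, we can find a vertex minimizing $\langle q, r\rangle$ in $\Oh(\log n)$ time. The query then returns $\langle q, r\rangle$ for that vertex $r \in \mathcal{P}$.

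\textbf{Case $d=2$.} Compute the convex hull in $\Oh(n \log n)$ time as a cyclic list of vertices, and store the outer normals of its edges in angular order. A vertex minimizes $\langle q,\cdot\rangle$ exactly when $-q$ lies in the normal cone of that vertex. The normal cones partition the circle of directions into arcs delimited by consecutive edge normals. So each query is a binary search for the angle of $-q$ among the sorted edge-normal angles, which takes $\Oh(\log n)$ time. Since the coordinates are integers, the angle comparisons can be done exactly with cross products, and ties are harmless because any minimizer gives the same value. Degenerate hulls, such as a segment or a single point, are handled directly.

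\textbf{Case $d=3$.} Compute the 3D convex hull in $\Oh(n\log n)$ time (Preparata--Hong or Chan). The same normal-cone picture now gives a subdivision of the unit sphere, the Gauss map: each hull vertex corresponds to a spherical polygon, and the total complexity is $\Oh(n)$. Answering a query is point location of $-q$ in this spherical subdivision. I would project the sphere centrally onto the faces of a cube, or handle the two hemispheres separately via gnomonic projection. Each piece of the subdivision then becomes a planar straight-line subdivision, because great-circle arcs map to straight segments under central projection. Over these planar subdivisions I build a standard point-location structure, such as Kirkpatrick's hierarchy or Sarnak--Tarjan persistent search trees. These take $\Oh(n\log n)$ or $\Oh(n)$ preprocessing and answer queries in $\Oh(\log n)$ time. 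Alternatively, one can directly invoke the Dobkin--Kirkpatrick hierarchical representation of the polytope, which supports extreme-point queries in $\Oh(\log n)$ time, as in Guibas et al.

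\textbf{Main obstacle.} The main difficulty is the 3D case: correctly building the Gauss-map subdivision (or hierarchy) with $\Oh(n\log n)$ total preprocessing, and handling degeneracies exactly. The degenerate cases are coplanar or collinear point sets, where the hull is lower-dimensional and the normal cones are not full-dimensional; there I would fall back to the 2D structure or the 1D structure inside the affine hull. Exactness is ensured because all predicates are signs of small-degree integer polynomials.
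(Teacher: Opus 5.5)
Your proposal is correct and matches the paper's argument. Both compute $\mathrm{conv}(\mathcal{P})$ in $\Oh(n\log n)$ time, observe that a linear objective is minimized at a hull vertex (hence at a point of $\mathcal{P}$), and answer each query by an $\Oh(\log n)$ extreme-vertex search on the hull. The only difference is that the paper feeds the hull into the Guibas--Stolfi--Clarkson structure as a black box, using $\min_{p} w\cdot p = -\max_{p} (-w)\cdot p$, whereas you build that structure yourself: binary search over edge normals in 2D, and Gauss-map point location or the Dobkin--Kirkpatrick hierarchy in 3D, which is a valid self-contained realization.
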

We split the jobs of the instance into two halves, $A = \{1, \ldots, \floor{n/2}\}$, and $B = \{\floor{n/2} + 1, \ldots, n\}$.
Using \cref{lem:lp_query}, we create a data structure $\Dd$ on $d=m$ dimensions, which contains useful information for every partition $Y_1 \uplus \cdots \uplus Y_m = B$. Preprocessing of this data structure takes time $\Os(m^{n/2})$. Next, for every partition $X_1 \uplus \cdots \uplus X_m = A$, we query $\Dd$ to obtain the minimum weighted completion time when $X_i \cup Y_i$ are the jobs scheduled on the $i$-th machine, over all partitions $Y_1 \uplus \cdots \uplus Y_m = B$. Together, all queries take time $\Os(m^{n/2})$. Finally, we combine all queries to obtain the minimum weighted completion time over all schedules. In total, this approach takes time $\Os(m^{n/2})$. Refer to \cref{lem:mitm-wjCj} in \cref{sec:wjCj} for more details.

\subparagraph{Number of machines $m = 4$.}
From the previous discussion about improving the transitions in the dynamic program, we can use the algorithm for $m \in \{2,3\}$ to obtain an algorithm solving \PmSwjCj{4} in time $\Oh(2.755^n)$ (see \cref{thm:SumwjCj}).
We improve this running time further.
Although the meet-in-the-middle approach we used for $m \in \{2, 3\}$ still works for $m \ge 4$
using known data structures analogous to that of \Cref{lem:lp_query}, albeit with worse running times, (see, e.g., \cite[Corollary 3.2]{DBLP:conf/compgeom/Chan96}), we obtain significantly better running times with a different approach.
A clever trick helps us to reduce $4$-dimensional queries to $3$-dimensional queries, letting us exploit \Cref{lem:lp_query}. While this results in an extra preprocessing overhead, it nevertheless allows us to solve \PmSwjCj{4} even faster.
Interestingly, this extra preprocessing overhead causes the algorithm to achieve a better running time when we split the set of jobs into unequal parts, as opposed to the traditional meet-in-the-middle approach which splits into equal parts.

To capture the split into unequal parts consider a fraction $\alpha \in (0,1)$ which will be set later, let $A$ be the first $\lfloor \alpha n\rfloor$ of the jobs, and $B$ be the remaining.
We guess a partition $X_1 \uplus X_2 \uplus X_3 \uplus X_4 = A$.
Without loss of generality assume $|X_1| \le |X_2| \le |X_3| \le |X_4|$. We create a separate linear programming query data structure (as in \cref{lem:lp_query}) for \emph{every} possible $X_1$, which contains relevant information for every partition $Y_1 \uplus Y_2 \uplus Y_3 \uplus Y_4 = B$.
Note that this steps differs from the algorithm for $m \in \{2, 3\}$ which used a single data structure. Moreover, since we create a separate data structure for every $X_1$, it suffices to set these up on $d=m-1=3$ dimensions.
We show that preprocessing all data structures takes time $\Os(1.755^{|A|} \cdot 4^{|B|})$. Next, for all partitions $X_1 \uplus X_2 \uplus X_3 \uplus X_4 = A$, we query the data structure specific to $X_1$ for the minimum weighted completion time when $X_i \cup Y_i$ are the jobs scheduled on the $i$-th machine, over all partitions $Y_1 \uplus Y_2 \uplus Y_3 \uplus Y_4 = B$.
Together, all queries take $\Os(4^{|A|})$ time. Finally, we combine all queries to obtain the minimum weighted completion time over all schedules.
In total this takes time $\Os(1.755^{|A|} \cdot 4^{|B|} + 4^{|A|})$ which for $\alpha = 0.628$ turns out to be $\Oh(2.389^n)$.
Refer to \cref{lem:P4wjCj} for more details.

\subparagraph{Number of machines $m \in \{5,6\}$.}
To solve \PmSwjCj{m} with $m \in \{5, 6\}$ faster than $\Oh(2.755^n)$, we revisit the dynamic programming approach used to prove \cref{thm:SumwjCj}. Each entry, $\DP[i, S]$, tracks the minimum total weighted completion time for a given subset $S \subseteq \{1, \ldots n\}$ of jobs when scheduled on $i$ machines. Note that the value of $\DP[m, \{1, \ldots, n\}]$ contains the minimum weighted completion time for the original input instance.
We do the following.

\begin{itemize}
    \item For $m = 5$, we compute $\DP[2, S]$ for all subsets $S \subseteq \{1, \ldots, n\}$ and compute $\DP[3, S]$ only for subsets $S \subseteq \{1, \ldots, n\}$ with $|S| \le 3n/5$ using the above algorithms for \PmSwjCj{2} and \PmSwjCj{3}. We show that these computed entries are sufficient to directly compute the value of $\DP[5, \{1, \ldots, n\}]$. This algorithm takes time $\Oh(2.726^n)$, with the bottleneck being the computation of $\DP[3, S]$ for all subsets $S \subseteq \{1, \ldots, n\}$ with $|S| \le 3n/5$. Refer to \cref{lem:P5wjCj} for more details.
    \item For $m = 6$, we compute $\DP[3, S]$ for all subsets $S \subseteq \{1, \ldots, n\}$ by solving \PmSwjCj{3} on $S$. We show that these computed entries are sufficient to directly compute the value of $\DP[6, \{1, \ldots, n\}]$. This algorithm takes time $\Oh(2.733^n)$, with the bottleneck being the computation of $\DP[3, S]$ for all subsets $S \subseteq \{1, \ldots, n\}$. Refer to \cref{lem:P6wjCj} for more details.
\end{itemize}

\subsection{Weighted number of tardy jobs}

\PSwjUj denotes the scheduling problem where the objective is to minimize the total weighted number of tardy jobs.

\defproblem{\PSwjUj}{$n$ jobs where the $j$-th job consists of processing time $p_j\in \N$ and weight $w_j\in \N$ and due date $d_j \in \mathbb{N}$; number of machines $m \in \N$ with $m \geq 1$.}{Compute the minimum total weight of tardy jobs $\sum_{ j=1}^n w_j U_j$  of scheduling the given $n$ jobs on $m$ identical parallel machines.}

To solve \PSwjUj in time $\Os(2^n)$, we iterate over all subsets $S \subseteq [n]$, check whether $S$ can be scheduled on $m$ machines with every job meeting its deadline, and output the minimum possible value of $w([n] \setminus S)$. 
To check whether a subset $S$ can be scheduled on $m$ machines with every job meeting its deadline, we use dynamic programming to maintain whether a subset $S \subseteq [n]$ on $i \le m$ machines can be scheduled with every job meeting its deadline. 
The base case of the dynamic program (i.e. for $i = 1$) can be computed for each subset $S \subseteq [n]$ using a polynomial-time algorithm due to Moore~\cite{moore1968n}, and the transitions together for a fixed $i \ge 2$ can be computed using a Fast Subset Convolution algorithm due to Bj\"{o}rklund et al.~\cite{bjorklund2007fourier}. 
For more details refer to the proof of \cref{thm:SumwjUj} in Section~\ref{sec:wjUj}.

\subsection{Bin Packing}
Let us now give a high level overview of our result for the \binpacking problem.
We solve the decision version of \binpacking as constructing a solution can be done by a standard reduction to polynomially many calls to a decision oracle.

\defproblem{\binpacking}{Multiset $I \subseteq \N$ of $n$ items; number of bins $m \in \N$ with $m \geq 1$; capacity $t \in \N$.}{Decide if there is a partition $X_1 \uplus \dots \uplus X_m$ of $I$ into $m$ parts such that $\sum(X_i) \leq t$ for all $i \in [m]$.}

To solve \binpacking in faster than $2^n$-time, we will make use of the following two important results. First, we recall that Nederlof et al.~\cite{nederlof2023faster} showed that if the number of bins $m$ is a fixed constant (e.g.~$m=6$), then \binpacking can be solved in time faster than~$\Os(2^n)$.

\begin{restatable}[{\cite[Theorem 1.1]{nederlof2023faster}}]{lemma}{lemmaBinPackingMFixed}\label{thm:nederlof_binpacking}
    For every $m \in \mathbb{N}$ there is a constant $\eps_m >0$ such that \binpacking on instances with exactly $m$ bins can be solved with high probability in time $\Os(2^{(1-\eps_m)n})$. The algorithm has no false positives. 
\end{restatable}

Note that \cite[Theorem 1.1]{nederlof2023faster} does not state that the algorithm has no false positives, but this is indeed the case.
Next, since we assume the Asymptotic Rank Conjecture, we can leverage the fast algorithm of Björklund et al.~\cite{radu25} that solves the \threepartition problem as defined below.
For a universe $U$ and a family $\Ff$ of subsets of $U$, we say that $\Ff$ is $\nu$-bounded if no set in $\Ff$ is larger than $\nu|U|$.

\defproblem{\threepartition for $\nu$-bounded set families}{A universe $U \subseteq \N$; $\nu$-bounded set families $\Ff_1, \Ff_2, \Ff_3 \subseteq 2^U$.}{Decide whether there exist pairwise disjoint subsets $S_1 \in \Ff_1$, $S_2 \in \Ff_2$ and $S_3 \in \Ff_3$ such that $S_1 \uplus S_2 \uplus S_3 = U$.}

\begin{restatable}[{\cite[Theorem 1]{radu25}}]{lemma}{lemmaFastThreePartitionARC}\label{thm:arc_3partition}
    Assuming the Asymptotic Rank Conjecture, for every $\eps > 0$, there exists $\eta_\eps >0$ such that the \threepartition problem over an $n$-element universe for $(1-\eps)/2$-bounded set families can be solved deterministically in time $\Os(2^{(1-\eta_{\eps})n})$.
\end{restatable}

The general idea to prove \cref{thm:bin-packing-arc} is to reduce solving the given \binpacking instance either to solving a \binpacking instance with a constant number of bins, which can be solved efficiently using \cref{thm:nederlof_binpacking}, or to solving a \threepartition instance over an $n$-element universe for $(1-\eps)/2$-bounded set families, which can be solved efficiently using \cref{thm:arc_3partition} when assuming ARC.
We now give a high-level overview of this strategy.

Consider a \binpacking instance $(m, t, I)$. Any partition of $I$ into $m$ parts $X_1, \dots, X_m$ is called a \emph{packing} of $I$ into $m$ bins. The quantity $\Sigma(X_i)$ is called the \emph{load of the $i$-th bin}. The packing forms a \emph{solution} if $\Sigma(X_i) \leq t$ for all $i \in [m]$, i.e.~the load of each bin does not exceed the capacity $t$. The goal of the \binpacking problem is to decide whether the given instance admits a solution.
To do that, we will distinguish between four types of solutions. For each solution type, we will show a faster than $2^n$-time algorithm that, given a \binpacking instance, outputs \textsc{No} if the instance has no solution and outputs \textsc{Yes} if the instance admits a solution of that solution type. In the end, given any \binpacking instance, we can simply run the four algorithms for each solution type and output \textsc{No} if and only if all the algorithms output \textsc{No}.

We now define the different solution types we consider. For that, consider a solution $X_1, \dots, X_m$ to that given \binpacking instance $(m, t, I)$. Let $b_i \coloneq |X_i|$ for all $i \in [m]$. Without loss of generality, we can assume that $b_1 \geq \dots \geq b_m$. Let $\eps >0$ be a small enough constant.
Then, we distinguish the following cases.
\begin{caseenum}
    \item If $b_1 \geq (1+\eps)\cdot n / 2$, i.e.~\emph{more than half of the items are packed into the first bin}, then we can guess $\leq (1-\eps)n/2$ items and use standard dynamic programming to verify that they can be packed into $m-1$ bins. 
    See \cref{lem:case_DP} for details. 

    \item If $\sum_{i=1}^6 b_i \geq (1-\eps)\cdot n$, i.e.~\emph{almost all items are packed into the first 6 bins},
    then we can guess $\leq \eps n$ items and use standard dynamic programming to verify that they can be packed into $m-6$ bins, and then use \cref{thm:nederlof_binpacking} to verify that the $\geq (1-\eps)n$ items can be packed into 6 bins. 
    See \cref{lem:case_cstBP} for details. 

    \item If $b_1 \in [\frac{n}{2} \pm \eps\cdot \frac{n}{2}]$ and $\sum_{i=1}^6 b_i < (1-\eps) \cdot n$, i.e.~\emph{the first bin contains approximately half of the items, but the remaining items are not concentrated on the first 6 bins}, then, in \cref{lem:case_balanced}, 
    we use a randomized strategy that samples witnesses of the existence of such a packing.
    This algorithm is inspired by the proof of \cite[Lemma~3.5]{nederlof2023faster}.

    \item If $b_1 \leq (1-\eps)\cdot n / 2$ and $\sum_{i=1}^6 b_i < (1-\eps) \cdot n$, i.e.~\emph{the first bin contains less than half of the items, and the items are not concentrated on the first 6 bins}, we reduce solving the problem to solving specific instances of \threepartition{}, which we then solve using \cref{thm:arc_3partition}. This is the only case in which we make use of the ARC. Refer to \cref{lem:case_ARC} for the details.
\end{caseenum}
Of course, when given a yes-instance of \binpacking{}, we do not know in which case we are.
Hence, we simply apply the algorithms for all the four cases in sequence, one of which will detect that the instance is a yes-instance (with high probability).
 \section{Preliminaries}\label{sec:preliminaries}

In this paper, we work in a unit-cost RAM model with a word size of $\Theta(L)$ bits, where $L$ is the encoding length of the input.
Thus, arithmetic operations on integers with $O(L)$ bits take constant time.
The unit-cost assumption can be dropped at the cost of an additional factor in the running time that is polynomial in $L$.

\paragraph*{Notation}
We use $\nat$ to denote the set of natural numbers, and $0 \in \nat$.
For $n, \delta \in \N$, let $[n] \coloneq \{1, 2, \ldots, n\}$ and $[n \pm \delta] \coloneq \{n-\delta, \dots, n + \delta\}$.
We use $\log$ to denote the binary logarithm $\log_2$, and $\ln$ to denote the natural logarithm with base $e$.
For $x \in \mathbb R$, we adopt the convention that $\infty \cdot x$ is equal to $0$ if $x = 0$, to $\infty$ if $x>0$, and to $-\infty$ if $x <0$.
By convention, the minimum over an empty set is defined to be $\infty$.

We use the $\Os$-notation to hide polynomial factors in the following way: If function $f(n)$ is in $\Os(g(n))$, then $f(n) \in \Oh(g(n) \cdot n^c)$ for some constant $c$.

The Iverson bracket $\llbracket E \rrbracket$ is used to denote the indicator variable of predicate $E$, i.e.~it takes value 1 if $E$ is true, and value 0 otherwise.

For a set $X$, we say that $X_1, \dots, X_p \subset X$ form a partition of $X$ into $p$ parts if $\bigcup_{i \in [p]} X_i = X ,$ and for any $i, j \in [p]$, $i \neq j$ we have $X_i \cap X_j = \emptyset$. Note that in this definition we allow (even multiple) parts $X_i$ to be the empty set.
For a multiset of integers $Y$, we write $\Sigma(Y) \coloneq \sum_{y \in Y} y$.
We say that $Y_1, \dots, Y_p \subseteq Y$ is a partition of $Y$ into $p$ parts if for any fixed ordering of the elements in $Y = \{y_1, \dots, y_n\}$, the sets $X_1, \dots, X_p$ defined as $X_i \coloneq \{j \ \mid \ y_j \in Y_i\}$ for all $i \in [p]$ form a partition of $[n]$.
We use the shorthand notation $Y = Y_1 \uplus \dots \uplus Y_p$ to denote that $Y_1, \dots, Y_p$ is a partition of $Y$ into $p$ parts.

For a collection $\Ww$ of submultisets of some universe $U$,  we define the upward closure of $\Ww$ as $\uc \Ww \coloneq \{W' \supseteq W \mid W \in \Ww, W' \subseteq U\}$,
and the downward closure of $\Ww$ as $\dc \Ww \coloneq \{W' \subseteq W \mid W \in \Ww\}$.

In the context of \PSwjCj and \PSwjUj, job $j$ has processing time $p_j$ and weight $w_j$. We denote $p(S) \coloneq \sum_{j \in S}p_j$ and $w(S) \coloneq \sum_{j \in S}w_j$ for any $S \subseteq [n]$.

In the context of \binpacking algorithms, the term \emph{with high probability}
means a probability of at least $1 - n^{-\Omega(1)}$, where $n$ is the number of
items.

\paragraph*{Binomial Coefficients}

For a (multi)set $S$ and non-negative integer $i \leq |S|$, the set of sub(multi)sets of $S$ of size $i$ is denoted by $\binom{S}{i}$.
Further, for non-negative real number $x$, let $\binom{S}{\le x} := \bigcup _{j = 0}^{\lfloor x \rfloor} \binom{S}{j}$.
For integer $n$ and non-negative real number $r$ we additionally define $\binom{n}{\le r} \coloneq \sum_{i = 0}^{\lfloor r \rfloor} \binom{n}{i}$.

We use the following bound for $\alpha \in [0, 1]$ (see~\cite[Page 353]{thomas2006elements}):
\[\binom{n}{\alpha n} \leq 2^{H(\alpha)n}\]
where $H(\cdot)$ is the entropy function defined as $ H(\alpha) \coloneq -\alpha \log_2(\alpha) - (1-\alpha) \log_2(1-\alpha)$ for $\alpha \in (0, 1)$ and $H(\alpha)=0$ for $\alpha \in \{0, 1\}$.
Note that $H(\alpha) \in (0, 1)$ for $\alpha \in (0,1) \setminus \{1/2\}$, so $2^{H(\alpha)} \in (1, 2)$.
Hence, by setting $\eps_\alpha \coloneq 2-2^{H(\alpha)} \in (0, 1)$ for $\alpha \in (0, 1) \setminus \{1/2\}$, we can rewrite the above bound as
\[\binom{n}{\alpha n} \leq (2-\eps_\alpha)^n.\]

Plugging in the value $\alpha = 1/4$, and by leveraging the fact that $\binom{n}{r_1} \le \binom{n}{r_2}$ when $r_1 \le r_2 \le n/2$, we obtain the following bound:
\begin{equation}\label{prop:binom-to-entropy}
    \binom{n}{\le n/4} \le \Os(2^{H(\frac{1}{4})n}) \le \Oh(1.7548^n).
\end{equation}

Note that in \cref{prop:binom-to-entropy} we round up the exponent base to $4$ digits after the decimal point. This rounding up additionally allows us to replace $\Os(\cdot)$ with $\Oh(\cdot)$.

\paragraph*{Technical Lemmas}

Now, we cover some technical lemmas we utilize in the proofs.
First, we restate the following combinatorial bound established by Nederlof et al.~\cite{nederlof2023faster}.

\begin{lemma}[{\cite[Lemma B.2]{nederlof2023faster} with $c=0$}]\label{lem:nederlof2023_bound}
    Let $I \subset \N$ be a multiset of $n$ integers and $z \in (0, 1)$.
    Then for any $\Ww \subseteq \binom{I}{n/2}$ of size at most $2^{(1-z)n}\poly(n)$, we have
    \[
        |\dc \Ww| + |\uc \Ww| \leq \Oh(2^{(1-\rho)n})
    \]
    where $\rho = \frac{2}{\ln 2} \left(\frac{z}{4 \log(12/z)}\right)^2$.
\end{lemma}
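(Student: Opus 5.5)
We treat $I$ as an indexed set $\{1,\dots,n\}$, so that submultisets are subsets of $[n]$; this only over-counts, which is fine for an upper bound. We first handle $\dc \Ww$ and then obtain $\uc \Ww$ by complementation. The map $W \mapsto I\setminus W$ sends $\binom{I}{n/2}$ to itself, preserves $|\Ww|$, and maps supersets of $W$ bijectively to subsets of $I \setminus W$. Hence $|\uc \Ww| = |\dc \Ww^{c}|$ with $\Ww^c := \{I\setminus W \mid W\in \Ww\}$, and the same bound applies to both.

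To bound $|\dc \Ww|$ we fix a threshold $\delta \in (0,1/4)$ and split the sets $W' \in \dc\Ww$ by their size $k=|W'| \le n/2$.
\begin{itemize}
\item \emph{Small sets}, $k \le (1/2-\delta)n$. We ignore $\Ww$ entirely and bound their number by $\binom{n}{\le (1/2-\delta)n} \le \Os(2^{H(1/2-\delta)n})$. We then use the standard estimate $H(1/2-\delta) \le 1-\frac{2}{\ln 2}\delta^2$, which follows from the Taylor expansion of $H$ around $1/2$; all even-order terms are negative.
\item \emph{Large sets}, $(1/2-\delta)n < k \le n/2$. Each such $W'$ is obtained from some $W\in\Ww$ by deleting at most $\delta n$ of its $n/2$ elements. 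Their number is therefore at most
\[
|\Ww|\cdot \binom{n/2}{\le \delta n} \le 2^{(1-z)n}\poly(n)\cdot 2^{H(2\delta)n/2}.
\]
\end{itemize}

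The remaining step is choosing $\delta$. We set $\delta := \frac{z}{4\log(12/z)}$, which makes the small-set exponent exactly $1-\rho$ with $\rho=\frac{2}{\ln 2}\delta^2$. For the large sets we must check that $H(2\delta)/2 \le z - \rho'$ with some slack large enough to absorb the $\poly(n)$ factor and still land below $1-\rho$. We use $H(x)\le 2x\log(1/x)$ for $x\le 1/2$, hence
\[
H(2\delta)/2 \le 2\delta\log\tfrac{1}{2\delta} = \frac{z}{2}\cdot\frac{\log(2\log(12/z)/z)}{\log(12/z)}.
\]
The ratio on the right is at most $1$ because $2\log(12/z)\le 12$ in the relevant range, or at least this holds after a small adjustment using $\log(12/z) \le 12/z$. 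This gives a large-set exponent of at most $1 - z/2$. Since $\rho \le z^2/(8\ln 2 \cdot \log^2(12/z))$ is much smaller than $z/2$, the term $2^{(1-z/2)n}\poly(n)$ is $\Oh(2^{(1-\rho)n})$, and the small-set bound absorbs its polynomial factor in the same way.

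Summing the two ranges, and adding the identical bound for $\uc\Ww$, gives the lemma. The main obstacle is this last calculation: verifying that the specific constant $12$ in $\log(12/z)$ makes $H(2\delta)/2$ strictly smaller than $z$ for all $z\in(0,1)$, with a margin that beats the polynomial factors. I would first try the crude bound $H(x)\le 2x\log(1/x)$ as above. If that is too tight near $z\to 1$, I would instead check the endpoint regime separately, where $\delta$ is a fixed constant and $H(2\delta)<1$ can be verified numerically.
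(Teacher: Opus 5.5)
The paper does not prove this lemma itself. It imports it from Nederlof et al.\ (Lemma~B.2 with $c=0$) and only adds the remark that the multiset case follows by passing to index sets, which you also do. Your self-contained route is sound in structure and reproduces exactly the stated $\rho$ when $\delta=\frac{z}{4\log(12/z)}$. It has three parts: bound sets of size at most $(1/2-\delta)n$ by $2^{H(1/2-\delta)n}\le 2^{(1-\frac{2}{\ln 2}\delta^2)n}$, bound the remaining sets of $\dc\Ww$ by $|\Ww|\binom{n/2}{\le\delta n}$, and handle $\uc\Ww$ by complementation.

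\textbf{The final estimate, which you identify as the crux, is wrong as written.} Put $L=\log(12/z)$. Then
\[\frac{\log(2L/z)}{L}=1+\frac{\log(L/6)}{L},\]
so your claim that this ratio is at most $1$ is equivalent to $L\le 6$, i.e.\ $z\ge 3/16$. For $z<3/16$ the ratio exceeds $1$, so the crude bound $H(x)\le 2x\log(1/x)$ does not give the large-set exponent $1-z/2$.

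Your fallback also targets the wrong regime. Near $z\to 1$ the inequality is fine. The problematic regime is $z\to 0$, where $\delta\to 0$, so no numerical check at a fixed $\delta$ can settle it. The suggested adjustment via $\log(12/z)\le 12/z$ only gives a ratio of at most $2-\frac{\log 6}{L}$, i.e.\ $H(2\delta)/2\le z-\frac{z\log 6}{2L}$. This happens to still beat $\rho$, but you would have to verify that explicitly, and you did not.

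\textbf{The gap is easy to close.} Either of the following works.
\begin{itemize}
\item Since $\frac{\log(L/6)}{L}\le\frac{\log e}{6e}<0.09$, you get $H(2\delta)/2\le 0.55z$.
\item More cleanly, use $H(x)\le x\log(e/x)$, which follows from $(1-x)\ln\frac{1}{1-x}\le x$. This gives $H(2\delta)/2\le\frac{z}{4L}\log\frac{2eL}{z}\le\frac{z}{2}$. The last step holds because it is equivalent to $2eLz\le 144$, and $Lz=z\log(12/z)\le\log 12$ on $(0,1)$.
\end{itemize}
In both cases the large sets contribute at most $2^{(1-0.45z)n}\poly(n)$. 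Since $\rho\le\frac{z^2}{8\ln 2\,(\log 12)^2}<0.015z$, this is $\Oh(2^{(1-\rho)n})$.

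\textbf{The small-set term is also not justified as written.} Your remark that this term ``absorbs its polynomial factor in the same way'' does not follow, because the quadratic Taylor bound leaves no slack there. No slack is needed, however: $\sum_{i\le\alpha n}\binom{n}{i}\le 2^{H(\alpha)n}$ holds without a polynomial factor for $\alpha\le 1/2$. Alternatively, the negative quartic Taylor term supplies the slack.
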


Note that \cite{nederlof2023faster} state the lemma for sets only, but we can easily obtain the statement for multisets by applying it on the index set.
We also require the following bound.

\begin{lemma}\label{lem:0.6-binom}
    We have $\sum \limits_{r = 0}^{\lfloor 0.6n \rfloor} \binom{n}{r} \cdot \left(\sqrt{3}\right)^{r} \le \Oh(2.726^n)$.
\end{lemma}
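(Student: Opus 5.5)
The plan is to show that the summand $\binom{n}{r}(\sqrt{3})^r$ is non-decreasing in $r$ on the whole range $0 \le r \le \lfloor 0.6n\rfloor$. Then the sum is at most $(0.6n+1)$ times its last term. That last term is bounded by the entropy estimate from the preliminaries, and the polynomial factor is absorbed by rounding the base up.

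\emph{Step 1 (monotonicity).} For $0 \le r < n$ the ratio of consecutive terms is
\[
\frac{\binom{n}{r+1}(\sqrt{3})^{r+1}}{\binom{n}{r}(\sqrt{3})^{r}} = \frac{\sqrt{3}\,(n-r)}{r+1}.
\]
This ratio is at least $1$ exactly when $r \le (\sqrt{3}n-1)/(1+\sqrt{3})$. The right-hand side is roughly $0.634n - 0.366$, so the condition holds for every $r < \lfloor 0.6n \rfloor$ once $n$ exceeds a small constant. The finitely many small $n$ only affect the hidden constant. Intuitively, the unconstrained sum equals $(1+\sqrt{3})^n$ and its largest terms sit near $r \approx \frac{\sqrt{3}}{1+\sqrt{3}}n \approx 0.634n$, which lies beyond our cutoff at $0.6n$. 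Hence the sum is at most $(0.6n+1)\binom{n}{\lfloor 0.6n\rfloor}(\sqrt{3})^{0.6n}$.

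\emph{Step 2 (entropy bound).} By the bound $\binom{n}{\alpha n} \le 2^{H(\alpha)n}$ (also valid at the floor, up to a polynomial factor, or directly by monotonicity), the last term is at most $2^{(H(0.6) + 0.3\log 3)n}$. We have $H(0.6) \approx 0.97095$ and $0.3\log 3 \approx 0.47549$, so the exponent is about $1.44644$. This gives a base $2^{1.44644} \approx 2.7254$.

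\emph{Step 3 (absorbing the polynomial).} Since $2.7254 < 2.726$, the factor $(0.6n+1)$ satisfies $(0.6n+1)\cdot 2.7254^n \le \Oh(2.726^n)$. This yields the claim.

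The only delicate point is the numerics. The margin between $2.7254$ and $2.726$ is small, so the values of $H(0.6)$ and $\log 3$ must be computed with enough precision to confirm the strict inequality. The monotonicity threshold $0.634n$ also has to exceed $0.6n$ with room to spare for the $-0.366$ offset, which it does.
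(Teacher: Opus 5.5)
Your proof is correct and follows essentially the paper's route. Both arguments bound the sum by the number of terms times the term at $r=\lfloor 0.6n\rfloor$, which is at most $2^{(H(0.6)+0.3\log 3)n}\le 2.7254^n$. The only difference is where monotonicity comes from: you use the discrete ratio $\sqrt{3}(n-r)/(r+1)\ge 1$, whereas the paper shows that the exponent $f(x)=x(\ln\sqrt{3}-\ln x)-(1-x)\ln(1-x)$ of the entropy bound $e^{f(r/n)n}$ is increasing on $(0,0.6]$, which bounds every term by $e^{f(0.6)n}$ at once.

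The paper's version also makes your Step 2 parenthetical explicit. Since $H$ is decreasing just below $0.6$, you cannot naively replace $H(\lfloor 0.6n\rfloor/n)$ by $H(0.6)$. The replacement is justified precisely by the monotonicity of $H(x)+\tfrac{x}{2}\log 3$ on $(0,\sqrt{3}/(1+\sqrt{3}))$, or alternatively by perturbing $H$ by $\Oh(1/n)$, which costs only a constant factor.
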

\begin{proof}
    Consider the function $f(x) : (0, 0.6] \to \mathbb{R}$ defined as
    \[f(x) = {x(\ln \sqrt{3} - \ln x) - (1 - x) \ln (1 - x)}.\]
    Therefore, for $r \in [\lfloor 0.6n \rfloor]$, we have
    \begin{align*}
        \binom{n}{r} \cdot \left(\sqrt{3}\right)^{r}
         & \le 2^{H(r/n) n} \cdot 2^{r \log {\sqrt{3}}}                                                      \\
         & = 2^{((r/n)(\log \sqrt{3} - \log (r/n)) - (1 - (r/n)) \log (1 - (r/n)))n}                         \\
         & = \left( \exp \left((r/n)(\ln \sqrt{3} - \ln (r/n)) - (1 - (r/n)) \ln (1 - (r/n))\right)\right)^n \\
         & = \left(e^{f(r/n)}\right)^n.
    \end{align*}

    Consider the first derivative $f'(x)$ of $f(x)$,
    \begin{align*}
        f'(x) & = \frac{d}{dx} \left(x(\ln \sqrt{3} - \ln x) - (1 - x) \ln (1 - x)\right)                                 \\
              & = (\ln \sqrt{3} - \ln x) + x \left(- \frac{1}{x}\right) - (- \ln (1-x)) - \left( - \frac{1-x}{1-x}\right) \\
              & = \ln \sqrt{3} - \ln x + \ln (1-x).
    \end{align*}
    Now, consider the second derivative $f''(x)$ of $f(x)$,
    \begin{align*}
        f''(x) & = \frac{d}{dx} \left(\ln \sqrt{3} - \ln x + \ln (1-x)\right) = - \frac{1}{x} - \frac{1}{1-x} < 0 & \text{for }x\in(0, 0.6).
    \end{align*}
    Thus $f'(x)$ is decreasing in the range $(0, 0.6]$. In particular, for all $x \in (0, 0.6)$, we have,
    \[f'(x) > f'(0.6) = \ln \sqrt{3} - \ln 0.6 + \ln 0.4 = \ln \left(\frac{\sqrt{3} \cdot 0.4}{0.6}\right) = \ln \left(\frac{2\sqrt{3}}{3}\right) > 0.\]
    This in turn implies $f(x)$ is increasing in the range $(0, 0.6]$. Plugging in the exact values gives us $f(0.6) \le 1.0026$ and  $e^{f(0.6)} \le 2.7254$. Therefore, for all $r \in [\lfloor 0.6n \rfloor]$, we have,
    \begin{align*}
        \binom{n}{r} \cdot \left(\sqrt{3}\right)^{r} \le \left(e^{f(r/n)}\right)^n \le \left(e^{f(0.6)}\right)^n \le 2.7254^n.
    \end{align*}
    This implies,
    \[\sum \limits_{r = 0}^{\lfloor 0.6n \rfloor} \binom{n}{r} \cdot \left(\sqrt{3}\right)^{r} = 1 + \sum \limits_{r = 1}^{\lfloor 0.6n \rfloor} \binom{n}{r}\left(\sqrt{3}\right)^{r} \le 1 + \lfloor 0.6n \rfloor \cdot 2.7254^n = \Oh(2.726^n).\]
    This completes the proof.
\end{proof}

Now, we prove the following general result which can be applied to \PSwjCj or \PSwjUj by setting $f_j$ to $C_j$ or $U_j$ respectively.

\begin{restatable}{lemma}{OPTrecursion}\label{lem:dp-correct-recursion}
    Let \PSwjfj be a scheduling problem on $m$ parallel identical machines and $n$ jobs where the objective is to minimize $\sum_{j=1}^n w_j f_j$, where $f_j$ is a cost function depending on the job $j$ and its completion time $C_j$. For $i \in [m]$ and $S \subset [n]$, let $\OPT(i, S)$ denote the minimum value of $\sum_{j \in S} w_jf_j$ for the instance restricted to $i$ machines and the jobs in $S$.

    Then, for all $i \in \{2, 3, \ldots, m\}$, $S \subseteq [n]$, and $j \in [i-1]$, the following holds true:
    \[\OPT(i, S) = \min \limits_{S' \subseteq S}\{\OPT(j, S') + \OPT(i - j, S \setminus S')\} = \min \limits_{S' \in \binom{S}{\le j|S|/i}}\{\OPT(j, S') + \OPT(i - j, S \setminus S')\}.\]
\end{restatable}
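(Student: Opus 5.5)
We prove the two equalities by comparing each side with a fixed optimal schedule. The key structural fact is that in \PSwjfj each $f_j$ depends only on job $j$ and its completion time $C_j$. Machines are identical and do not interact, so the cost of a schedule is the sum, over machines, of the cost of that machine's job set processed on that machine alone. For a single machine with job set $T$, write $c^*(T)$ for the minimum of $\sum_{j\in T} w_j f_j$ over all single-machine schedules of $T$; this equals $\OPT(1,T)$. Then $\OPT(i,S)$ is the minimum of $\sum_{k=1}^i c^*(T_k)$ over all partitions $S=T_1\uplus\dots\uplus T_i$, with empty parts allowed as in the paper's definition of partition. We state this reformulation first.

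For the first equality we prove two inequalities. For ``$\le$'': given any $S'\subseteq S$, take an optimal schedule for $S'$ on $j$ machines and one for $S\setminus S'$ on $i-j$ machines. Placing them side by side gives a schedule of $S$ on $i$ machines, and each job keeps its completion time. Hence $\OPT(i,S)\le \OPT(j,S')+\OPT(i-j,S\setminus S')$. For ``$\ge$'': take an optimal schedule for $(i,S)$ with machine sets $T_1,\dots,T_i$. Let $S'=T_1\cup\dots\cup T_j$. Restricting the schedule to the first $j$ machines gives a feasible schedule for $(j,S')$, and restricting it to the remaining machines gives one for $(i-j,S\setminus S')$. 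Completion times are unchanged, so the two costs sum to $\OPT(i,S)$, which bounds the minimum from above.

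For the second equality, restricting the minimum to $S'\in\binom{S}{\le j|S|/i}$ can only increase it, so we need only show that the optimum is attained by some small $S'$. This is the step needing an argument. Since machines are identical, we may permute the machine labels of the optimal schedule so that $|T_1|\le|T_2|\le\dots\le|T_i|$. The $j$ smallest parts then have total size at most $j|S|/i$ by averaging: the average size of the $j$ smallest parts is at most the overall average $|S|/i$. This is the step requiring care; it is a one-line averaging argument, and integrality is handled by the floor in the definition of $\binom{S}{\le x}$, since $|S'|$ is an integer that is at most $j|S|/i$. With $S'=T_1\cup\dots\cup T_j$, the ``$\ge$'' argument above applies verbatim. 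So the restricted minimum is at most $\OPT(i,S)$, and by the first equality it is also at least $\OPT(i,S)$.
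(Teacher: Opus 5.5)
Your proof is correct and is essentially the paper's argument. You combine optimal sub-schedules for the upper bound, and you split an optimal $i$-machine schedule along the $j$ machines with the fewest jobs (at most $j|S|/i$ of them, by averaging) for the lower bound; the paper merely packages this as the single chain $\OPT(i,S) \le \min_{S' \subseteq S}\{\cdots\} \le \min_{S' \in \binom{S}{\le j|S|/i}}\{\cdots\} \le \OPT(i,S)$ rather than proving the two equalities separately.
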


\begin{proof}
    Consider an optimal schedule of the jobs $S$ on $i$ machines. Let $S^* \subset S$ be the jobs scheduled on the $j$ machines containing the least number of total jobs; therefore $|S^*| \le j|S|/i$. Since the objective is the weighted sum of the cost functions over all jobs, $\sum_{j' \in [S]}w_{j'}f_{j'}$, the jobs $S^*$ must be scheduled optimally on $j$ machines, and the jobs $S \setminus S^*$ must be also scheduled optimally on $i - j$ machines. This gives us $\OPT(i,S) = \OPT(j, S^*) + \OPT(i - j, S \setminus S^*)$. Moreover, as $S^* \in \binom{S}{\le j|S|/i}$, we have
    \begin{equation}\label{eq:opt-breakup-jS_by_i}
        \OPT(i,S) = \OPT(j, S^*) + \OPT(i - j, S \setminus S^*) \ge \min \limits_{S' \in \binom{S}{\le j|S|/i}}\{\OPT(j, S') + \OPT(i - j, S \setminus S')\}.
    \end{equation}
    Next, consider a subset $S^\dagger \in \argmin_{S' \subseteq S}\{\OPT(j, S') + \OPT(i - j, S \setminus S')\}$.
    One way of scheduling the jobs $S$ on $i$ machines is scheduling the jobs $S^\dagger$ on $j$ machines optimally, and scheduling the jobs $S \setminus S^\dagger$ on $i - j$ machines optimally.
    Such a schedule has the weighted sum of cost functions equal to $\OPT(j, S^\dagger) + \OPT(i - j, S \setminus S^\dagger)$ and this can not be less than $\OPT(i, S)$. This gives us,
    \begin{equation}\label{eq:opt-breakup-all}
        \OPT(i,S) \le \OPT(j, S^\dagger) + \OPT(i - j, S \setminus S^\dagger) = \min \limits_{S' \subseteq S}\{\OPT(j, S') + \OPT(i - j, S \setminus S')\}.
    \end{equation}
    Finally, since $S' \in \binom{S}{\le j|S|/i}$ implies $S' \subseteq S$, we have
    \begin{equation}\label{eq:trivial-min-ineq}
        \min \limits_{S' \in \binom{S}{\le j|S|/i}}\{\OPT(j, S') + \OPT(i - j, S \setminus S')\} \ge \min \limits_{S' \subseteq S}\{\OPT(j, S') + \OPT(i - j, S \setminus S')\}.
    \end{equation}
    Combining \cref{eq:opt-breakup-jS_by_i,eq:opt-breakup-all,eq:trivial-min-ineq}, we get
    \begin{align*}
        \OPT(i,S) & \le \min \limits_{S' \subseteq S}\{\OPT(j, S') + \OPT(i - j, S \setminus S') \}                              \\
                  & \le \min \limits_{S' \in \binom{S}{\le j|S|/i}}\{\OPT(j, S') + \OPT(i - j, S \setminus S')\} \le \OPT(i, S).
    \end{align*}
    Hence, equality must hold throughout, which proves the lemma.
\end{proof}

\paragraph*{Linear Programming Queries}
Finally, we elaborate more on the linear programming query data structure we utilize.
The work of Guibas et al.~\cite{DBLP:journals/tcs/GuibasSC87} gives a data structure that takes $n$ half-spaces in $d \in \{2,3\}$ dimensions as input.
As part of the preprocessing, first the polyhedron $P$ that is the intersection of these $n$ half-spaces is computed using an algorithm of Preparata and Muller~\cite{preparataFindingIntersectionHalfspaces1979}. 
Then this polyhedron $P$ is processed further, note that the rest of the preprocessing does not utilize the half-spaces and only requires $P$.
Overall, the preprocessing takes time $O(n \log n)$.
Given any point $w$ in $d$-dimensional space, the data structure can find $\max_{p \in P} w \cdot p$ in time $O(\log n)$.
This is exactly the same as solving a linear program with $d$ variables where the cost function is given by $w$, and the feasible region is defined by the polyhedron $P$.
Hence, such a data structure is useful when the constraints of the linear program remain stable, but the objective function changes often.

We can use the same data structure for our purposes.
Let $\mathcal{P}$ be a set of $n$ points, then we can compute the convex hull, say $P$, of these points in time $O(n \log n)$ using well-known algorithms~\cite{10.1145/359423.359430,DBLP:journals/dcg/Chan96}.
Then, instead of taking $n$ half-spaces as input and computing their intersection, the data structure can just work with this polyhedron $P$.
Hence, given a point $w$, we can find $-\max_{p \in P} -w \cdot p = \min_{p \in P} w \cdot p$ in time $O(\log n)$.
Since $P$ is bounded, there is always an extreme point $p$ of $P$ that minimizes $w \cdot p$ among all points of $P$, and therefore $p \in \mathcal{P}$.
This directly gives us the required data structure, we restate the lemma that summarizes this discussion for convenience.

\lpQueries*{} \section{Weighted completion time}\label{sec:wjCj}

In this section we focus on the problem of minimizing the total weighted completion time \PSwjCj and prove \cref{thm:SumwjCj,thm:PmSumwjCj}.

\subsection{The Proof of Theorem~\ref{thm:SumwjCj}}
We begin with the proof of \cref{thm:SumwjCj} that we restate for convenience.

\SumwjCj*

We prove \cref{thm:SumwjCj} using dynamic programming.
The key observation is that the base cases, computing the dynamic programming tables for $m\leq 3$ machines, can be handled efficiently. 
Therefore, we first show how to solve \PmSwjCj{m} on $m \in \{1, 2, 3\}$ machines, before presenting the dynamic program used to obtain \cref{thm:SumwjCj}. 
This is done by exploiting the meet-in-the-middle paradigm combined with an efficient data structure that supports linear programming queries in $m$ dimensions.

Recall that for $m=1$ machine, the following observation due to Smith~\cite{smith1956various} implies that the problem can be solved by simply sorting the jobs.
Additionally, we assume without loss of generality that the given jobs are sorted by their ratio of processing time to weight $p_j/w_j$. 
If $w_j = 0$ for job $j$ we assign $p_j/w_j = \infty$.

\begin{lemma}[Smith's rule~\cite{smith1956various}]\label{lem:wjCj-jobs-sorted}
    In an optimal schedule for an instance of \PSwjCj, jobs on the same machine are scheduled in non-decreasing order of their ratio of processing time to weights.
\end{lemma}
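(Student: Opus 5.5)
We will prove Smith's rule by a standard adjacent-interchange argument, applied to a single machine of an arbitrary optimal schedule. Fix an optimal schedule. First, we may assume that on every machine there is no idle time. Shifting a job to start earlier never increases its completion time $C_j$, and all weights are nonnegative, so removing idle time never increases $\sum_j w_j C_j$. Any optimal schedule therefore has no idle time before jobs of positive weight. Jobs of weight $0$ have ratio $\infty$ under the stated convention and contribute nothing to the objective, so they can be moved to the end of each machine at no cost. Hence, on each machine the jobs form a contiguous sequence starting at time $0$.

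Now take a machine and two consecutive jobs $j$ and $k$ on it, with $j$ immediately preceding $k$, and let $t$ be the start time of $j$. Consider swapping them: $k$ starts at $t$ and $j$ starts at $t+p_k$. All other jobs keep their completion times, because the pair still jointly occupies the interval $[t, t+p_j+p_k]$. Before the swap the pair contributes
\[ w_j(t+p_j)+w_k(t+p_j+p_k), \]
and after the swap it contributes
\[ w_k(t+p_k)+w_j(t+p_k+p_j). \]
The difference (after minus before) is $w_j p_k - w_k p_j$. If $p_j/w_j > p_k/w_k$, i.e.\ $p_j w_k > p_k w_j$ when both weights are positive, the swap strictly decreases the objective. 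The case $w_k = 0 < w_j$ does not arise after the normalization above, since zero-weight jobs have been moved to the end.

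Hence an optimal schedule cannot contain an adjacent inversion with a strictly larger ratio in front, unless the inversion involves ties. Ties can be swapped freely at zero cost. Therefore, by repeatedly swapping adjacent pairs (bubble sort), we obtain an optimal schedule in which each machine's jobs appear in non-decreasing order of $p_j/w_j$. This is the intended meaning of the lemma: there is always an optimal schedule with this property, and every optimal schedule has this property up to ties and zero-weight jobs.

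The only delicate step is the handling of edge cases: zero weights via the $\infty$ convention, and ties, for which the statement should be read as ``there exists an optimal schedule''. The interchange computation itself is routine.
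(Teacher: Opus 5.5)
The paper gives no proof of this lemma and simply cites Smith; your adjacent-interchange argument is the classical proof of Smith's rule, so it is correct and is essentially the intended approach. You normalize away idle time and zero-weight jobs, compute the swap difference $w_j p_k - w_k p_j$, and bubble-sort.

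There are two cosmetic slips. First, the zero-weight case your normalization must exclude is $w_j = 0 < w_k$ (a zero-weight job in front), not $w_k = 0 < w_j$, which is not an inversion at all. Second, ties never violate a non-decreasing order, so they need no existential reading; the only real exception to the ``every optimal schedule'' reading comes from jobs with $w_j = p_j = 0$.
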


To efficiently solve \PmSwjCj{m} for $m \in \{2, 3\}$, we use the meet-in-the-middle paradigm. 
For $X \subseteq [n]$, let $\cost(X)$ be the minimum weighted completion time for scheduling the jobs $X$ on a single machine. 
In the following lemma, we show how to combine the cost of two disjoint subset of jobs.

\begin{lemma}\label{lem:mitm-cost-breakup}
    Consider an instance of \PSwjCj where the $n$ jobs satisfy $\frac{p_1}{w_1} \le \frac{p_2}{w_2} \le \cdots \le \frac{p_n}{w_n}$. Let $a \in [n]$, $A = [a]$, and $B = [n] \setminus A$. Consider partitions $X_1 \uplus \cdots \uplus X_m = A$ and $Y_1 \uplus \cdots \uplus Y_m = B$.
    Then, the minimum weighted completion time of a schedule where the jobs $X_i \cup Y_i$ are scheduled on the $i$-th machine for $i \in [m]$, is given by
    \[\sum \limits_{i \in [m]} \cost(X_i \cup Y_i) = \sum \limits_{i \in [m]}\cost(X_i) + z_0 + \sum_{i \in [m-1]}z_i \cdot p(X_i),\]
    where $z_0 = \sum_{i \in [m]}\cost(Y_i) + \left(w(B) - \sum_{r \in [m-1]} w(Y_r)\right) \cdot p(A)$, and for every $i \in [m-1]$, $z_i = w(Y_i) - w(B) + \sum_{r \in [m - 1]}w(Y_r)$.
\end{lemma}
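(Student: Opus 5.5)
The plan is to first get a closed form for the cost of a single machine whose jobs come from both halves, and then sum over the machines and eliminate the variables of the $m$-th machine. The fixed order $\frac{p_1}{w_1} \le \cdots \le \frac{p_n}{w_n}$ together with Smith's rule (\cref{lem:wjCj-jobs-sorted}) lets us assume that on each machine the jobs run in index order. Since every index in $A=[a]$ is smaller than every index in $B$, on machine $i$ all jobs of $X_i$ run first, in Smith order, and then all jobs of $Y_i$, in Smith order.

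\textbf{Step 1: one machine.} The jobs of $X_i$ finish exactly as when $X_i$ is scheduled alone, so they contribute $\cost(X_i)$. Each job of $Y_i$ is delayed by exactly $p(X_i)$ compared with scheduling $Y_i$ alone. Hence
\[
\cost(X_i \cup Y_i) = \cost(X_i) + \cost(Y_i) + w(Y_i)\cdot p(X_i).
\]
The only subtle point is ties in the ratio $p_j/w_j$ across the boundary of $A$ and $B$, including the case $w_j=0$ with ratio $\infty$. Swapping adjacent jobs with equal ratio does not change the objective, so the index order is still optimal. Therefore $\cost(X_i\cup Y_i)$, a minimum over all orders, equals the cost of the index order.

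\textbf{Step 2: sum and eliminate machine $m$.} Summing Step 1 over $i\in[m]$ gives
\[
\sum_{i\in[m]}\cost(X_i)+\sum_{i\in[m]}\cost(Y_i)+\sum_{i\in[m]} w(Y_i)\,p(X_i).
\]
Because the $X_i$ partition $A$ and the $Y_i$ partition $B$,
\[
p(X_m)=p(A)-\sum_{i\in[m-1]}p(X_i), \qquad w(Y_m)=w(B)-\sum_{r\in[m-1]}w(Y_r).
\]
Substituting the first identity gives
\[
w(Y_m)\,p(X_m) = w(Y_m)\,p(A) - \sum_{i\in[m-1]} w(Y_m)\,p(X_i).
\]
So the coefficient of $p(X_i)$ for $i\in[m-1]$ is $w(Y_i)-w(Y_m)$. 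After substituting the second identity, this is exactly $z_i = w(Y_i) - w(B) + \sum_{r \in [m-1]}w(Y_r)$. The remaining terms $\sum_{i\in[m]}\cost(Y_i)+w(Y_m)\,p(A)$ form exactly $z_0$.

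The only step needing care is the tie/ordering argument in Step 1. The rest is bookkeeping.
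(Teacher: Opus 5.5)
Your proposal is correct and follows essentially the same route as the paper. Both use Smith's rule to run $X_i$ before $Y_i$ in index order, derive the single-machine identity $\cost(X_i\cup Y_i)=\cost(X_i)+\cost(Y_i)+w(Y_i)\,p(X_i)$, and then eliminate $p(X_m)$ and $w(Y_m)$ via the partition identities. Your explicit exchange argument for ties in $p_j/w_j$, including ratio $\infty$, is a small bit of care the paper leaves implicit.
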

\begin{proof}
    Consider the optimal schedule that processes jobs $X_i \cup Y_i$ on the $i$-th machine. 
    Then by Smith's rule (\cref{lem:wjCj-jobs-sorted}), the $i$-th machine first processes jobs in $X_i$ in order of their indices, and then processes jobs in $Y_i$ in order of their indices. 

    Let $C_j$ be the completion time of job $j \in [n]$ in the optimal schedule and denote by $C^{(i)}_j \coloneq \sum \limits_ {k \in Y_i, k \leq j} p_{k}$ the completion time of job $j \in Y_i$ in an optimal schedule of $Y_i$ on a single machine. 
    In particular, $\cost(Y_i) = \sum_{j \in Y_i} w_j C^{(i)}_j$. 
    Since the jobs in $X_i$ complete at time $p(X_i)$, we have $C_j = p(X_i) + C^{(i)}_j$ for every job $j \in Y_i$. 
    Therefore, the total cost of scheduling $X_i \cup Y_i$ on a single machine amounts to
    \begin{align*}
        \cost(X_i \cup Y_i) &= \cost(X_i) + \sum \limits_{j \in Y_i} w_j \left(C^{(i)}_j + p(X_i)\right) \\
                            &= \cost(X_i) + \sum \limits_{j \in Y_i} w_j C^{(i)}_j + p(X_i) \cdot \sum \limits_{j \in Y_i} w_j \\
                            &= \cost(X_i) + \cost(Y_i) + p(X_i) \cdot w(Y_i).
    \end{align*}
    This implies that the minimum weighted completion time of the entire schedule is
    \begin{align*}
        \sum \limits_{i \in [m]}\cost(X_i \cup Y_i) =& \sum \limits_{i \in [m]}\left(\cost(X_i) + \cost(Y_i) + p(X_i) \cdot w(Y_i)\right) \\ 
        =& \sum \limits_{i \in [m]}\cost(X_i) + \sum \limits_{i \in [m]} \cost(Y_i) + \sum \limits_{i \in [m - 1]} p(X_i)w(Y_i) \\
        &\quad + \left(p(A) - \sum \limits_{i \in [m-1]} p(X_i)\right) \cdot \left(w(B) - \sum \limits_{r \in [m-1]} w(Y_r)\right) \\
        =& \sum \limits_{i \in [m]}\cost(X_i) + \sum \limits_{i \in [m - 1]} \left( w(Y_i) - w(B) + \sum_{r \in [m - 1]}w(Y_r) \right) \cdot p(X_i)\\
        &\quad + \sum \limits_{i \in [m]}\cost(Y_i) + \left(w(B) - \sum \limits_{r \in [m-1]} w(Y_r)\right) \cdot p(A) \\
        =& \sum \limits_{i \in [m]}\cost(X_i) + z_0 + \sum_{i \in [m-1]}z_i \cdot p(X_i). \qedhere
    \end{align*}
\end{proof}

Now, we are ready to prove \cref{lem:mitm-wjCj}.
\begin{lemma}\label{lem:mitm-wjCj}
    \PmSwjCj{m} can be solved in $\Os(m^{n/2})$ time, for $m \in \{2, 3\}$.
\end{lemma}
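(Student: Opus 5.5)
We follow the meet-in-the-middle plan from the technical overview and use \cref{lem:mitm-cost-breakup} as the algebraic core. First, sort the jobs by $p_j/w_j$, which is justified by \cref{lem:wjCj-jobs-sorted}. Set $a = \floor{n/2}$, $A=[a]$ and $B=[n]\setminus A$. By Smith's rule, every schedule is determined up to cost by the partition of $[n]$ among the machines, and each such partition splits uniquely into partitions $X_1\uplus\cdots\uplus X_m=A$ and $Y_1\uplus\cdots\uplus Y_m=B$. The optimum is therefore $\min_{X,Y}\sum_{i}\cost(X_i\cup Y_i)$. By \cref{lem:mitm-cost-breakup}, this equals
\[
\min_{X}\Bigl(\sum_{i\in[m]}\cost(X_i) + \min_{Y}\bigl(z_0(Y) + \textstyle\sum_{i\in[m-1]} z_i(Y)\,p(X_i)\bigr)\Bigr).
\]
Here $z_0,\dots,z_{m-1}$ depend only on $Y$, and $p(X_1),\dots,p(X_{m-1})$ depend only on $X$. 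This inner minimum is a linear-programming query in $d=m$ dimensions. For each partition $Y$, form the point $r(Y)=(z_0(Y),z_1(Y),\dots,z_{m-1}(Y))\in\Z^m$. For each partition $X$, form the query vector $q(X)=(1,p(X_1),\dots,p(X_{m-1}))$. Then the inner minimum equals $\min_{r\in\mathcal P} q\cdot r$ with $\mathcal P=\{r(Y)\}$.

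The algorithm runs as follows.
\begin{enumerate}
\item Enumerate all $m^{|B|}\le m^{\lceil n/2\rceil}$ partitions $Y$ of $B$. For each, compute $\cost(Y_i)$ in polynomial time by Smith's rule, along with $w(Y_i)$, $w(B)$ and $p(A)$, which gives $r(Y)$.
\item Since $m\in\{2,3\}$, we have $d=m\in\{2,3\}$, so \cref{lem:lp_query} applies. Build its data structure on $\mathcal P$ in time $\Oh(N\log N)$ with $N=m^{n/2}$, which is $\Os(m^{n/2})$.
\item Enumerate all $m^{a}$ partitions $X$ of $A$. For each, compute $\sum_i\cost(X_i)$ and $q(X)$, query the structure in $\Oh(\log N)=\poly(n)$ time, and add $\sum_i\cost(X_i)$ to the answer.
\item Return the minimum over all $X$.
\end{enumerate}
The total time is $\Os(m^{\lceil n/2\rceil})=\Os(m^{n/2})$. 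The numbers involved have $\Oh(L)$ bits, so arithmetic is constant time in the stated RAM model.

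Correctness follows directly from \cref{lem:mitm-cost-breakup}, because the minimum over $Y$ of an expression affine in $r(Y)$ with a fixed coefficient vector is exactly the query answer.

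Only one step needs care. \cref{lem:lp_query} requires the query's first coordinate to be a free variable, whereas our first coefficient is fixed to $1$. This is harmless, since the lemma allows arbitrary integer query vectors.

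For $m=2$, one could instead avoid the data structure by a sort-and-sweep over the lower envelope of lines. However, \cref{lem:lp_query} handles both cases uniformly, so we invoke it for both $m=2$ and $m=3$.
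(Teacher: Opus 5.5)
Your proposal is correct and matches the paper's proof essentially step for step. It uses the same split $A=[\lfloor n/2\rfloor]$, $B=[n]\setminus A$, the same point set of vectors $(z_0,\dots,z_{m-1})$ over all partitions of $B$ fed into the structure of \cref{lem:lp_query}, and the same query $(1,p(X_1),\dots,p(X_{m-1}))$ for each partition of $A$. Your caveat about the first query coordinate being fixed to $1$ is a non-issue, as you note yourself, since the lemma accepts arbitrary integer query vectors.
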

\begin{proof}
    Consider an instance of \PmSwjCj{m} on $m \in \{2, 3\}$ machines and $n$ jobs with processing times $p_1, \dots, p_n$ and weights $w_1, \dots, w_n$. 
    Split the jobs into two halves $A \coloneq [\floor{n/2}]$ and $B \coloneq [n] \setminus A$. 
    Broadly, the algorithm will do the following: 
    guess the optimal partition $X_1 \uplus \cdots \uplus X_m = A$ such that jobs in $X_i$ are processed on the $i$-th machine in the optimal schedule and then use the data structure from \cref{lem:lp_query} to compute the minimum weighted completion time when jobs in $X_i \cup Y_i$ are processed on the $i$-th machine, over all partitions $Y_1 \uplus \cdots \uplus Y_m = B$. The details in the remainder of the proof are presented in the following structure: (i) building a linear representation of the minimum weighted completion time for specific schedules using \cref{lem:mitm-cost-breakup}, (ii) designing an algorithm which leverages this linear representation, (iii) arguing the correctness of the algorithm and (iv) analyzing its running time.
    
    \proofsubparagraph*{Linear representation of the minimum weighted completion time.} Consider partitions $X_1 \uplus \cdots \uplus X_m = A$ and $Y_1 \uplus \cdots \uplus Y_m = B$ such that $X_i 
    \cup Y_i$ is optimally scheduled on the $i$-th machine for $i \in [m]$. By \cref{lem:mitm-cost-breakup}, the minimum weighted completion time of such a schedule is  
    \[ \sum \limits_{i \in [m]} \cost(X_i \cup Y_i) = \sum \limits_{i \in [m]}\cost(X_i) + z_0 + \sum_{i \in [m-1]}z_i \cdot p(X_i),\]
    where $z_0 = \sum_{i \in [m]}\cost(Y_i) + \left(w(B) - \sum_{r \in [m-1]} w(Y_r)\right) \cdot p(A)$, and for every $i \in [m-1]$, $z_i = w(Y_i) - w(B) + \sum_{r \in [m - 1]}w(Y_r)$.
    Notice that $z_0, z_1, \ldots, z_{m-1}$ depend only on $Y_1$, \ldots, $Y_m$, $A$, $B$, but \emph{not} on $X_1$, \dots, $X_m$. 
    Define
    \begin{equation}\label{eq:R-defn}
        \Rr \coloneq \left\{(z_0, z_1, \ldots, z_{m-1}) \mid Y_1 \uplus \cdots \uplus Y_m = B \text{ and }z_0, z_1, \ldots, z_{m-1}\text{ are as defined above}\right\}.
    \end{equation}
    
    Now, for a fixed partition $X_1 \uplus \cdots \uplus X_m = A$ corresponding to a schedule of jobs in $A$ across the machines, we extend such a schedule optimally to also include jobs in $B$. The minimum weighted completion time of this is given by the minimum of $\sum_{i \in [m]}\cost(X_i \cup Y_i)$ over all partitions $Y_1 \uplus \cdots \uplus Y_m = B$. This is equal to
    \begin{align*}
        &\min \limits_{(z_0, z_1, \ldots, z_{m-1}) \in \Rr} \left\{\sum \limits_{i \in [m]}\cost(X_i) + z_0 + \sum_{i \in [m-1]}z_i \cdot p(X_i)\right\}\\
        & \qquad = \sum \limits_{i \in [m]}\cost(X_i) + \min \limits_{(z_0, z_1, \ldots, z_{m-1}) \in \Rr} \left\{\sum_{i = 0}^{m-1}z_i \cdot x_i\right\},
    \end{align*}
    where $x_0 = 1$ and $x_i = p(X_i)$ for all $i \in [m-1]$.

    \proofsubparagraph*{Description of the algorithm.} First, we compute $\Rr$ using \cref{eq:R-defn}.
    Then, we use \cref{lem:lp_query} to set up a data structure $\Dd$ for $m \in \{2,3\}$ dimensions, preprocessed with all elements $(z_0, z_1, \ldots, z_{m-1}) \in \Rr$. Recall that the data structure $\Dd$ is capable of answering linear programming queries of the form $\min_{(z_0, z_1, \ldots, z_{m-1}) \in \Rr} \left\{\sum_{i = 0}^{m-1}z_i \cdot x_i\right\}$ for any query point $(x_0, \ldots, x_{m-1})$.
    We then iterate over all partitions $X_1 \uplus \cdots \uplus X_m = A$, and for each partition we compute the value $\xi = \sum_{i \in [m]}\cost(X_i)$. Next, we set $(x_0 = 1, x_1 = p(X_1), \ldots, x_{m-1} = p(X_{m-1}))$, and query $\Dd$ for the value $q = \min_{(z_0, z_1, \ldots, z_{m-1}) \in \Rr} \left\{\sum_{i = 0}^{m-1}z_i \cdot x_i\right\}$. 
    Finally, we output the minimum value of $(\xi + q)$ over all partitions $X_1 \uplus \cdots \uplus X_m = A$. We present a pseudocode in \cref{alg:mitm-wjCj}.
    
    \begin{algorithm}[!htbp]
        \caption{\PmSwjCj{m} for jobs $i \in [n]$ having weights $w_i$ and processing times $p_i$; $m \in\{2,3\}$}\label{alg:mitm-wjCj}
        \begin{algorithmic}[1]
            \State{Sort jobs $i \in [n]$ in non-decreasing order of $\frac{p_i}{w_i}$.}
            \State{$A \gets [ \floor{n / 2} ]$, $B \gets [n] \setminus A$}
            \State{$\Rr \gets \left\{(z_0, z_1, \ldots, z_{m-1}) \mid Y_1 \uplus \cdots \uplus Y_m = B\right\}$} \Comment{as defined in \cref{eq:R-defn}}
            \State{Use \cref{lem:lp_query} to set up a data structure $\Dd$ with all $(z_0, z_1, \ldots, z_{m-1}) \in \Rr$}
            \State{$\ans \gets \infty$}
            \For{$X_1 \uplus \cdots \uplus X_m = A$}
                \State{$\xi \gets \sum_{i \in [m]}\cost(X_i)$}
                \State{$x_0 \gets 1$, $x_1 \gets p(X_1)$, \ldots, $x_{m-1} \gets p(X_{m-1}$)}
                \State{Query $\Dd$ for $q \gets \min_{(z_0, z_1, \ldots, z_{m-1}) \in \Rr} \left\{\sum_{i = 0}^{m-1}z_i \cdot x_i\right\}$}
                \State{$\ans \gets \min\left(\ans, \xi + q\right)$}
            \EndFor
            \State{\Return $\ans$}
        \end{algorithmic}
    \end{algorithm}

    \proofsubparagraph{Proof of correctness.} Fix a partition $X_1 \uplus \cdots \uplus X_m = A$. The discussion on the mathematical representation of the weighted completion time implies that the computed value $(\xi + q) = \sum_{i \in [m]}\cost(X_i) + \min_{(z_0, z_1, \ldots, z_{m-1}) \in \Rr} \left\{\sum_{i = 0}^{m-1}z_i \cdot x_i\right\}$ is equal to the minimum total weighted completion time of all input jobs, when $X_i$ is scheduled completely on the $i$-th machine. The algorithm outputs the minimum value of $(\xi + q)$ over all choices of $X_1 \uplus \cdots \uplus X_m = A$, and hence correctly outputs the minimum weighted completion time for the input instance.

    \proofsubparagraph{Running-time analysis.} Observe that $|\Rr| \le m^{|B|} \leq m^{n/2 + 1}$, as this bounded by the number of partitions $Y_1 \uplus \cdots \uplus Y_m = B$.
    Moreover, $\Rr$ can be computed in time $\Os(m^{n/2})$. 
    Preprocessing the data structure $\Dd$ with all $(z_0, \ldots, z_{m-1}) \in \Rr$ takes time $\Oh\left(|\Rr|\log|\Rr|\right) = \Os(m^{n/2})$, and any query can be answered in $\Oh\left(\log|\Rr|\right) = {\Oh(n)}$ time (\cref{lem:lp_query}). In total, the number of queries made to $\Dd$ is $\Oh(m^{|A|})$, one for each partition $X_1 \uplus \cdots \uplus X_m = A$. Hence, a total time of $\Oh(n) \cdot \Oh(m^{|A|}) = \Os(m^{n/2})$ is spent over all queries. Therefore, the algorithm takes a total time of $\Os(m^{n/2})$.  
\end{proof}

We are now ready to prove \cref{thm:SumwjCj}. 
\begin{proof}[Proof of \cref{thm:SumwjCj}]
    Consider an instance of \PSwjCj, i.e.~consider $n$ jobs with processing times $p_1, \dots, p_n$ and weights $w_1, \dots, w_n$, and let $m$ be the number of available identical parallel machines. 
    Note that if $m \geq n$, then each job can be processed at time $0$ on a distinct machine so that its completion time is precisely $p_j$. Thus, the optimal weighted completion time is simply $\sum_{i=1}^n w_i p_i$. 
    Hence, suppose that $m < n$. 
    We construct a DP table such that for any subset $S \subseteq [n]$ and $i \in [m]$, the table entry $\DP[i, S]$ contains the minimum total weighted completion time when the jobs in $S$ are scheduled on $i$ machines. 

    For every $S \subseteq [n]$, initialize $\DP[1, S]$ using \cref{lem:wjCj-jobs-sorted}, and initialize $\DP[2, S]$ and $\DP[3, S]$ using \cref{lem:mitm-wjCj}. 
    By \cref{lem:dp-correct-recursion}, used with $j = 1$, the following recurrence holds for any $i \geq 4$ and $S \subset [n]$.
    \[\DP[i, S] = \min \limits_{S' \in \binom{S}{\le |S|/i}} \{ \DP[1, S'] + \DP[i - 1, S \setminus S'] \}.\]
    We can therefore iterate over $i \in \{4, \dots, m\}$ and $S \subset [n]$ in increasing order, and compute all table entries $\DP[i, S]$ using the above recurrence. In the end, we output $\DP[m, [n]]$. 
    
    The computed value of $\DP[m, [n]]$ correctly contains the minimum weighted completion time of scheduling all jobs on $m$ machines by \cref{lem:wjCj-jobs-sorted,lem:mitm-wjCj,lem:dp-correct-recursion} and thus our algorithm is correct.
    To analyze the running time, note that the initialization for $i=1$ takes time $\Os(2^n)$ and by \cref{lem:mitm-wjCj}, the initialization for $i=2$ and $i=3$ takes time at most
    \[\sum \limits_{S \subseteq [n]} \Os\left(3^{|S|/2}\right) = \Os \left(\sum \limits_{r = 0}^{n} \binom{n}{r}\cdot \left(\sqrt{3}\right)^{r}\right) = \Os\left(\left(1+\sqrt{3}\right)^n\right) = \Oh(2.733^n).\]
    Each transition in the above recurrence takes time $\Oh(\binom{S}{\le|S|/i})$ and thus the total time to compute $\DP[i, S]$ for all $S \subseteq [n]$ and $i \in \{4, \dots, m\}$ is at most
    \begin{align*}
        \sum_{i=4}^m \sum \limits_{S \subseteq [n]} \Oh\left(\binom{S}{\le|S|/i}\right) 
        &\le \sum \limits_{S \subseteq [n]} \Os\left(\binom{S}{\le|S|/4}\right) \\
        &\leq \sum \limits_{S \subseteq [n]} \Os\left(1.7548^{|S|}\right) \\
        &= \sum \limits_{r = 0}^{n}\binom{n}{r}\cdot \Os\left(1.7548^r\right) = \Os\left(2.7548^n\right).
    \end{align*}
    The second inequality uses the bound $\binom{|S|}{\le |S|/4} = \Oh(1.7548^{|S|})$ from \cref{prop:binom-to-entropy}. 
    In total, the algorithm runs in time $\Oh(2.733^n) + \Os(2.7548^n) \le \Oh(2.755^n)$.
\end{proof}

\subsection{The Proof of Theorem~\ref{thm:PmSumwjCj}}
Now, we proceed to the proof of \cref{thm:PmSumwjCj} that we restate for convenience.
\PmSumwjCj*

Note that \cref{lem:mitm-wjCj} already proves one of the cases of \cref{thm:PmSumwjCj}. We now prove the remaining cases separately as \cref{lem:P4wjCj,lem:P5wjCj,lem:P6wjCj}, which immediately yields \cref{thm:PmSumwjCj}.

\begin{restatable}{lemma}{PfourwjCj}\label{lem:P4wjCj}
    \PmSwjCj{4} can be solved in $\Oh(2.389^{n})$ time.
\end{restatable}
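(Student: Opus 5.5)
Sort jobs by Smith's rule (\cref{lem:wjCj-jobs-sorted}), fix $\alpha \approx 0.628$, and set $A \coloneq [\lfloor \alpha n\rfloor]$, $B \coloneq [n]\setminus A$. By \cref{lem:mitm-cost-breakup} with $m=4$, for fixed partitions $X_1\uplus\dots\uplus X_4=A$ and $Y_1\uplus\dots\uplus Y_4=B$ the cost equals
\[
\textstyle\sum_{i}\cost(X_i) + z_0 + z_1 p(X_1) + z_2 p(X_2) + z_3 p(X_3),
\]
where $z_0,\dots,z_3$ depend only on the $Y$'s. This is a $4$-dimensional linear query, and \cref{lem:lp_query} only handles $d\le 3$. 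The machines are symmetric, so I will relabel them so that $|X_1|\le|X_2|\le|X_3|\le|X_4|$; then $|X_1|\le |A|/4$. For each fixed $X_1\in\binom{A}{\le |A|/4}$, the term $z_1 p(X_1)$ is a known scalar multiple of $z_1$. I therefore fold it into the constant coordinate: for every partition of $B$ I store the point $(z_0+z_1p(X_1),\,z_2,\,z_3)$ in a $3$-dimensional data structure $\Dd_{X_1}$. The query for a partition with this $X_1$ is then $(1,p(X_2),p(X_3))$, which \cref{lem:lp_query} answers in time $\Oh(\log 4^{|B|})=\Oh(n)$.

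\textbf{Correctness.} Correctness follows as in \cref{lem:mitm-wjCj}. Every optimal schedule induces some partition of $A$. After relabeling machines (and correspondingly the $Y_i$'s, which range over all partitions of $B$), this partition has $|X_1|$ minimal. Hence the minimum of $\sum_i\cost(X_i)+q$ over all enumerated partitions of $A$, where $q$ is the answer from $\Dd_{X_1}$, equals the optimum.

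\textbf{Running time.}
\begin{itemize}
\item \emph{Preprocessing.} We build one structure per small $X_1$, each on $4^{|B|}$ points at cost $\Os(4^{|B|})$. Using \eqref{prop:binom-to-entropy}, there are $\binom{|A|}{\le |A|/4}\le \Oh(1.7548^{|A|})$ such sets, so preprocessing costs $\Os(1.7548^{|A|}\cdot 4^{|B|})$.
\item \emph{Queries.} There is one query per partition of $A$, so querying costs $\Os(4^{|A|})$.
\end{itemize}
Balancing the two terms, $1.7548^{\alpha}4^{1-\alpha}=4^{\alpha}$, gives $\alpha=\log 4/(2\log 4-\log 1.7548)\approx 0.628$. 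Then $4^{\alpha}\approx 2.388$, which yields $\Oh(2.389^n)$ once we round up to absorb the polynomial factors.

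\textbf{Main obstacle.} The main obstacle is the preprocessing cost. Computing the point sets naively per $X_1$ is fine, since the vectors $(z_0,z_1,z_2,z_3)$ for all partitions of $B$ can be computed once in $\Os(4^{|B|})$ and then shifted for each $X_1$. The delicate part is the numerical check that the optimized exponent, with $\alpha$ rounded to $0.628$ and the $\log$ factors absorbed, stays strictly below $2.389$. I would verify this by evaluating $\max(4^{0.628},\,1.7548^{0.628}4^{0.372})$ directly.
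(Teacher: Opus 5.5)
Your proposal is correct and matches the paper's proof step for step. It uses the same Smith-ordered split with $\alpha=0.628$, the same normalization $|X_1|\le|X_2|\le|X_3|\le|X_4|$, and the same folding of $z_1p(X_1)$ into the constant coordinate to get $3$-dimensional structures via \cref{lem:lp_query}, giving the same $\Os(1.7548^{|A|}4^{|B|}+4^{|A|})$ balance. The only difference is cosmetic: the paper builds one structure per value $t=p(X_1)\in\Zz$ rather than per set $X_1$, which changes nothing asymptotically. At $\alpha=0.628$ the dominant term is $4^{0.628}\approx 2.3883<2.389$, as you planned to check.
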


\begin{proof}
    Consider an instance of \PmSwjCj{4}, where the input consists of $n$ jobs with processing times $p_1, \dots, p_n$ and weights $w_1, \dots, w_n$.
    Let $\alpha \coloneqq 0.628$, the choice of $\alpha$ will become clear later in the proof, and let $A = [\floor{\alpha n}]$, and $B = [n] \setminus A$.
    Broadly, the algorithm does the following.
    Without loss of generality, assume that the first machine schedules the smallest number of jobs from $A$.
    Let $X_1$ be the set of jobs scheduled on the first machine. This implies $|X_1| \le |A|/4$. For every $t \in \{p(X_1) \mid X_1 \subseteq A, |X_1| \le |A|/4\}$, we use \cref{lem:lp_query} to set up a data structure $\Dd_t$ which is preprocessed to contain the relevant information for every partition $Y_1 \uplus Y_2 \uplus Y_3 \uplus Y_4 = B$.
    Next, we iterate over all partitions $X_1 \uplus X_2 \uplus X_3 \uplus X_4$ with $|X_1| \le |X_2| \le |X_3| \le |X_4|$, and query $\Dd_{p(X_1)}$ for the minimum weighted completion time when $X_i \cup Y_i$ are the jobs scheduled on the $i$-th machine, over all partitions $Y_1 \uplus Y_2 \uplus Y_3 \uplus Y_4 = B$. Finally, we output the minimum weighted completion time over all choices of $X_1 \uplus X_2 \uplus X_3 \uplus X_4 = A$. 
    The details in the remainder of the proof are presented in the following structure: (i) building a linear representation of the minimum weighted completion time for specific schedules using \cref{lem:mitm-cost-breakup}, (ii) designing an algorithm which leverages this linear representation, (iii) arguing the correctness of the algorithm and (iv) analyzing its running time.

    \proofsubparagraph*{Linear representation of the minimum weighted completion time.} Consider partitions $X_1 \uplus X_2 \uplus X_3 \uplus X_4 = A$ and $Y_1 \uplus Y_2 \uplus Y_3 \uplus Y_4 = B$ such that $X_i 
    \cup Y_i$ is optimally scheduled on the $i$-th machine for $i \in [4]$. By \cref{lem:mitm-cost-breakup}, the minimum weighted completion time of such a schedule amounts to  
    \begin{align*}
        \sum \limits_{i \in [4]} \cost(X_i \cup Y_i) &= \sum \limits_{i \in [4]}\cost(X_i) + z_0 + \sum_{i \in [3]}z_i \cdot p(X_i)\\
        &=\sum \limits_{i \in [4]}\cost(X_i) + \left((z_0 + z_1 \cdot p(X_1)) + z_2 \cdot p(X_2) + z_3 \cdot p(X_3)\right),
    \end{align*}
    where $z_0 = \sum_{i \in [4]}\cost(Y_i) + \left(w(B) - \sum_{r \in [3]} w(Y_r)\right) \cdot p(A)$, and for every $i \in [3]$, $z_i = w(Y_i) - w(B) + \sum_{r \in [3]}w(Y_r)$.
    Let $t = p(X_1)$. This gives us 
    \begin{align*}
        \sum \limits_{i \in [4]} \cost(X_i \cup Y_i) &= \sum \limits_{i \in [4]}\cost(X_i) + \left((z_0 + z_1 \cdot t) + z_2 \cdot p(X_2) + z_3 \cdot p(X_3)\right)\\
        &= \sum \limits_{i \in [4]}\cost(X_i) + \left(\sum_{i = 1}^{3}z'_i \cdot x'_i\right),
    \end{align*}
    where $z'_1 = z_0 + z_1\cdot t$, $z'_2 = z_2$, $z'_3 = z_3$ and $x'_1 = 1$, $x'_2 = p(X_2)$, $x'_3 = p(X_3)$. Notice that $z'_1, z'_2, z'_3$ depend on $Y_1$, $Y_2$, $Y_3$, $Y_4$, $t = p(X_1)$, $A$, $B$; but \emph{not} on $X_2$, $X_3$, $X_4$.
    
    Without loss of generality, assume $|X_1| \le |X_2| \le |X_3| \le |X_4|$. Therefore, $|X_1| \le |A|/4$. Let $\Zz = \{p(X) \mid X \subseteq A, |X| \le |A|/4\}$ and thus $t = p(X_1) \in \Zz$. 
    For every $t \in \Zz$, define   
    \begin{equation}\label{eq:Rt-defn}
        \Rr_t = \left\{(z'_1, z'_2, z'_3) \mid Y_1 \uplus Y_2 \uplus Y_3 \uplus Y_4 = B, \text{ and }z'_1,z'_2,z'_3\text{ as defined above}\right\}.
    \end{equation}
    
    Now, for a fixed partition $X_1 \uplus X_2 \uplus X_3 \uplus X_4 = A$ with $|X_1| \le |X_2| \le |X_3| \le |X_4|$, corresponding to a schedule of jobs in $A$ across the machines, we extend such a schedule optimally to also include jobs in $B$. Let $t = p(X_1)$. The minimum total weighted completion time of this is given by the minimum of $\sum_{i \in [4]}\cost(X_i \cup Y_i)$ over all partitions $Y_1 \uplus Y_2 \uplus Y_3 \uplus Y_4 = B$. This is equal to
    \[\min \limits_{(z'_1, z'_2, z'_3) \in \Rr_t} \left\{\sum \limits_{i \in [4]}\cost(X_i) + \sum_{i=1}^{3} z'_i \cdot x'_i\right\}=\sum \limits_{i \in [4]}\cost(X_i) + \min \limits_{(z'_1, z'_2, z'_3) \in \Rr_t} \left\{\sum_{i=1}^{3} z'_i \cdot x'_i\right\}\]
    where $x'_1 = 1$, $x'_2 = p(X_2)$ and $x'_3 = p(X_3)$.

    \proofsubparagraph*{Description of the algorithm.} First, we compute $\Zz = \{p(X) \mid X \subseteq A, |X| \le |A|/4\}$ and then compute $\Rr_t$ using \cref{eq:Rt-defn} for all $t \in \Zz$. For every $t \in \Zz$ do the following: using \cref{lem:lp_query}, set up a data structure $\Dd_t$ supporting linear programming queries on $3$ dimensions, preprocessed with the elements $(z'_1, z'_2, z'_3) \in \Rr_t$. Note that $\Dd_t$ is capable of answering linear programming queries of the form $\min_{(z'_1, z'_2, z'_3) \in \Rr_t} \left\{\sum_{i = 1}^{3}z'_i \cdot x'_i\right\}$ for any queried point $(x'_1, x'_2, x'_3)$.
    We now iterate over all partitions $X_1 \uplus X_2 \uplus X_3 \uplus X_4 = A$ with $|X_1| \le |X_2| \le |X_3| \le |X_4|$, and for each partition do the following. Let $t = p(X_1)$. We compute the value $\xi = \sum_{i \in [4]}\cost(X_i)$. Next, we set $(x'_1 = 1, x'_2 = p(X_2), x'_3 = p(X_3))$, and query $\Dd_t$ for the value $q = \min_{(z'_1, z'_2, z'_3) \in \Rr_t} \left\{\sum_{i=1}^{3} z'_i \cdot x'_i\right\}$. 
    Finally, we output the minimum value of $(\xi + q)$ over all partitions $X_1 \uplus X_2 \uplus X_3 \uplus X_4 = A$. We present pseudocode of the approach in \cref{alg:mitm-P4-wjCj}.

    \begin{algorithm}[!htbp]
        \caption{\PmSwjCj{4} for jobs $i \in [n]$ having weights $w_i$ and processing times $p_i$}\label{alg:mitm-P4-wjCj}
        \begin{algorithmic}[1]
            \State{Sort jobs $i \in [n]$ in a non-decreasing order of $\frac{p_i}{w_i}$.}
            \State{$\alpha \gets 0.628$, $A \gets [\floor{\alpha n}]$, $B \gets [n] \setminus A$}
            \State{$\Zz \gets \{p(X) \mid X \subseteq A, |X| \le |A|/4\}$}
            \For{$t \in \Zz$}
                \State{$\Rr_t = \left\{(z'_1, z'_2, z'_3) \mid Y_1 \uplus Y_2 \uplus Y_3 \uplus Y_4 = B \right\}$} \Comment{as defined in \cref{eq:Rt-defn}}
                \State{Use \cref{lem:lp_query} to set up a data structure $\Dd_t$ from $\Rr_t$.}
            \EndFor
            \State{$\ans \gets \infty$}
            \For{$X_1 \uplus X_2 \uplus X_3 \uplus X_4 = A$ with $|X_1| \le |X_2| \le |X_3| \le |X_4|$}
                \State{$t \gets p(X_1)$}
                \State{$\xi \gets \sum_{i \in [4]}\cost(X_i)$}
                \State{$x'_1 \gets 1$, $x'_2 \gets p(X_2)$, $x'_3 \gets p(X_3)$}
                \State{Query $\Dd_t$ for $q \gets \min_{(z'_1, z'_2, z'_3) \in \Rr_t} \left\{\sum_{i=1}^{3} z'_i \cdot x'_i\right\}$}
                \State{$\ans \gets \min\left(\ans, \xi + q\right)$}
            \EndFor
            \State{\Return $\ans$}
        \end{algorithmic}
    \end{algorithm}  

    \proofsubparagraph*{Proof of correctness.}
    Fix a partition $X_1 \uplus X_2 \uplus X_3 \uplus X_4 = A$ and set $t = p(X_1)$. 
    The discussion on the mathematical representation of the weighted completion time implies that the computed value $(\xi + q) = \sum_{i \in [4]}\cost(X_i) + \min_{(z'_1, z'_2, z'_3) \in \Rr_t} \left\{\sum_{i=1}^{3} z'_i \cdot x'_i\right\}$ is equal to the minimum total weighted completion time when $X_i$ is scheduled completely on the $i$-th machine. The algorithm outputs the minimum value of $(\xi + q)$ over all choices of $X_1 \uplus X_2 \uplus X_3 \uplus X_4 = A$, and hence correctly outputs the minimum weighted completion time for the input instance.
    
    \proofsubparagraph*{Running-time analysis.} First, observe that $|\Zz| \le \binom{|A|}{\le|A|/4} \le \Os(1.7548^{|A|})$ by \cref{prop:binom-to-entropy}, and hence $\Zz$ can be computed in time $\Os(1.7548^{|A|})$. For every $t \in \Zz$, $|\Rr_t| \le 4^{|B|}$ and $\Rr_t$ can be computed in time $\Os(4^{|B|})$. Moreover, for every $t \in \Zz$, preprocessing $\Dd_t$ takes time $\Oh\left(|\Rr_t|\log|\Rr_t|\right) = \Os(4^{|B|})$ (\cref{lem:lp_query}). The total preprocessing time across all $t \in \Zz$ is $|\Zz| \cdot \Os(4^{|B|}) = \Os(1.7548^{|A|} \cdot 4^{|B|})$. For any $t \in \Zz$, a linear programming query to $\Dd_t$ takes time $\Oh\left(\log|\Rr_t|\right) = {\Oh(n)}$ (\cref{lem:lp_query}). In total, the algorithm makes a total of $\Os(4^{|A|})$ queries to the data structures $\Dd_t$, over all $t \in \Zz$. Hence these queries take a total time of $\Oh(n) \cdot \Os(4^{|A|}) = \Os(4^{|A|})$.
    The total running time of the entire algorithm is therefore $\Os\left(1.7548^{|A|} \cdot 4^{|B|} + 4^{|A|}\right) = \Os\left(1.7548^{\alpha n} \cdot 4^{(1-\alpha)n} + 4^{\alpha n}\right)$. This is minimized when $1.7548^{\alpha n} \cdot 4^{(1-\alpha)n} = 4^{\alpha n}$, i.e.~when $\alpha \approx 0.628$. By our choice of $\alpha = 0.628$ this gives us a running time of $\Os\left(1.7548^{\alpha n} \cdot 4^{(1-\alpha)n} + 4^{\alpha n}\right) \le \Oh(2.389^n)$.
\end{proof}

To obtain algorithms solving \PmSwjCj{m} for $m \in \{5, 6\}$ in time faster than $\Oh(2.755^n)$, we crucially use the equalities in \cref{lem:dp-correct-recursion} with $j = 3$ as follows.
We use a dynamic program similar to the one in the proof of \cref{thm:SumwjCj}.
For any subset $S \subseteq [n]$ and $i \in [m]$, let the table entry $\DP[i, S]$ contains the minimum total weighted completion time when the jobs in $S$ are scheduled on $i$ machines.

\begin{restatable}{lemma}{PfivewjCj}\label{lem:P5wjCj}
   \PmSwjCj{5} can be solved in $\Oh(2.726^{n})$ time.
\end{restatable}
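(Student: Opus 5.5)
We follow the outline from the technical overview and use \cref{lem:dp-correct-recursion} with $i=5$ and $j=3$:
\[\DP[5,[n]] = \min_{S' \in \binom{[n]}{\le 3n/5}} \{\DP[3,S'] + \DP[2,[n]\setminus S']\}.\]
This identity says that only two kinds of table entries are needed: $\DP[2,T]$ for every $T\subseteq[n]$, and $\DP[3,S]$ for subsets with $|S|\le 3n/5$. Once these are stored, the final value is a single minimum over at most $2^n$ candidates $S'$, which takes $\Os(2^n)$ time. Correctness is then immediate from \cref{lem:dp-correct-recursion}, together with the correctness of \cref{lem:mitm-wjCj}, which we apply to the restricted instances (a subset of the jobs is again an instance of the same form).

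\textbf{Computing the entries.} For each $T\subseteq[n]$ we run \cref{lem:mitm-wjCj} with $m=2$ on the jobs in $T$. This costs $\Os(2^{|T|/2})$ per set, so $\Os(\sum_r \binom{n}{r}\sqrt2^{\,r}) = \Os((1+\sqrt2)^n)=\Os(2.415^n)$ in total. For each $S$ with $|S|\le 3n/5$ we run \cref{lem:mitm-wjCj} with $m=3$, which costs $\Os(3^{|S|/2})$ per set. Summed over these sets this is
\[\Os\Bigl(\sum_{r=0}^{\lfloor 0.6n\rfloor}\binom{n}{r}(\sqrt3)^r\Bigr).\]

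\textbf{Bounding the total.} \cref{lem:0.6-binom} bounds this sum by $\Oh(2.726^n)$, but the polynomial factor hidden in $\Os$ still has to be absorbed. The slack is available: the proof of \cref{lem:0.6-binom} actually gives each term at most $2.7254^n$, so multiplying by a polynomial in $n$ still stays within $\Oh(2.726^n)$. The $\DP[2,\cdot]$ computation, costing $\Os(2.415^n)$, and the final combination, costing $\Os(2^n)$, are dominated by this.

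\textbf{Main obstacle.} The only delicate point is justifying the restriction $|S'|\le 3n/5$. We take $S'$ to be the jobs on the three machines with the fewest jobs, which is exactly the $j=3$ case of \cref{lem:dp-correct-recursion}. This restriction is what keeps the three-machine base cases from reaching $(1+\sqrt3)^n\approx 2.733^n$. It also leaves enough slack in the constant for the polynomial overhead to be absorbed, giving the claimed $\Oh(2.726^n)$.
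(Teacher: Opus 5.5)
Your proposal is correct and matches the paper's proof. It uses the same $i=5$, $j=3$ instance of \cref{lem:dp-correct-recursion}, the same tables ($\DP[2,\cdot]$ on all subsets via \cref{lem:mitm-wjCj}, and $\DP[3,\cdot]$ only on sets of size at most $0.6n$, bounded via \cref{lem:0.6-binom}), and the same $\Os(2^n)$ final minimization. Your remark that the per-term bound of $2.7254^n$ from the proof of \cref{lem:0.6-binom} absorbs the polynomial factor hidden in $\Os$ is a small point of care that the paper leaves implicit.
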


\begin{proof}
    For every $S \subseteq [n]$, initialize $\DP[2, S]$ using \cref{lem:mitm-wjCj}. Moreover, for every $S \subseteq [n]$ with $|S| \le 0.6n $, initialize $\DP[3, S]$ using \cref{lem:mitm-wjCj}.
    By \cref{lem:dp-correct-recursion} with $i= 5$, $j = 3$ and $S = [n]$ the following recurrence holds.
    \[\DP[5, [n]] = \min \limits_{S \in \binom{[n]}{\le 3n/5}} \{\DP[3, S] + \DP[2, [n] \setminus S]\} = \min \limits_{S \in \binom{[n]}{\le 0.6n }} \{\DP[3, S] + \DP[2, [n] \setminus S]\}.\]
    With the values of $\DP[2, S]$ for $S \subseteq [n]$ and $\DP[3, S]$ for $S \in \binom{[n]}{\le 0.6n}$ already available, compute the value of $\DP[5, [n]]$ by enumerating all $S \in \binom{n}{\le 0.6n }$ and taking the minimum value of $\DP[3, S] + \DP[2, [n] \setminus S]$. Finally,  output this computed value of $\DP[5, [n]]$. 
    
    The correctness of the algorithm follows from \cref{lem:mitm-wjCj,lem:dp-correct-recursion}. To analyze the running time, note that the initialization for $i=2$ and all $S \subseteq [n]$ takes time at most
    \[\sum \limits_{S \subseteq [n]} \Os\left(2^{|S|/2}\right) = \Os \left(\sum \limits_{r = 0}^{n} \binom{n}{r}\cdot \left(\sqrt{2}\right)^{r}\right) = \Os\left(\left(1+\sqrt{2}\right)^n\right) \le \Oh(2.4143^n).\] 
    The initialization for $i = 3$ and all $S \in \binom{[n]}{\le 0.6n}$ takes time at most 
    \[\sum \limits_{S \in \binom{[n]}{\le 0.6n }} \Os\left(3^{|S|/2}\right) = \Os \left(\sum \limits_{r = 0}^{\floor{0.6n} } \binom{n}{r} \cdot \left(\sqrt{3}\right)^{r}\right) \le \Oh(2.726^n).\]
    The last inequality is due to \cref{lem:0.6-binom}. Finally, the computation of $\DP[5, [n]]$ using the above recurrence takes time $\Os(2^n)$. 
    In total, the algorithm runs in time $\Oh(2.4143^n) + \Oh(2.726^n) + \Os(2^n) = \Oh(2.726^n)$.
\end{proof}

\begin{restatable}{lemma}{PsixwjCj}\label{lem:P6wjCj}
    \PmSwjCj{6} can be solved in $\Oh(2.733^{n})$ time.
\end{restatable}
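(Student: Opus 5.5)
We follow the pattern of the proof of \cref{lem:P5wjCj}, now using \cref{lem:dp-correct-recursion} with $i = 6$, $j = 3$ and $S = [n]$. That instance of the lemma gives
\[\DP[6, [n]] = \min_{S \subseteq [n]} \{\DP[3, S] + \DP[3, [n] \setminus S]\}.\]
So it suffices to know $\DP[3, S]$ for every $S \subseteq [n]$ and then take a single minimum over all $2^n$ subsets.

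The algorithm has two steps. First, for every $S \subseteq [n]$, we compute $\DP[3, S]$ by running the meet-in-the-middle algorithm of \cref{lem:mitm-wjCj} on the jobs of $S$ with $m = 3$. Second, we enumerate all $S \subseteq [n]$ and output the minimum of $\DP[3, S] + \DP[3, [n]\setminus S]$.

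Correctness follows directly from \cref{lem:mitm-wjCj}, which gives the correct table entries, and from \cref{lem:dp-correct-recursion}, which gives the recurrence. Any optimal schedule on six machines splits into two groups of three machines, and each group schedules its jobs optimally on three machines.

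The running time is dominated by the first step. Exactly as in the proof of \cref{thm:SumwjCj}, it costs
\[\sum_{S \subseteq [n]} \Os\left(3^{|S|/2}\right) = \Os\left(\sum_{r=0}^n \binom{n}{r} (\sqrt{3})^r\right) = \Os\left((1+\sqrt{3})^n\right).\]
Since $1 + \sqrt{3} \approx 2.7321 < 2.733$, rounding the base up absorbs the polynomial factor, and the bound becomes $\Oh(2.733^n)$. The combination step costs only $\Os(2^n)$.

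There is no real obstacle. The only point needing care is the final constant: the slack between $1+\sqrt{3}$ and $2.733$ is what lets us replace $\Os$ by $\Oh$. Note also that, unlike the case $m = 5$, restricting to $|S| \le n/2$ does not help here. Both $S$ and its complement need their $\DP[3,\cdot]$ entries, so all subsets must be computed.
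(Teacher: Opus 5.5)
Your proposal is correct and follows the paper's proof essentially verbatim: compute $\DP[3,S]$ for every $S \subseteq [n]$ via \cref{lem:mitm-wjCj}, combine using \cref{lem:dp-correct-recursion} with $i=6$, $j=3$, and bound the cost by $\Os\left((1+\sqrt{3})^n\right) \le \Oh(2.733^n)$. Your closing remark is also sound. Restricting to $|S| \le n/2$ would still require $\DP[3,\cdot]$ on all sets of size at least $n/2$. The sum $\sum_r \binom{n}{r}(\sqrt{3})^r$ peaks near $r \approx 0.634n$, which lies in that range, so nothing is gained.
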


\begin{proof}

    For every $S \subseteq [n]$, initialize $\DP[3, S]$ using \cref{lem:mitm-wjCj}.
    By \cref{lem:dp-correct-recursion} with $i = 6$, $j = 3$ and $S = [n]$ the following recurrence holds.
    \[\DP[6, [n]] = \min_{S \subseteq [n]} \{\DP[3, S] + \DP[3, [n] \setminus S]\}.\]
    With the values of $\DP[3, S]$ for $S \subseteq [n]$ already available, compute the value of $\DP[6, [n]]$ by enumerating all $S \subseteq [n]$ and taking the minimum value of $\DP[3, S] + \DP[3, [n] \setminus S]$. Finally, output this computed value of $\DP[6, [n]]$. 
    
    The correctness of the algorithm follows from \cref{lem:mitm-wjCj,lem:dp-correct-recursion}. To analyze the running time, note that the initialization for $i=3$ and all $S \subseteq [n]$ takes time at most
    \[\sum \limits_{S \subseteq [n]} \Os\left(3^{|S|/2}\right) = \Os \left(\sum \limits_{r = 0}^{n} \binom{n}{r}\cdot \left(\sqrt{3}\right)^{r}\right) = \Os\left(\left(1+\sqrt{3}\right)^n\right) \le \Oh(2.733^n).\]
    Finally, the computation of $\DP[6, [n]]$ using the above recurrence takes time $\Os(2^n)$. 
    In total, the algorithm runs in time $\Oh(2.733^n) + \Os(2^n) = \Oh(2.733^n)$.
\end{proof}

 \section{Weighted number of tardy jobs}\label{sec:wjUj}

In this section, we focus on the problem of minimizing the total weight of tardy jobs \PSwjUj defined in \cref{sec:tech_overview}. 
We will use the Fast Subset Convolution algorithm due to Bj\"{o}rklund et al.~\cite{bjorklund2007fourier}.

\begin{lemma}[Fast Subset Convolution, follows from~{\cite[Theorem 3]{bjorklund2007fourier}}]\label{lem:fast-subset-conv}
    Given functions $f,g: 2^{[n]} \to \{0\} \cup [n]$, the min-plus subset convolution $h(S) = \min_{S' \subseteq S}\{f(S') + g(S \setminus S')\}$ can be computed in time $\Os(2^n)$.
\end{lemma}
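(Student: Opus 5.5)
The plan is to obtain the statement as a consequence of the fast zeta and Möbius transforms over the subset lattice, following the ranked (size-stratified) approach of Bj\"{o}rklund et al.~\cite{bjorklund2007fourier}. The key point is that $f$ and $g$ take values in the small range $\{0\}\cup[n]$, so every value of $h$ lies in $\{0,\dots,2n\}$. Hence min-plus arithmetic can be embedded into ordinary sum-product arithmetic on integers of polynomial bit length, at only polynomial overhead.

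\textbf{Encoding.} Define $F(S) \coloneq x^{f(S)}$ and $G(S) \coloneq x^{g(S)}$, viewed as polynomials in an indeterminate $x$. Alternatively, use the integer encoding $F(S) = M^{2n - f(S)}$ with $M = 2^n+1$, so that no carries occur. Form the ordinary subset convolution
\[H(S) = \sum_{S' \subseteq S} F(S')\,G(S\setminus S').\]
Then $H(S)$ is a polynomial in $x$ of degree at most $2n$. Its coefficient of $x^k$ counts the number of $S' \subseteq S$ with $f(S')+g(S\setminus S') = k$; this count is at most $2^n$, which is why $M = 2^n+1$ rules out carries in the integer encoding. Consequently $h(S)$ is exactly the smallest exponent $k$ with a nonzero coefficient in $H(S)$, and this can be read off in polynomial time.

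\textbf{Computing $H$ in $\Os(2^n)$ ring operations.} Use ranked zeta transforms. Set $\hat F_r(T) = \sum_{S \subseteq T, |S| = r} F(S)$ and define $\hat G_r$ analogously; each is computed by Yates' algorithm in $\Oh(2^n n)$ operations per rank. Then form
\[\hat H_k(T) = \sum_{r=0}^{k} \hat F_r(T)\,\hat G_{k-r}(T),\]
and apply the Möbius transform to $\hat H_k$, evaluating the result at $|S| = k$. The rank bookkeeping guarantees that only disjoint pairs $(S', S\setminus S')$ with $S'\cup(S\setminus S') = S$ survive. All of this uses $\Os(2^n)$ ring operations.

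\textbf{Main obstacle.} The only nontrivial part is controlling the cost of the arithmetic. The entries are polynomials of degree $\Oh(n)$ with integer coefficients of $\Oh(n)$ bits; under the integer encoding they are integers of $\Oh(n^2)$ bits. Each ring operation therefore costs $\poly(n)$ time, which is absorbed into $\Os(\cdot)$, and the stated bound $\Os(2^n)$ follows. This is exactly the statement of~\cite[Theorem 3]{bjorklund2007fourier} for integer-valued functions with range bounded by $M = \poly(n)$, so in the paper it suffices to cite that theorem with $M = n$.
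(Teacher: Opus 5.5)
Your proposal is correct and takes the same route as the paper. The paper gives no argument beyond invoking Theorem~3 of Bj\"{o}rklund et al., and that theorem is proved exactly as you describe: embed min-plus into the integer sum-product ring via $x^{f(S)}$ or a base-$(2^n+1)$ encoding, then run the ranked zeta/M\"obius subset convolution with $\poly(n)$-bit arithmetic. One small notational slip: you use $M$ both for the encoding base $2^n+1$ and for the range bound $n$ in the cited theorem, so rename one of them.
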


We start by recalling an algorithm for solving the unweighted version of this problem on a single machine, this problem is denoted as $\PmSUj{1}$.
\begin{lemma}[Moore's algorithm~\cite{moore1968n}]\label{lem:polyP1Uj}
    $\PmSUj{1}$ can be solved in $\Oh(n \log n)$ time.
\end{lemma}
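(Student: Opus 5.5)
The statement just above is Moore's algorithm (\cref{lem:polyP1Uj}): $\PmSUj{1}$ can be solved in $\Oh(n \log n)$ time. I would prove it with the classical Moore--Hodgson greedy.

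First sort the jobs by due date, so that $d_1 \le d_2 \le \dots \le d_n$ (earliest due date order). Then scan the jobs in this order while maintaining a set $S$ of scheduled on-time jobs in a max-heap keyed by processing time, together with the total $p(S)$. When job $j$ arrives, insert it into $S$. If now $p(S) > d_j$, remove the job of largest processing time from $S$ and declare it tardy. At the end, output $n - |S|$. The jobs of $S$ are processed in EDD order starting at time $0$, and the rejected jobs are appended afterwards.

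Running time: sorting costs $\Oh(n\log n)$, and each job causes at most one insertion and at most one deletion in the heap, each costing $\Oh(\log n)$. This gives $\Oh(n \log n)$ in total.

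Correctness needs two facts.

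\textbf{(i) Feasibility.} We keep the invariant that $S$, scheduled in EDD order, has every job on time. After inserting $j$, only job $j$ can be late, and it is late exactly when $p(S) > d_j$. Removing a job of maximum processing time reduces every later completion time by at least $p_j$, which restores $p(S) \le d_j$. Jobs earlier in the order only finish earlier, so they stay on time.

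\textbf{(ii) Optimality.} This is the main obstacle, and I would prove it by an exchange argument. The key claim is: for every prefix $[k]$, the set $S_k$ kept by the algorithm has maximum cardinality among all feasible subsets of $[k]$, and among those it minimizes total processing time.

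I would prove the claim by induction on $k$. Let $T$ be an optimal feasible subset of $[k]$, and compare it with $S_{k-1} \cup \{k\}$. If that union is feasible, it is optimal by the induction hypothesis, since both the cardinality and the minimal-processing-time properties carry over. Otherwise the algorithm drops a job of largest processing time, giving $|S_k| = |S_{k-1}|$. By induction any feasible subset of $[k]$ has at most $|S_{k-1}|+1$ elements, and having exactly that many would force $k$ to be on time together with a maximum set. The minimal-processing-time property of $S_{k-1}$ shows this is impossible, because such a set would have total processing time at least $p(S_{k-1}) + p_k$, which exceeds $d_k$. Minimality of $p(S_k)$ then follows because we discarded the longest job.

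Taking $k=n$ gives the result. Since the paper cites this lemma as a classical fact, I would keep this exchange argument brief and refer to~\cite{moore1968n} for the full details.
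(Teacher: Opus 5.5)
The paper gives no proof of this lemma and only cites Moore, so the question is whether your sketch stands on its own. Your algorithm, the $\Oh(n\log n)$ bound and the feasibility invariant (i) are fine. In (ii), the cardinality argument and the case where $S_{k-1}\cup\{k\}$ is feasible also work.

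\textbf{The gap.} It is the sentence ``Minimality of $p(S_k)$ then follows because we discarded the longest job.'' Let $c=|S_{k-1}|$ and let $T\subseteq[k]$ be any feasible set with $|T|=c$. If $k\notin T$, your hypothesis gives $p(T)\ge p(S_{k-1})\ge p(S_k)$, and this is all the ``longest job'' remark covers. If $k\in T$, however, you need $p(T\setminus\{k\})\ge p(S_{k-1})-\max\{p_i : i\in S_{k-1}\cup\{k\}\}$. That is a lower bound for a feasible subset of $[k-1]$ of size $c-1$. Your induction hypothesis only speaks about feasible subsets of size $c$ (and rules out size $c+1$), so it gives nothing here. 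Moreover, the family of on-time-schedulable sets is not a matroid for general processing times, so you cannot simply extend $T\setminus\{k\}$ by a job of $S_{k-1}$ to reduce to the size-$c$ statement.

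\textbf{The fix.} Strengthen the invariant to all cardinalities: for every $r\le|S_k|$, the $r$ shortest jobs of $S_k$ have minimum total processing time $F_k(r)$ among all feasible $r$-subsets of $[k]$, and no feasible subset of $[k]$ has more than $|S_k|$ jobs.

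Since $k$ is last in EDD order, $F_k(r)=\min\{F_{k-1}(r),\,F_{k-1}(r-1)+p_k\}$, where the second option is admissible only if it is at most $d_k$. Let $s_1\le\dots\le s_c$ be the processing times in $S_{k-1}$. For $r\le c$ this minimum equals $s_1+\dots+s_{r-1}+\min\{s_r,p_k\}$, the sum of the $r$ smallest values of the multiset $\{s_1,\dots,s_c,p_k\}$. Indeed, if $p_k\ge s_r$, the first option already attains the minimum. If $p_k<s_r$, the second option is admissible, because $F_{k-1}(r-1)+p_k<F_{k-1}(r)\le p(S_{k-1})\le d_k$. For $r=c+1$ the recurrence gives $p(S_{k-1})+p_k$ exactly when this is at most $d_k$, which matches the algorithm's test. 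Deleting the largest value of the multiset leaves the prefix sums for $r\le c$ unchanged, so the strengthened invariant is preserved whether or not a job is discarded.

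\textbf{Alternative.} You could instead use the classical exchange argument. Let $j$ be the first job that is late in the EDD sequence of all jobs, and let $\ell$ be a longest job among $1,\dots,j$. Then some optimal on-time set avoids $\ell$: swap $\ell$ with a job of $\{1,\dots,j\}$ that is missing from it. After discarding $\ell$, the algorithm behaves exactly as on the instance without $\ell$, so you can induct on $n$.
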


We prove \cref{thm:SumwjUj} via standard dynamic programming, where the base case is handled using \cref{lem:polyP1Uj}, and the transitions are handled using \cref{lem:fast-subset-conv}.

\SumwjUj*

\begin{proof}
    Consider an instance of \PSwjUj, i.e.,~consider $n$ jobs with processing times $p_1, \dots, p_n$, weights $w_1, \dots, w_n$ and due dates $d_1, \dots, d_n$, and let $m$ be the number of available identical parallel machines. 
    Note that if $m \geq n$, then each job can be processed at time $0$ on a distinct machine so that it is tardy if and only if $p_j > d_j$. Thus, the optimal weighted number of tardy jobs is simply $\sum_{i=1}^n w_i \cdot \llbracket p_i > d_i \rrbracket$. 
    Hence, suppose that $m < n$.
    We construct a DP table such that for any subset $S \subseteq [n]$ and $i \in [m]$, the table entry $\DP[i, S]$ contains the minimum total number of tardy jobs when the jobs in $S$ are scheduled on $i$ machines. 
    In other words, $\DP[i, S]$ stores the optimum for the \emph{unweighted version} of the problem on $i$ machines with jobs $S$.
    We prove the following claim.
    \begin{claim}\label{clm:Uj-to-wjUj}
        The minimum total weight of tardy jobs is equal to 
        \[\min\{w([n] \setminus S) \mid S \subseteq [n], \DP[m, S] = 0\}.\]
    \end{claim}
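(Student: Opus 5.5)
We prove the claim by two inequalities, using only the definition of $\DP[m,S]$ as the optimal unweighted number of tardy jobs for scheduling $S$ on $m$ machines. Since all objective values are non-negative, $\DP[m,S]=0$ holds exactly when the jobs of $S$ can be scheduled on $m$ machines with every job meeting its due date. Note that the set on the right-hand side is non-empty, since $S=\emptyset$ gives $\DP[m,\emptyset]=0$, so the minimum is finite.

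\textbf{Optimum $\le$ right-hand side.} Take any $S$ with $\DP[m,S]=0$ and fix a schedule of $S$ on the $m$ machines in which no job of $S$ is tardy. Extend it by appending all jobs of $[n]\setminus S$, in arbitrary order, after the last job on (say) the first machine. Appending jobs at the end of a machine does not change the completion times of the jobs already scheduled, so every job of $S$ is still on time. The jobs of $[n]\setminus S$ may or may not be tardy. Hence this schedule of all $n$ jobs has weighted tardiness $\sum_j w_jU_j \le w([n]\setminus S)$, because weights are non-negative. Taking the minimum over all such $S$ gives the inequality.

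\textbf{Optimum $\ge$ right-hand side.} Take an optimal schedule for the weighted instance and let $S$ be the set of jobs that are on time in it, so the optimum equals $w([n]\setminus S)$. Restrict the schedule to $S$ by deleting the tardy jobs, keeping the start times of the remaining jobs. This is still a valid schedule of $S$ on $m$ machines, and each completion time is unchanged, so no job of $S$ is tardy. (If one prefers left-shifted schedules, shifting jobs earlier only decreases completion times and so cannot create tardiness.) Thus $\DP[m,S]=0$, so $S$ is feasible for the minimum and the optimum is at least the right-hand side.

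Combining the two inequalities proves the claim. There is no real obstacle here. The only points to state carefully are that removing jobs or appending them at the end of a machine does not increase the completion times of the other jobs, and that $U_j\in\{0,1\}$ with $w_j\ge 0$. The substantive work lies elsewhere in the proof of \cref{thm:SumwjUj}: computing $\DP[m,\cdot]$ via \cref{lem:polyP1Uj} for the base case and \cref{lem:fast-subset-conv} for the transitions.
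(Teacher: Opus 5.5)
Your proposal is correct and follows the paper's proof essentially verbatim: you take the on-time set of an optimal schedule to get one inequality, and you extend a zero-tardiness schedule of some $S$ by appending the remaining jobs to get the other. Your added remarks, that deleting or appending jobs leaves the other completion times unchanged and that $S=\emptyset$ makes the minimum finite, make explicit details the paper leaves implicit.
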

    \begin{claimproof}
        Let the minimum total weight of tardy jobs be $\OPT$. 
        Fix an optimal schedule.
        Let $S^* \subseteq [n]$ be the set of jobs which are not tardy in this schedule. 
        Since the jobs $S^*$ can be scheduled on $m$ machines with every job meeting its deadline, we have $\DP[m, S^*] = 0$. 
        Moreover, the total weight of tardy jobs is $w([n] \setminus S^*)$. 
        Therefore, we have
        \[\OPT = w([n] \setminus S^*) \ge \min\{w([n] \setminus S) \mid S \subseteq [n], \DP[m, S] = 0\}.\]
        On the other direction, let $S^\dagger \subseteq [n]$ be a subset with $\DP[m, S^\dagger] = 0$ which minimizes $w([n] \setminus S^\dagger)$. 
        Since $\DP[m, S^\dagger] = 0$, the jobs $S^\dagger$ can be scheduled on $m$ machines with every job meeting its deadline. 
        Extending this schedule by scheduling the remaining jobs $[n] \setminus S^\dagger$ arbitrarily after the already scheduled jobs, we get the total weight of tardy jobs in this schedule to be at most $w([n] \setminus S^\dagger)$, and this must be at least $\OPT$. 
        Therefore,
        \[\OPT \le w([n] \setminus S^\dagger) = \min\{w([n] \setminus S) \mid S \subseteq [n], \DP[m, S] = 0\}.\]
        The above inequalities together imply $\OPT = \min\{w([n] \setminus S) \mid S \subseteq [n], \DP[m, S] = 0\}$.
    \end{claimproof}
    We now describe the algorithm.
    For every $S \subseteq [n]$, initialize $\DP[1, S]$ using \cref{lem:polyP1Uj}. 
    By \cref{lem:dp-correct-recursion}, the following recurrence holds for any $i \geq 2$ and $S \subseteq [n]$.
    \[\DP[i, S] = \min \limits_{S' \subseteq S} \{ \DP[1, S'] + \DP[i - 1, S \setminus S'] \}.\]
    We can therefore iterate over $i \in \{2, \dots, m\}$ in increasing order, and compute all table entries $\DP[i, S]$ using Fast Subset Convolution (\cref{lem:fast-subset-conv}) with the above recurrence. 
    In the end, we output $\min\{w([n] \setminus S) \mid S \subseteq [n], \DP[m, S] = 0\}$.

    The correctness of the algorithm follows from \cref{clm:Uj-to-wjUj,lem:dp-correct-recursion,lem:fast-subset-conv}. 
    To analyze the running time, note that the initialization for $i=1$ takes time at most $\Oh(2^n) \cdot \Oh(n \log n) = \Os(2^n)$. For every $i \in \{2, \ldots, m\}$, Fast Subset Convolution using \cref{lem:fast-subset-conv} takes time $\Os(2^n)$, as all entries in $\DP$ are at most $n$. Finally, computing $\min\{w([n] \setminus S) \mid S \subseteq [n], \DP[m, S] = 0\}$ takes a total time of $\Os(2^n)$. The total running time adds up to $\Os(2^n + m \cdot 2^n + 2^n) = \Os(2^n)$. 
\end{proof} \section{Bin Packing}\label{sec:binpacking}

In this section, we focus on the \binpacking problem and show how to beat the classic $\Os(2^n)$-time algorithm when assuming ARC.

\BPARC*

We only focus on solving the decision version of \binpacking (which we denote as \binpacking) as defined in \cref{sec:intro}.
Indeed, we can reconstruct a solution with $n^{\Oh(1)}$ calls to a decision oracle as follows.
Consider a \binpacking instance $(m, t, I)$ with $n \coloneq |I|$ that admits a solution.
If $m = n$ and $x \le t$ for every $x \in I$, then we can return the solution that packs each item in a separate bin.
We cannot have $x > t$ for some $x \in I$, as this would contradict that there is a solution.
Otherwise, guess a pair of items $x, y \in I$ and replace them by the single item of value $x+y$. If an oracle for \binpacking called on the new instance answers \textsc{Yes}, then we know that there is a feasible solution where $x$ and $y$ are in the same bin, and so we can continue the procedure on the smaller instance.
Since the given instance admits a solution, there exists such a pair for which the decision oracle yields \textsc{Yes} on the smaller instance. In each step, we require at most $\Oh(n^2)$ oracle calls to find that pair, and since the instance size decreases by 1 in each step, there are at most $\Oh(n)$ steps. Hence, in total we require $\Oh(n^3)$ oracle calls to construct the solution.

Recall that when the number of bins is a fixed constant, then \binpacking can be solved in time $\Oh((2-\eps_m)^n)$ as shown by Nederlof et al.~\cite{nederlof2023faster}, where $\eps_m > 0$ is a constant that depends on the number of bins $m$.

\lemmaBinPackingMFixed*{}

In \cref{thm:bin-packing-arc}, the number of bins is not fixed, however we will show how to exploit the algorithm of Nederlof et al.~\cite{nederlof2023faster} on smaller instances.
We will also reduce to special instances of \threepartition, which can be solved in time $\Oh((2-\eps)^n)$ for some $\eps > 0$ when assuming ARC as proven by Björklund et al.~\cite{radu25}.
We restate the corresponding lemma here for convenience.

\lemmaFastThreePartitionARC*{}

The idea to prove \cref{thm:bin-packing-arc} is to differentiate between four different types of \binpacking instances, and to provide an efficient algorithm for each case.
We now describe the different instance types.

Consider a \binpacking instance $(m, t, I)$.
Note that without loss of generality, we can assume that $I \subset [t]$, as otherwise we can directly answer \textsc{No}.
Then, if $m \geq n$, we can directly answer \textsc{Yes} as each item can be put in its own bin, therefore we can also assume $m < n$.
Any partition of $I$ into $m$ parts $X_1, \dots, X_m$ is called a \emph{packing} of $I$ into $m$ bins. The quantity $\Sigma(X_i)$ is called the \emph{load of the $i$-th bin}. The packing forms a \emph{solution} if $\Sigma(X_i) \leq t$ for all $i \in [m]$, i.e.~the load of each bin doesn't exceed the capacity $t$.
The \emph{cardinalities of a solution} $X_1, \dots, X_m$ are the integers $b_1, \dots, b_m$ such that $b_i \coloneq |X_i|$ for all $i \in [m]$.
Note that we can assume without loss of generality that $b_1 \geq \dots \geq b_m$.

Then, we distinguished between different \binpacking instance as follows. Assume that $(m, t, I)$ admits a solution of cardinalities $b_1 \geq \dots \geq b_m$. Let $\eps >0$ be a rational constant such that $\eps < 1/100$ and $H(\eps) < \eps_6$, where $H(\cdot)$ is the entropy function defined in \cref{sec:preliminaries}.

\begin{caseenum}
    \item If $b_1 \geq (1+\eps)\cdot n / 2$, i.e.~\emph{more than half of the items are packed into the first bin}, then we can guess $\leq (1-\eps)n/2$ items and use standard dynamic programming to verify that they can be packed into $m-1$ bins.
    See \cref{lem:case_DP} for details.
    \label{case:DP}

    \item If $\sum_{i=1}^6 b_i \geq (1-\eps)\cdot n$, i.e.~\emph{almost all items are packed into the first 6 bins},
    then we can guess $\leq \eps n$ items and use standard dynamic programming to verify that they can be packed into $m-6$ bins, and then use \cref{thm:nederlof_binpacking} to verify that the $\geq (1-\eps)n$ items can be packed into 6 bins.
    See \cref{lem:case_cstBP} for details.
    \label{case:cstBP}

    \item If $b_1 \in [\frac{n}{2} \pm \eps\cdot \frac{n}{2}]$ and $\sum_{i=1}^6 b_i < (1-\eps) \cdot n$, i.e.~\emph{the first bin contains approximately half of the items, but the remaining items are not concentrated on the first 6 bins}, then, in \cref{lem:case_balanced},
    we use a randomized strategy that samples witnesses of the existence of such a packing.
    This algorithm is inspired by the proof of \cite[Lemma~3.5]{nederlof2023faster}.
    \label{case:balanced}

    \item If $b_1 \leq (1-\eps)\cdot n / 2$ and $\sum_{i=1}^6 b_i < (1-\eps) \cdot n$, i.e.~\emph{the first bin contains less than half of the items, and the items are not concentrated on the first 6 bins}, we reduce solving the problem to solving specific instances of \threepartition{}, which we then solve using \cref{thm:arc_3partition}. This is the only case in which we make use of the ARC. Refer to \cref{lem:case_ARC} for the details.
    \label{case:ARC}
\end{caseenum}
For each of the cases $x \in \{\text{A, B, C, D}\}$, we will prove that there exists a faster than $2^n$-time algorithm that, given a \binpacking instance, outputs \textsc{No} if the instance has no solution and outputs \textsc{Yes} (with high probability) if the instance admits a solution \emph{in case x}.
If the instance admits a solution falling in one of the other cases, we don't give any guarantee on the output of the algorithm. To solve any \binpacking instance, we simply run the algorithms for each of the four cases and output \textsc{Yes} if and only if one of the algorithms outputs \textsc{Yes}.

Before showing how to tackle each of the above cases, we describe a folklore dynamic-programming (DP) table which will be useful in each of the cases.
Fix a \binpacking instance $(m, t, I)$ and define the following DP table entry for every $i \in \N$ and $S \subseteq I$.
\begin{align}\label{eq:DP_def}
    \begin{split}
        \BP[i, S] \coloneq \min \{\Sigma (D) \ |\
        S = X_1 \uplus \dots \uplus X_i \uplus D \text{ such that } & \Sigma(D) \leq t                                   \\ \text{ and }
                                                                    & \Sigma(X_j) \leq t \ \text{ for all } j \in [i] \}
    \end{split}
\end{align}

$\BP[i, S]$ has value $0$ if and only if $S$ can be packed into $i$ bins of capacity $t$, i.e.~the \binpacking instance $(i, t, S)$ has a solution. Additionally, if it cannot be packed into $i$ bins, but it can be packed into $i+1$ bins, then the table entry contains the minimum load among the $i+1$ bins.
Moreover, if it is not possible to pack $S$ into $i+1$ bins, then by our convention that the minimum of an empty set is $\infty$ the table entry $\DP[i,S] = \infty$.
We show below how to compute DP table entries for a collection of sets $S$.

\begin{lemma}\label{lem:binpacking_DP_computation}
    For any $m \in \N$ and collection of multisets $\Ww$ from some universe of size $n$ of natural numbers, the table entries $\BP[i, S]$ for all $S \in \Ww$ and $i \in [m]$ can be computed in time $\Oh(n \cdot m \cdot |\dc \Ww|)$.
\end{lemma}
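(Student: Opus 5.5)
We follow the folklore ``bin-by-bin with residual load'' dynamic program of Bj\"orklund et al., restricted to the downward closure. The key recurrence is: for $S \neq \emptyset$ and $i \geq 1$, $\BP[i,S]$ can be obtained by deciding which item of $S$ is placed last. Concretely, with the convention $\BP[0,S] = \Sigma(S)$ if $\Sigma(S)\le t$ and $\infty$ otherwise, we claim that for $S \neq \emptyset$
\[
  \BP[i,S] = \min_{x \in S} \mathrm{add}\bigl(\BP[i, S\setminus\{x\}], x\bigr),
\]
where $\mathrm{add}(v,x)$ equals $v+x$ if $v+x\le t$, and equals $x$ if $v+x>t$ and $v$ comes from the level-$(i-1)$ table. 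To make this well defined, we work with the pair $(i,S)$ and the rule: from state $(i,S')$ with residual $v$, adding $x$ gives $(i,S'\cup\{x\})$ with residual $v+x$ if $v+x\le t$, or closes the open bin and gives $(i+1,S'\cup\{x\})$ with residual $x$. The entries $\BP[i,S]$ thus depend only on entries $\BP[i,S\setminus\{x\}]$ and $\BP[i-1,S\setminus\{x\}]$:
\[
  \BP[i,S]=\min_{x\in S}\min\Bigl(\bigl\{\BP[i,S\setminus\{x\}]+x : \BP[i,S\setminus\{x\}]+x\le t\bigr\}\cup\bigl\{x : \BP[i-1,S\setminus\{x\}]<\infty \bigr\}\Bigr),
\]
with the base case $\BP[i,\emptyset]=0$.

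To prove correctness, we first show that every value produced by the recurrence is attained by a valid decomposition $S = X_1\uplus\dots\uplus X_i\uplus D$. This is immediate by induction, since we either extend $D$ or close $D$ into a new full bin $X_i$ and start $D=\{x\}$; note that $\BP[i-1,\cdot]<\infty$ means the set fits into $i$ bins. For the reverse inequality, we take an optimal decomposition $(X_1,\dots,X_i,D)$ with $\Sigma(D)$ minimal. If $D\neq\emptyset$, we remove any $x\in D$: then $S\setminus\{x\}$ has the decomposition $(X_1,\dots,X_i,D\setminus\{x\})$, so $\BP[i,S\setminus\{x\}]\le \Sigma(D)-x$ and the first branch gives a value of at most $\Sigma(D)$. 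If $D=\emptyset$, the value $0$ must be realized, which requires a slightly different argument: there is a bin $X_i$ (after relabelling) and some $x\in X_i$. Here we would rather use the symmetric reading, namely that $\BP[i,S]=0$ iff $S$ fits into $i$ bins iff $\BP[i-1,S]<\infty$ with residual at most $t$. Accordingly, we add the explicit case $\BP[i,S]=0$ whenever $\BP[i-1,S]\le t$, which is always true when finite. This handling of the ``closing a bin'' step, so that minimum residuals are genuinely tracked, is the main place where care is needed. We expect it to be the main obstacle: we must check that the minimum residual is not lost when items are reordered, and this holds because the statement is about the existence of some ordering of the items in which $D$'s items come last.

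For the running time, the recurrence for $(i,S)$ only accesses sets $S\setminus\{x\}\subseteq S$, so every accessed set lies in $\dc\Ww$ and the whole computation is closed under $\dc\Ww$. We process the sets of $\dc\Ww$ in order of increasing size, and $i$ from $0$ to $m$. We store the table in a hash map (or trie) keyed by the multiset, represented as a sorted index vector, so that each lookup costs $\Oh(1)$ amortized, or $\Oh(n)$ if we want to be deterministic. Enumerating $\dc\Ww$ itself takes time $\Oh(n\,|\dc\Ww|)$ by a BFS that removes one element at a time with deduplication. Each entry performs $\Oh(|S|)=\Oh(n)$ constant-time transitions; to stay within the claimed bound we charge lookups as $\Oh(1)$ via incremental hashing. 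This gives a total of $\Oh(n\cdot m\cdot|\dc\Ww|)$, as claimed.
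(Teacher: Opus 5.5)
Your proposal is correct and is essentially the paper's proof. It uses the same one-item-at-a-time DP over $\dc\Ww$ processed by increasing size, your first branch is identical to the paper's, and your patched rule ``$\BP[i,S]=0$ iff $\BP[i-1,S]<\infty$'' does the job of the paper's second branch $\infty\cdot\llbracket \BP[i-1,X\setminus\{a\}]>t-a\rrbracket$; the two conditions are equivalent for nonempty $S$.

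One caveat: your branch $\{x : \BP[i-1,S\setminus\{x\}]<\infty\}$ is sound only when $x\le t$, which the paper's standing assumption $I\subseteq[t]$ guarantees. That branch is also redundant, since your reverse inequality uses only the first branch and the zero rule.
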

\begin{proof}
    First, initialize $\BP[i, \emptyset] \gets 0$ for all $0 \leq i \leq m$.
    Moreover, $\BP[0, X]$ is initialized to $\Sigma(X)$ for every $X \in \dc \Ww$ with $\Sigma(X) \leq t$, and $\BP[0, X]$ is initialized to $\infty$ for every $X \in \dc \Ww$ with $\Sigma(X) > t$.

    Next, iterate over every $i \in [m]$ in increasing order and over every $X \in \dc \Ww$ in increasing order of size, and set the DP entry $\BP[i, X]$ to
    \begin{equation}\label{eq:binpacking_DP}
        \min_{a \in X}
        \left( \min
        \left\{
        \begin{aligned}
                 & a + \BP[i, X \setminus \{a\}] + \infty \cdot \llbracket \BP[i, X \setminus \{a\}] > t - a \rrbracket, \\
                 & \infty \cdot \llbracket \BP[i-1, X \setminus \{a\}] > t - a \rrbracket
            \end{aligned}
        \right\} \right).
    \end{equation}
    Note that the DP entries in \cref{eq:binpacking_DP} are already computed since we iterate over $i$ and $X$ in increasing order.
    Each transition takes time $\Oh(|X|) = \Oh(n)$ and there are $\Oh(m \cdot |\dc \Ww|)$ entries to compute, so the total running time is at most $\Oh(n \cdot m \cdot |\dc \Ww|)$.

    It remains to show correctness of our algorithm.
    First, observe that for $i = 0$ and from our initialization we indeed get that $\BP[0,X]$ evaluates to \cref{eq:DP_def} for all $X \in \dc \Ww$.

    Now, we show by induction on $|X|$ that for every $1 \leq i \leq m$ and $X \in \dc \Ww$, the constructed DP entry $\BP[i, X]$ correctly contains the value given by \cref{eq:DP_def}.
    Since the empty set can always be packed, \cref{eq:DP_def} yields $0$ for $X = \emptyset$ and any $0 \leq i \leq m$.
    Hence, in the base case $X = \emptyset$, the entries $\BP[i, X]$ are correctly initialized to 0 for any $0 \leq i \leq m$.
    Now consider a non-empty set $X \in \dc \Ww$ and assume that for any $1 \leq i \leq m$ and $X' \in \dc \Ww$ with $|X'| < |X|$ the constructed DP entry $\BP[i, X']$ contains the correct value given by \cref{eq:DP_def}, and also recall that $\BP[0,X']$ also contains the correct value by our previous elaborations.

    Fix $1 \leq i \leq m$.
    Let $d$ be the value given by \cref{eq:DP_def} for $i$ and $X$, and let $\tilde d$ be the value computed in \cref{eq:binpacking_DP} for $i$ and $X$. We argue that $\tilde d = d$.

    We first show that $\tilde d \leq d$.
    Observe that $d \in \{0, \dots, t\} \cup \{\infty\}$.
    If $d = \infty$, then we directly get $\tilde d \leq d$.
    So suppose that $d \in \{0, \dots t\}$. Then there exists a packing $X_1, \dots, X_i, D$ of $X$ into $i+1$ bins such that $d = \Sigma(D) \leq \Sigma(X_j) \leq t$ for every $j \in [i]$.

    Suppose that $d = 0$.
    Then, $D = \emptyset$ and there exists $a \in X_j$ for some $X_j$.
    Observe that $X_1, \dots, X_{j-1}, X_{j+1},\dots, X_i, D'$ with $D' \coloneq X_j \setminus \{a\}$ is a partition of $X \setminus \{a\}$ into $i$ bins.
    In particular, $\BP[i-1, X \setminus \{a\}] \leq \Sigma(D') = \Sigma(X_j) - a \leq t - a$. This means that the second term in \cref{eq:binpacking_DP} is
    \begin{align*}
        \infty \cdot \llbracket \BP[i-1, X \setminus \{a\}] > t - a \rrbracket = 0 = d
    \end{align*}
    and thus $\tilde d \leq d$.

    Now suppose that $d \in [t]$.
    Then, $D \neq \emptyset$ and there exists some $a \in D$.
    Observe that $X_1, \dots, X_i, D'$ with $D' \coloneq D \setminus \{a\}$ is a partition of $X \setminus \{a\}$ into $i+1$ bins.
    In particular, $\BP[i, X \setminus \{a\}] \leq \Sigma(D') = \Sigma(D) - a \leq t - a$.
    This means that the first term in \cref{eq:binpacking_DP} is
    \begin{align*}
        a + \BP[i, X \setminus \{a\}] + \infty \cdot \llbracket \BP[i, X \setminus \{a\}] > t - a \rrbracket
         & = a + \BP[i, X \setminus \{a\}] \\
         & \leq a + \Sigma(D')             \\
         & = a + \Sigma(D) - a             \\
         & = \Sigma(D) = d
    \end{align*}
    and thus $\tilde d \leq d$.

    We now show that $d \leq \tilde d$.
    By \cref{eq:binpacking_DP}, since $X\neq \emptyset$, there exists $a \in X$ such that either (1) $\tilde d = a + \BP[i, X \setminus \{a\}] + \infty \cdot \llbracket \BP[i, X \setminus \{a\}] > t - a \rrbracket $ or (2) $\tilde d = \infty \cdot \llbracket \BP[i-1, X \setminus \{a\}] > t - a \rrbracket$.

    If $\tilde d = \infty$, then $d \leq \tilde d$ holds.
    So, assume that $\tilde d$ is not $\infty$.

    If we are in case (1), then necessarily $\BP[i, X \setminus \{a\}] \leq t - a$ and $\tilde d = a + \BP[i, X \setminus \{a\}]$.
    Then, there exists a partition $X_1, \dots, X_i, D$ of $X \setminus \{a\}$ such that $\Sigma(D) \leq \Sigma(X_j) \leq t$ for every $j \in [i]$ and $\Sigma(D) = \BP[i, X \setminus \{a\}] \leq t - a$.
    Therefore, $X_1, \dots, X_i, D'$ with $D' = D \cup \{a\}$ is a partition of $X$ into $i +1$ parts such that $\Sigma(X_j) \leq t$ for every $j \in [i]$ and $\Sigma(D') = \Sigma(D) + a \leq t - a + a = t$.
    By \cref{eq:DP_def}, we thus have $d \leq \Sigma(D') = \tilde d$.

    On the other hand, if we are in case (2), then $\tilde d =0$ and $\BP[i-1, X \setminus \{a\}] \leq  t - a $.
    Then, there exists a partition $X_1, \dots, X_{i-1}, D$ of $X \setminus \{a\}$ such that $\Sigma(D) \leq \Sigma(X_j) \leq t$ for every $j \in [i-1]$ and $\Sigma(D) = \BP[i-1, X \setminus \{a\}]$.
    Since $\Sigma(D) + a \leq t-a + a =t$, this means that $X_1, \dots, X_{i-1}, X_i, D'$ with $X_i = D \cup \{a\}$ and $D' = \emptyset$ is a partition of $X$ into $i+1$ bins such that $\Sigma(X_j) \leq t$ for all $j \in [i]$ and $\Sigma(D') = 0$. By \cref{eq:DP_def} we deduce that $d \leq 0 = \tilde d$.
\end{proof}

We now show how to handle each instance type of \binpacking, starting with Case~\ref{case:DP}.

\begin{lemma}[Case~\ref{case:DP}]\label{lem:case_DP}
    There exists a deterministic algorithm running in time $\Os(2^{(1-\delta)n})$ for $\delta \coloneq 1- H(\frac{1-\eps}{2})$ that, given an instance of \binpacking, outputs:
    \begin{itemize}
        \item \textnormal{\textsc{No}} if the given \binpacking instance has no solution;
        \item \textnormal{\textsc{Yes}} if the given instance of \binpacking has a solution with cardinalities $b_1 \geq \dots \geq b_m$ such that $b_1 \geq (1+\eps)\cdot n / 2$.
    \end{itemize}
\end{lemma}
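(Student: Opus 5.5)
The plan is to guess the complement of the first bin and check it with the dynamic program of \cref{lem:binpacking_DP_computation}. If a solution has $b_1 \geq (1+\eps)n/2$, then $I \setminus X_1$ has at most $n - b_1 \leq (1-\eps)n/2$ items and packs into $m-1$ bins, while $\Sigma(X_1) \leq t$. So it suffices to look at the family
\[
\Ww \coloneq \left\{ S \subseteq I \ \middle|\ |S| \leq (1-\eps)n/2,\ \Sigma(I\setminus S) \leq t \right\}
\]
and decide whether some $S \in \Ww$ satisfies $\BP[m-1, S] = 0$. Here subsets are taken over the index set, so multiplicities cause no trouble.

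The algorithm proceeds as follows.
\begin{enumerate}
\item Enumerate all $S \in \binom{I}{\le (1-\eps)n/2}$ and keep those with $\Sigma(I \setminus S) \leq t$; this produces $\Ww$.
\item Apply \cref{lem:binpacking_DP_computation} to $\Ww$ with $m-1$ bins to obtain $\BP[m-1,S]$ for all $S \in \Ww$.
\item Output \textsc{Yes} if and only if $\BP[m-1,S]=0$ for some $S \in \Ww$.
\end{enumerate}
The degenerate case $m=1$ is handled by checking $\Sigma(I)\leq t$ directly, or equivalently by the convention $\BP[0,S]=0$ only for $S=\emptyset$.

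Soundness holds because a \textsc{Yes} answer yields a valid packing: the $m-1$ bins realizing $\BP[m-1,S]=0$, together with $I\setminus S$ as the $m$-th bin, whose load is at most $t$. Completeness follows from taking $S = I\setminus X_1$ for the assumed solution.

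For the running time, $\Ww$ is closed under taking subsets in the size constraint, so $|\dc \Ww| \leq \binom{n}{\le (1-\eps)n/2}$. Since $(1-\eps)/2 < 1/2$, this is at most $n \cdot 2^{H(\frac{1-\eps}{2})n} = \Os(2^{(1-\delta)n})$ by the entropy bound from the preliminaries. The lemma's cost has an extra factor $n\cdot m \leq n^2$, which is polynomial, so the total is $\Os(2^{(1-\delta)n})$. The only mild obstacle is the bookkeeping of the monotonicity $\binom{n}{r}\le\binom{n}{(1-\eps)n/2}$ for $r\le(1-\eps)n/2$, used to bound the sum by $n$ times its largest term; everything else is routine.
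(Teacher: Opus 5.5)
Your proposal is correct and matches the paper's proof: you enumerate candidate complements $S$ of the large first bin with $|S|\le(1-\eps)n/2$, compute $\BP[m-1,S]$ via \cref{lem:binpacking_DP_computation}, and accept iff $\BP[m-1,S]=0$ and $\Sigma(I\setminus S)\le t$, bounding the running time by $|\dc\Ww|\le\binom{n}{\le(1-\eps)n/2}\le\Os(2^{H(\frac{1-\eps}{2})n})$. The only cosmetic difference is that you filter by the load condition before running the DP, whereas the paper checks it afterwards; your $\Ww$ is then no longer downward closed, but $\dc\Ww$ still lies inside $\binom{I}{\le(1-\eps)n/2}$, so your bound is unaffected.
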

\begin{proof}
    First, note that since $0< \eps <1$, we have $0 < \frac{1-\eps}{2} <1/2$, so $0 < H(\frac{1-\eps}{2}) <1$ and thus $\delta \in (0, 1)$.

    Consider a \binpacking instance $(m, t, I)$ and denote by $\Ww$ the collection of subsets of $I$ of size at most $(1-\eps) \cdot n/2$. For every $X \in \Ww$, compute $\BP[m-1, X]$ and return \textsc{Yes} if and only if there is a subset $X \in \Ww$ such that $\BP[m-1, X] = 0$ and $\Sigma(I) - \Sigma(X) \leq t$.
    By \cref{lem:binpacking_DP_computation} this can be done in total time $\Oh(n \cdot (m-1) \cdot |\dc \Ww|)$.
    Since $|\dc \Ww|  = |\Ww| \leq \Oh\left(\binom{n}{(1-\eps)n/2}\right)$, the total running time is at most $\Os(\binom{n}{(1-\eps)n/2}) \leq \Os(2^{n H(\frac{1-\eps}{2})}) = \Os(2^{(1-\delta)n})$.

    By assumption, there exists a solution $X_1, \dots, X_m$ with $|X_1| \geq (1+\eps) \cdot n/2$. Let $X \coloneq I \setminus X_1$. Then $|X| \leq (1-\eps)n/2$, $\Sigma(I) - \Sigma(X) = \Sigma(X_1)  \leq t$ and $X$ can be packed into $m-1$ bins, i.e.~$\BP[m-1, X] = 0$. Hence, the above algorithm returns \textsc{Yes}.
    Conversely, assume that the algorithm returns \textsc{Yes}. Then there exists $X \subset I$ of size at most $(1-\eps) \cdot n/2$ that can be packed into $m-1$ bins and such that $\Sigma(I \setminus X) \leq t$. Let $X_1 \coloneq I \setminus X$ and $X_2, \dots, X_m$ be the content of the $m-1$ bins packing $X$. Then, $X_1, \dots, X_m$  partitions $I$ with $|X_1| \geq (1 + \eps)n/2$ and $\Sigma(X_i) \leq t$ for every $i \in [m]$.
\end{proof}

We proceed to Case~\ref{case:cstBP}.

\begin{lemma}[Case \ref{case:cstBP}]\label{lem:case_cstBP}
    There exists an algorithm running in time $\Os(2^{(1-\delta)n})$ for $\delta \coloneq \eps_6 - H(\eps) \in (0, 1)$ that, given an instance of \binpacking with $m\geq 6$ bins, outputs:
    \begin{itemize}
        \item \textnormal{\textsc{No}} if the given instance of \binpacking has no solution;
        \item \textnormal{\textsc{Yes}} with high probability if the given instance of \binpacking has a solution with cardinalities $b_1 \geq \dots \geq b_m$ such that $\sum_{i=1}^6 b_i \geq (1-\eps)\cdot n$.
    \end{itemize}
\end{lemma}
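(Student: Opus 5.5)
We enumerate all subsets $X \subseteq I$ with $|X| \le \eps n$, which are the candidate contents of bins $7,\dots,m$. For each one we check two things: that $X$ fits into $m-6$ bins, and that $I \setminus X$ fits into $6$ bins. Let $\Ww \coloneq \binom{I}{\le \eps n}$. This family is downward closed, so $|\dc \Ww| = |\Ww| \le \Os(\binom{n}{\le \eps n}) \le \Os(2^{H(\eps)n})$, using $\eps < 1/100 < 1/2$ and the entropy bound from \cref{sec:preliminaries}. By \cref{lem:binpacking_DP_computation} (with $m-6$ in place of $m$) we compute $\BP[m-6, X]$ for all $X \in \Ww$ in time $\Os(2^{H(\eps)n})$. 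When $m = 6$ only $X = \emptyset$ is admissible, and the check is trivial. Next, for every $X \in \Ww$ with $\BP[m-6, X] = 0$ we run the algorithm of \cref{thm:nederlof_binpacking} with $6$ bins on $(6, t, I\setminus X)$. We output \textsc{Yes} if and only if some call returns \textsc{Yes}.

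\textbf{Correctness.} Soundness comes from the fact that \cref{thm:nederlof_binpacking} has no false positives and the DP is exact. If the algorithm outputs \textsc{Yes}, then $X$ is packed into $m-6$ bins and $I \setminus X$ into $6$ bins, so together they give a solution with $m$ bins. For completeness, take a solution $X_1,\dots,X_m$ with $\sum_{i=1}^6 b_i \ge (1-\eps)n$. Set $X \coloneq X_7 \cup \dots \cup X_m$. Then $|X| \le \eps n$, so $X \in \Ww$, and $\BP[m-6,X] = 0$. Moreover $(6,t,I \setminus X)$ is a yes-instance, so the corresponding call answers \textsc{Yes} with high probability. A single correct call suffices. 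We may boost this call's success probability by polynomially many repetitions so that the overall guarantee is still with high probability; no union bound over the other calls is needed, since those calls cannot produce false positives.

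\textbf{Running time.} Each call to \cref{thm:nederlof_binpacking} is on at most $n$ items. Its cost $\Os(2^{(1-\eps_6)n'})$ is monotone in $n'$, so each call takes at most $\Os(2^{(1-\eps_6)n})$. Multiplying by the number of calls, $\Os(2^{H(\eps)n})$, gives a total of $\Os(2^{(1-\eps_6+H(\eps))n}) = \Os(2^{(1-\delta)n})$. The DP term $\Os(2^{H(\eps)n})$ is dominated by this. The choice $H(\eps) < \eps_6$ guarantees $\delta > 0$, and $\delta < 1$ because $\eps_6 < 1$ (we may assume $\eps_6$ is at most $1$).

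The main point that needs care is the interface with \cref{thm:nederlof_binpacking}. The lemma is stated for instances with exactly $6$ bins; this matches our use, since empty bins are allowed in a partition. We also need the no-false-positive property, which is what makes the union over exponentially many calls safe. The remaining detail is that the subinstance sizes $n' = |I\setminus X| \ge (1-\eps)n$ are at most $n$, so bounding every call by the worst case $n' = n$ suffices. No finer accounting is required, because $H(\eps) < \eps_6$ already absorbs the full multiplicative overhead.
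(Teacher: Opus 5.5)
Your proposal is correct and follows the paper's proof essentially step for step. You enumerate all $X$ with $|X|\le \eps n$ as the contents of bins $7,\dots,m$, compute $\BP[m-6,X]$ with the folklore DP, and run the $6$-bin algorithm of Nederlof et al.\ on $I\setminus X$, for total time $\Os(2^{H(\eps)n}\cdot 2^{(1-\eps_6)n})$. The only differences are cosmetic: you call the $6$-bin algorithm only when $\BP[m-6,X]=0$, and you spell out the no-false-positive, boosting and monotonicity points that the paper uses implicitly.
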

\begin{proof}
    Consider a \binpacking instance $(m, t, I)$ and denote by $\Ww$ the collection of subsets of $I$ of size at most $\eps \cdot n$. For every $X \in \Ww$, compute the DP entry $\BP[m-6, X]$.
    This can be done in total time $\Oh(n \cdot (m-6) \cdot |\dc \Ww|)$ by \cref{lem:binpacking_DP_computation}.
    Then iterate over every $X \in \Ww$ and decide using \cref{thm:nederlof_binpacking} whether $I \setminus X$ can be packed into 6 bins.
    This takes in total time $\Os(|\Ww| \cdot 2^{(1-\eps_6)n})$.
    Return \textsc{Yes} if and only if for some $X \in \Ww$ we decided that $I \setminus X$ can be packed in 6 bins and $\BP[m-6, X] = 0$.
    Note that $|\dc \Ww| = |\Ww| \leq \Oh(\binom{n}{\eps \cdot n}) \leq \Oh(2^{n H(\eps)})$. So the total running time is at most
    \begin{align*}
        \Oh(n \cdot (m-6) \cdot |\dc \Ww|) + \Os(|\Ww| \cdot 2^{(1-\eps_6)n})
         & = \Os(2^{n H(\eps)} \cdot 2^{(1-\eps_6)n}) = \Os(2^{(1-\delta)n})
    \end{align*}
    for $\delta= \eps_6 - H(\eps)$. Since $0 < H(\eps) < \eps_6 < 1$ by our choice of $\eps$, we correctly have $\delta \in (0, 1)$.

    It remains to argue correctness. Recall that $\BP[i, X] = 0$ if and only if $X$ can be packed into $i$ bins.
    Hence, if the above procedure returns \textsc{Yes}, then there exists $X \subset I$ of size at most $\eps \cdot n$ such that $I \setminus X$ can be packed into $6$ bins and $X$ can be packed into $m-6$ bins.
    Let $X_1, \dots, X_6$ be the content of the $6$ bins packing $I \setminus X$ and let $X_7, \dots, X_m$ be the content of the $m-6$ bins packing $X$.
    Then $X_1, \dots, X_m$ is a partition of $I$ such that $\Sigma(X_i) \leq t$ for every $i \in [m]$.
    Additionally, we have $|I \setminus X| = \sum_{i=1}^6 |X_i| \geq (1-\eps)\cdot n$.

    Conversely, assume that there exists a solution $X_1, \dots, X_m$ with cardinalities $b_1 \geq \dots \geq b_m$ such that $\sum_{i=1}^6 b_i \geq (1- \eps) \cdot n$.
    Let $X \coloneq \bigcup_{i=7}^m X_i$.
    Since $I \setminus X$ contains the items packed into the first 6 bins of the solution and $X$ contains the items packed in the $m-6$ last bins in the solution, $X$ can be packed into $m-6$ bins and $I \setminus X$ can be packed into $6$ bins.
    Additionally, we have $|X| = n - \sum_{i=1}^6 b_i \leq \eps \cdot n$. So $X \in \Ww$ and thus the above algorithm will return \textsc{Yes} with high probability.
\end{proof}

Now, we cover the approach for Case~\ref{case:balanced}.

\begin{lemma}[Case \ref{case:balanced}]\label{lem:case_balanced}
    There exists an algorithm running in time $\Os(2^{(1-\delta)n})$ for $\delta= \frac{2}{\ln 2} \left(\frac{\eps}{4 \log(12/\eps)}\right)^2$ that, given an instance of \binpacking with $m\geq 7$ bins and $n$ items where $n$ is even and $\eps n/2$ is an integer, outputs
    \begin{itemize}
        \item \textnormal{\textsc{No}} if the given instance of \binpacking has no solution;
        \item \textnormal{\textsc{Yes}} with high probability if the given instance of \binpacking has a solution with cardinalities $b_1 \geq \dots \geq b_m$ such that $b_1 \in [\frac{n}{2} \pm \eps\cdot \frac{n}{2}]$ and $\sum_{i=1}^6 b_i < (1-\eps) \cdot n$.
    \end{itemize}
\end{lemma}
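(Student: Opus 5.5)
The plan is to adapt the witness-sampling argument of \cite[Lemma~3.5]{nederlof2023faster}. We sample a family $\Ww \subseteq \binom{I}{n/2}$ of $2^{(1-\eps)n}\poly(n)$ uniformly random half-size sets. We run the DP of \cref{lem:binpacking_DP_computation} on $\dc\Ww$, and also on the family $\{I\setminus W' : W' \in \uc\Ww\}$, which equals $\dc\{I\setminus W : W\in\Ww\}$ and has the same cardinality as $\uc \Ww$. By \cref{lem:nederlof2023_bound} with $z=\eps$, both families together have size $\Oh(2^{(1-\rho)n})$ with $\rho=\delta$. Hence all entries $\BP[i,W]$ and $\BP[i,I\setminus W]$ for $W\in\Ww$, $i\in[m]$, are available in time $\Os(2^{(1-\delta)n})$, since $m<n$.

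\emph{Decision rule.} For each $W\in\Ww$ and each $i\in\{0,\dots,m-1\}$, output \textsc{Yes} iff
\[
\BP[i,W]+\BP[m-1-i,\,I\setminus W]\le t .
\]
\emph{Soundness (no false positives).} $\BP[i,W]=d$ means $W$ splits into $i$ full bins plus a leftover set of load $d$, and similarly $I\setminus W$ with leftover load $d'$. The two leftover sets merge into one bin of load $d+d'\le t$. Together with the $i+(m-1-i)$ full bins this gives a valid packing into $m$ bins, so \textsc{No}-instances are never accepted.

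\emph{Completeness.} It remains to show that, for a solution $X_1,\dots,X_m$ in Case~\ref{case:balanced}, a random $W\in\binom{I}{n/2}$ is a \emph{good witness} with probability at least $2^{-(1-\eps)n}$ up to polynomial factors. Then $|\Ww|$ samples contain one with high probability. A set $W$ is good if it has the form
\[
W=\Big(\biguplus_{j\in J}X_j\Big)\uplus P, \qquad P\subseteq X_k \text{ for some } k\notin J .
\]
For such $W$ we have $\BP[|J|,W]\le\Sigma(P)$ and $\BP[m-1-|J|,I\setminus W]\le \Sigma(X_k\setminus P)$, so the rule fires.

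\emph{Counting good witnesses.} Take $J\ni 1$ or $J\not\ni 1$ as appropriate, using that $b_1\in[\frac n2\pm\eps\frac n2]$. Fill up to size exactly $n/2$ by adding whole bins from $2,\dots,m$ and then a partial set $P$ from one further bin. The non-concentration condition $\sum_{i\ge 7}b_i>\eps n$ should give many choices. Either there are many bins outside the top six, which yields exponentially many choices of $J$. Or some bins $X_k$ have linear size, and then there are $\binom{b_k}{|P|}$ choices of the partial set $P$.

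\textbf{Main obstacle.} I expect the counting step to be the hardest. The goal is a bound of the form $\#\{\text{good }W\}\ge 2^{n-(1-\eps)n}/\poly(n)=2^{\eps n}/\poly(n)$, uniformly over all cardinality profiles. I would first try to pick the split point so that the partial bin $X_k$ contributes about $2^{\eps n}$ subsets: place roughly $\eps n$ flexible items around the cut and sum over the ways to reach size $n/2$. If a single bin cannot supply this, I would fall back to a convexity argument over the bins $7,\dots,m$. Both routes must go through this step, and if the count comes out as $2^{c\eps n}$ for some constant $c<1$, I would instead invoke \cref{lem:nederlof2023_bound} with $z=c\eps$ and adjust $\delta$ accordingly. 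Finally, the assumptions that $n$ is even and $\eps n/2$ is an integer are only used so that $n/2$ and the size windows are integral.
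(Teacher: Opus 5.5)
\textbf{Gap: the completeness count is not carried out.} The routine parts of your plan match the paper: sampling $2^{(1-\eps)n}\poly(n)$ half-size sets, running the DP on $\dc\Ww$ and on $\dc\{I\setminus W : W\in\Ww\}$ (whose size is $|\uc\Ww|$), applying \cref{lem:nederlof2023_bound} with $z=\eps$, the decision rule, and the soundness argument that merges the two leftover sets into one bin. What is missing is the actual content of the lemma: you never exhibit $2^{\eps n}/\poly(n)$ good witnesses, and your sketched routes do not get there.

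\begin{itemize}
\item The ``many choices of $J$'' route fails on its own. With $m=7$, $b_1=n/2$ and $b_2=\dots=b_7=n/12$, the hypothesis holds, yet there are at most $2^{6}$ choices of $J$.
\item Your fallback of taking $z=c\eps$ with $c<1$ would only prove the lemma for a smaller $\delta$. The stated $\delta=\frac{2}{\ln 2}\bigl(\frac{\eps}{4\log(12/\eps)}\bigr)^2$ forces $|\Ww|=2^{(1-\eps)n}\poly(n)$, hence a witness count of exactly $2^{\eps n}/\poly(n)$.
\end{itemize}

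\textbf{The missing idea.} Always cut inside the largest bin $X_1$, and put only whole bins from $X_2,\dots,X_m$ on the other side.
\begin{itemize}
\item You need a union $L$ of whole bins among $X_2,\dots,X_m$ with $\eps n\le |L|\le (1-\eps)n/2$.
\item This is where non-concentration enters. Combined with $b_1\ge (1-\eps)n/2$, it gives $b_2+\dots+b_6<(1-\eps)n/2$.
\item The paper starts with $L=X_2$ and $R=X_3\cup\dots\cup X_6$. It then assigns each remaining bin, which has at most $n/7$ items, to the currently smaller side. This keeps both sides at most $n/3+n/7\le(1-\eps)n/2$.
\item Taking $L$ to be the larger side gives $|L|\ge (n-b_1)/2\ge (1-\eps)n/4\ge \eps n$.
\end{itemize}
Every set $L\uplus D$ with $D\subseteq X_1$ and $|D|=n/2-|L|$ is then a good witness, with partial bin $X_1$. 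Both $|D|\ge \eps n/2$ and $b_1-|D|\ge |L|-\eps n/2\ge \eps n/2$ hold, so unimodality gives at least $\binom{\eps n}{\eps n/2}\ge 2^{\eps n}/(\eps n+1)$ such sets. This is exactly the hit probability $2^{-(1-\eps)n}/\poly(n)$ that your sample size requires.
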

\begin{proof}
    The algorithm is inspired by the proof of \cite[Lemma 3.5]{nederlof2023faster}. We show how to achieve probability at least 1/2, which can be boosted to probability at least $1- n^{-\Omega(1)}$ with a polynomial overhead.
    Consider a \binpacking instance $(m, t, I)$ with $m \geq 7$ and $|I|$ even.
    Let $\Ww$ be a collection of $2^{(1-\eps)n} n^{2}$ subsets of $I$ of size $n/2$ sampled independently and uniformly at random, and let $\Ww' = \{I \setminus X \mid X \in \Ww\}$.
    For each $X \in \Ww \cup \Ww'$ and $\ell \in [m-1]$, compute the DP entries $\BP[\ell, X]$ and $\BP[m-\ell-1, I \setminus X]$.
    Return \textsc{Yes} if and only if $\BP[\ell, X] + \BP[m-\ell-1, I \setminus X] \leq t$ for some $X \in \Ww$.
    By \cref{lem:binpacking_DP_computation}, computing all DP entries takes in total time $\Oh(n \cdot m \cdot (|\dc \Ww| + |\dc \Ww'|))$.

    We now show that $|\dc \Ww'| \leq |\uc \Ww|$ by providing an injective function $f: \dc \Ww' \to \uc \Ww$.
    For every $Y \in \dc \Ww'$, set $f(Y) \coloneq I \setminus Y$.
    Since $Y \in \dc \Ww'$, there exists $X' \in \Ww'$ such that $Y \subseteq X'$.
    By the definition of $\Ww'$, there exists $X \in \Ww$ such that $X' = I \setminus X$.
    Then, $X = I \setminus X' \subseteq I \setminus Y = f(Y)$, and hence $f(Y) \in \uc \Ww$.
    Moreover, if $f(Y) = f(Z)$ for some $Y, Z \in \dc \Ww'$, then $I \setminus Y = I \setminus Z$ and therefore $Y = Z$.
    Thus, $f$ is injective and $|\dc \Ww'| \leq |\uc \Ww|$.

    By \cref{lem:nederlof2023_bound} we have
    \begin{align*}
        |\dc \Ww| + |\uc \Ww| \leq \Oh(2^{(1-\delta)n})
    \end{align*}
    for $\delta= \frac{2}{\ln 2} \left(\frac{\eps}{4 \log(12/\eps)}\right)^2$. Thus, the algorithm takes total running time $\Os(2^{(1-\delta)n})$.

    Assume that the algorithm returns \textsc{Yes}, i.e.~there is a subset $X \in \Ww$ and $\ell \in [m-1]$ such that $\BP[\ell, X] + \BP[m-\ell-1, I \setminus X] \leq t$.
    Let $X_1, \dots, X_\ell, D_1$ be the partition of $X$ corresponding to $\BP[\ell, X]$, i.e.~$\Sigma(X_i) \leq t$ for all $i \in [\ell]$ and $\Sigma(D_1) = \BP[\ell, X] \leq t$.
    Similarly, let $X_{\ell+2}, \dots, X_m, D_2$ be the partition of $I \setminus X$ corresponding to $\BP[m-\ell-1, I \setminus X]$, i.e.~$\Sigma(X_i) \leq t$ for all $i \in \{\ell+2, \dots, m\}$ and $\Sigma(D_2) = \BP[m-\ell-1, I \setminus X] \leq t$.
    Set $X_{\ell + 1} \coloneq D_1 \cup D_2$. Then, $X_1, \dots, X_m$ is a partition of $I$ such that $\Sigma(X_i) \leq t$ for every $i \in [m]$, i.e.~$X_1, \dots, X_m$ is a solution of $(m, t, I)$.

    For the converse, assume that there exists a solution $X_1, \dots, X_m$ with cardinalities $b_1 \geq \dots \geq b_m$ such that $b_1 \in [\frac{n}{2} \pm \eps\cdot \frac{n}{2}]$ and $\sum_{i=1}^6 b_i < (1-\eps) \cdot n$.
    We show how to partition $I \setminus X_1$ into two subsets $L$ and $R$ of sizes at most $(1-\eps)n/2$.
    Initialize $L \gets X_2$ and $R \gets \bigcup_{i=3}^6 X_i$ (see \cref{fig:LR}). Then, for every $j \in \{7, \dots, m\}$ assign $X_j$ to the smallest subset of $L$ or $R$, i.e.~let $M \gets M \cup X_j$ where $M \in \argmin\{|L|, |R|\}$.
    By assumption, we have $\sum_{i=1}^6 b_i - b_1 \leq (1-\eps)n - (1-\eps)n/2 = (1-\eps)n/2$, so, at initialization
    \begin{equation}\label{eq:bp_balanced_invariant}
        |L| = b_2 \leq \sum_{i=1}^6 b_i - b_1 \leq  (1-\eps)n/2 \quad \text{ and } \quad
        |R| = \sum_{i=3}^6 b_i \leq \sum_{i=1}^6 b_i - b_1 \leq  (1-\eps)n/2.
    \end{equation}
    We argue that this holds before and after every step.  Indeed, at the beginning of step $j \in \{7, \dots, m\}$ we have $X_1 \uplus L \uplus R = X_1 \uplus \dots \uplus X_{j-1} \subset I$ and so the smallest subset in $\{X_1, L, R\}$ has size at most $\frac{1}{3}|X_1 \uplus \dots \uplus X_{j-1}| \leq n/3$. Since $|X_1| = b_1 \geq (1-\eps)n/2 >n/3$, the smallest subset is either $L$ or $R$. Furthermore, notice that $b_m \leq \dots \leq b_7 \leq n/7$. So, after step $j$, if $X_j$ was assigned to $M \in \{L, R\}$, then $|M|$ is increased to at most $n/3 + n/7 \leq (1-\eps)n/2$ (since $\eps < \frac{1}{21}$).
    So \cref{eq:bp_balanced_invariant} is indeed preserved throughout the steps, and we obtain a partition of $I \setminus X_1$ into $L$ and $R$ such that each part has size at most $(1-\eps)n/2$.

    \begin{figure}
        \centering
        \includegraphics[height=0.25\textheight]{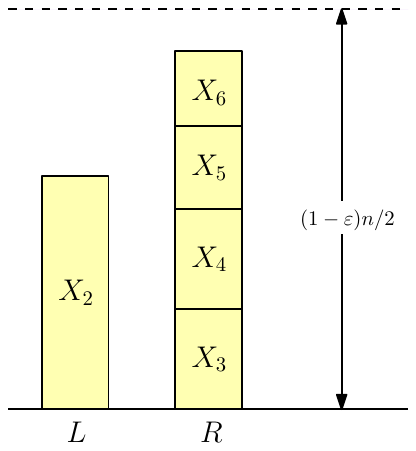}
        \caption{Initializing $L \gets X_2$ and $R \gets \bigcup_{i = 3}^{6}X_i$ in \cref{lem:case_balanced}.}\label{fig:LR}
    \end{figure}

    Without loss of generality, assume that $|L| \geq |R|$.
    Then, since $b_1 \leq (1+\eps)n/2$ we have
    \begin{align*}
        |L| \geq \frac{1}{2}|I \setminus X_1|  \geq \frac{1}{2}(n - (1+\eps)n/2)  = (1-\eps)n/4.
    \end{align*}
    This yields
    \[n/2 - |L| \leq n/2 - (1-\varepsilon)n/4 = (1+\eps)n/4 \leq (1-\eps)n/2 \leq |X_1|.\]
    Moreover, from $|L| \leq (1-\eps)n/2$ we obtain $n/2 - |L| \geq \eps n/2 \geq 0$.

    So there exists at least one subset $D \subset X_1$ of size $|D| = \frac{n}{2} - |L|$.
    For any such subset $D$, let $D' \coloneq X_1 \setminus D$ and $X_D \coloneq L \cup D$ (see \cref{fig:witness}).
    Then $X_D$ has size $n/2$, $L \subset X_D$ and $R \cap X_D = \emptyset$.
    Furthermore, if $\ell \in [m-1]$ is the number of parts $X_i$ contained in $L$, then we have $\BP[\ell, X_D] + \BP[m-\ell-1, I \setminus X_D] \leq \Sigma(D) + \Sigma(D') = \Sigma(X_1) \leq t$.
    Hence, if the algorithm samples such a subset $X_D$, then it will output \textsc{Yes}.

    It remains to analyze the probability of sampling a suitable set $X_D$.
    The number of subsets $X_D$ is at least
    \[\binom{b_1}{n/2 - |L|} \geq \binom{(1-\eps)n/2}{n/2 -|L|},\]
    since $b_1 \geq (1-\varepsilon)n/2 \geq n/2 - |L|$.

    We have already established that $n/2 - |L| \geq \eps n /2$.
    Now, observe that
    \[(1-\eps)n/2 - (n/2 - |L|) = |L| - \eps n /2 \geq (1-\eps)n/4 - \eps n /2 \geq \eps n/2\]
    since $\eps \leq 1/5$.
    By the unimodality of the binomial coefficient, we now obtain
    \[\binom{(1-\eps)n/2}{n/2 -|L|} \geq \binom{(1-\eps)n/2}{\eps n/2} \geq \binom{\eps n}{\eps n/2}.\]
    We can apply an averaging argument to further obtain
    \[\binom{\eps n}{\eps n/2} \geq \frac{1}{\eps n + 1} \sum _{r = 0}^{\eps n} \binom{\eps n}{r} = \frac{2^{\eps n}}{\eps n + 1}.\]

    Since there are $\binom{n}{n/2}$ subsets to sample from, the probability that a random subset $X_{\$}$ of size $n/2$ is of the form $X_D$ is at least
    \begin{align*}
        \Pr[\exists D \subset X_1 \ : \   X_{\$} = L \cup D] \geq \frac{\left(\frac{2^{\eps n}}{\eps n + 1}\right)}{\binom{n}{n/2}} \geq \frac{\left(\frac{2^{\eps n}}{\eps n + 1}\right)}{2^n} = 2^{(\eps - 1)n} \frac{1}{\eps n+1} \ge \frac{1}{2^{(1-\eps)n}n^2}.
    \end{align*}
    Since $|\Ww| \geq 2^{(1-\eps)n} n^{2}$, we deduce that $\Ww$ contains a suitable subset $X_D$ with probability at least $1/2$. By the above discussion, this means that with probability at least $1/2$ the algorithm outputs \textsc{Yes}.
    \begin{figure}
        \centering
        \includegraphics[width=0.95\linewidth]{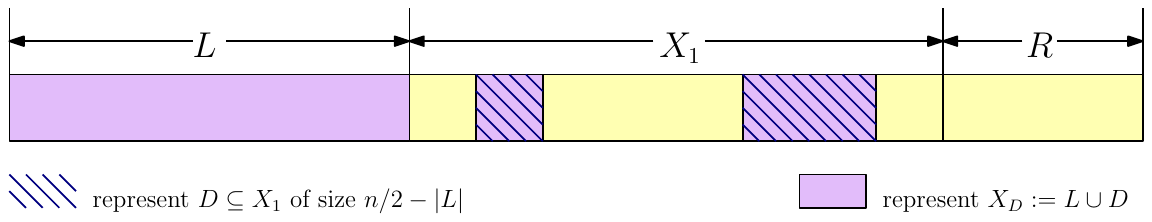}
        \caption{The sets $L$, $R$, $X_1$, $D$ and $X_D$ in \cref{lem:case_balanced}.}\label{fig:witness}
    \end{figure}
\end{proof}
Finally, we give the approach for Case~\ref{case:ARC}.
Recall the constant $\eta_\eps > 0$ of \cref{thm:arc_3partition}.

\begin{lemma}[Case \ref{case:ARC}]\label{lem:case_ARC}
    Assuming ARC, there exists a deterministic algorithm running in time $\Os(2^{(1-\delta)n})$ for $\delta =\min \{1- H(\frac{1-\eps}{2}), \eta_\eps\} \in (0, 1)$ that, given an instance of \binpacking  with $m\geq 7$ bins, outputs:
    \begin{itemize}
        \item \textnormal{\textsc{No}} if the given instance of \binpacking has no solution;
        \item \textnormal{\textsc{Yes}} if the given instance of \binpacking has a solution with cardinalities $b_1 \geq \dots \geq b_m$ such that $b_1 \leq (1-\eps)\cdot n / 2$ and $\sum_{i=1}^6 b_i < (1-\eps) \cdot n$.
    \end{itemize}
\end{lemma}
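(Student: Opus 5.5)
The plan is to reduce to one instance of \threepartition{} over the universe $I$ (an $n$-element universe after identifying items with indices), with three set families that are each $(1-\eps)/2$-bounded, and then invoke \cref{thm:arc_3partition}. The key combinatorial step, mirroring the greedy argument in \cref{lem:case_balanced}, is to show that any solution as in the statement can be grouped into three consecutive-or-arbitrary blocks of bins, each containing at most $(1-\eps)n/2$ items. Since $b_1 \le (1-\eps)n/2$, every single bin is small. I will show that the bins of the solution can be partitioned into three groups $G_1, G_2, G_3$ with $|\bigcup_{i\in G_k} X_i| \le (1-\eps)n/2$ for each $k$.

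For the grouping I will try the following. Initialize $G_1=\{1\}$, $G_2=\{2\}$, $G_3=\{3\}$ and process bins $4,\dots,m$ in order, adding each to the currently smallest group. At the start of step $j$ the smallest group has at most $n/3$ items, and $b_j \le b_4 \le n/4$ for $j \ge 4$. That gives $n/3+n/4 = 7n/12$, which is too much. I therefore need to exploit $\sum_{i=1}^6 b_i < (1-\eps)n$ more carefully; this is the main obstacle.

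I would try a first-fit-decreasing argument. If a group exceeds $(1-\eps)n/2$ after adding bin $j$, then before the addition all three groups had at least $(1-\eps)n/2 - b_j$ items, so $n \ge 3((1-\eps)n/2 - b_j) + b_j$, which gives $b_j \ge ((1-\eps)3/2 - 1)n/2 \approx n/4$. For $j \ge 7$ we have $b_j \le n/7$, so no problem arises. For $j \in \{4,5,6\}$ the first six bins contain fewer than $(1-\eps)n$ items, so I will handle them separately. I will enumerate all ways of distributing bins $1,\dots,6$ into three groups and pick one in which every group has at most $(1-\eps)n/2$ items. This is possible because any group receiving two of those bins has at most $b_1+b_2$ items, and I can pair them as $\{1,6\},\{2,5\},\{3,4\}$ or similar, with sizes at most $(1-\eps)n/2$ after a short case check using the sum bound and $b_1 \le (1-\eps)n/2$. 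I am least sure this check goes through cleanly and may need to use a slightly smaller bound. Afterwards I greedily add bins $7,\dots,m$ to the smallest group. That group has at most $n/3$ items, plus $n/7$, giving at most $10n/21 \le (1-\eps)n/2$ for $\eps<1/21$.

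For the algorithm, for each triple $(m_1,m_2,m_3)$ with $m_1+m_2+m_3=m$ (polynomially many), define $\Ff_k = \{S \subseteq I : |S| \le (1-\eps)n/2,\ \BP[m_k,S]=0\}$. Computing these entries with \cref{lem:binpacking_DP_computation} over $\Ww = \binom{I}{\le (1-\eps)n/2}$ takes $\Os(2^{H(\frac{1-\eps}{2})n})$ time. I then run the algorithm of \cref{thm:arc_3partition} and answer \textsc{Yes} if any call succeeds. Soundness holds because three disjoint sets covering $I$, packable into $m_1,m_2,m_3$ bins, give a solution. Completeness follows from the grouping with $m_k=|G_k|$. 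The total time is $\Os(2^{(1-\delta)n})$ with the stated $\delta$.
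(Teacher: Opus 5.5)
Your reduction is the same as the paper's. For each split $m_1+m_2+m_3=m$ you build $\Ff_k=\{S : |S|\le(1-\eps)n/2,\ \BP[m_k,S]=0\}$ via \cref{lem:binpacking_DP_computation} and call \cref{thm:arc_3partition}. Soundness, the running time, and your greedy step for bins $j\ge 7$ are fine: the lightest group has at most $n/3$ items, $b_j\le n/7$, and $10n/21\le(1-\eps)n/2$. The paper's \cref{claim:ARC_buckets} is exactly the existence of your groups $G_1,G_2,G_3$.

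The gap is the step you flagged yourself. Write $T$ for $(1-\eps)n/2$. You never show that bins $1,\dots,6$ can be split into three groups of total cardinality at most $T$, and the justification you sketch does not work:
\begin{itemize}
\item Bounding a two-bin group by $b_1+b_2$ is useless, since $b_1+b_2$ can exceed $T$.
\item The fixed pairing $\{1,6\},\{2,5\},\{3,4\}$ fails when $b_1$ is close to $T$. For example, $b_1=T$ and $b_2=\dots=b_6\approx T/6$ satisfy $\sum_{i\le 6}b_i<2T$ but give $b_1+b_6>T$.
\end{itemize}
This is why the paper splits on $b_1$ versus $(1-2\eps)n/3$. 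Below the threshold the pairing works because $b_1+b_5\le b_1+(n-b_1)/4\le T$. Above it, the paper starts from $\{1\},\{2\},\{3,4\}$, bounds the group receiving bin $5$ by $\frac{5}{8}(n-b_1)$, and uses $b_j\le(n-b_1)/5$ for $j\ge 6$.

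The missing statement is nevertheless true, and it needs only $b_i\le b_1\le T$ and $\sum_{i\le 6}b_i<2T$. Place $b_1,\dots,b_6$ in order, opening a new group whenever the next bin does not fit into the current one (Next Fit). Every group then has total at most $T$. Suppose a fourth group were opened. The first bin of $G_2$ did not fit into $G_1$, so $|G_1|+|G_2|>T$, and likewise $|G_3|+|G_4|>T$. That contradicts $\sum_{i\le 6}b_i<2T$. If fewer than three groups arise, split a group containing several bins, so that all $m_k\ge 1$.

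With this lemma inserted your proof is complete. It is somewhat cleaner than the paper's: it treats bins $1$--$6$ uniformly, avoids the case distinction and the special handling of bin $5$, and needs only $\eps\le 1/21$.
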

\begin{proof}
    The idea is to reduce the \binpacking instance to  \threepartition instances, which we then solve using \cref{thm:arc_3partition}.
    Let us assume that $m \leq n$, as otherwise the problem is trivial.
    Consider a \binpacking instance $(m, t, I = \{i_1,\dots,i_n\})$ and denote by $\Ww$ the collection of subsets of $[n]$ of size at most $(1-\eps)\cdot n/2$.
    For a subset $X$ of $[n]$, let $I(X)$ denote the set of items with indices in $X$.
    For every $X \in \Ww$ and $\ell \in [m]$, compute the DP entry $\BP[\ell, I(X)]$. By \cref{lem:binpacking_DP_computation}, this takes in total time $\Oh(n \cdot m \cdot |\dc \Ww|) = \Os(2^{H(\frac{1-\eps}{2})n})$.
    Next, for every $\ell_1, \ell_2, \ell_3 \in [m]$ such that $\ell_1 + \ell_2 + \ell_3 = m$, construct the families $\Ff_1, \Ff_2, \Ff_3$ defined as
    \begin{align*}
        \Ff_i \coloneq \{X \in \Ww \ : \ \BP[\ell_i, I(X)] = 0\}
    \end{align*}
    for $i \in \{1, 2, 3\}$.
    Note that $\Ff_1, \Ff_2$ and $\Ff_3$ are set families over the universe $[n]$, and the subsets in $\Ff_1, \Ff_2$ and $\Ff_3$ have size at most $(1-\eps)n/2$. So $(\Ff_1, \Ff_2, \Ff_3)$ is an instance of \threepartition over an $n$-element universe for $(1-\eps)/2$-bounded set families, which can be solved in time $\Os(2^{(1-\eta_\eps) n})$ by \cref{thm:arc_3partition}.
    Return \textsc{Yes} if and only if a solution to \threepartition was found for $(\Ff_1, \Ff_2, \Ff_3)$ for some  choice of $\ell_1, \ell_2, \ell_3 \in [m]$ with $\ell_1 + \ell_2 + \ell_3 = m$.
    In total, since there are $\Oh(m^2)$
    choices of $\ell_1, \ell_2, \ell_3$ with $\ell_1 + \ell_2 + \ell_3 = m$, the running time is at most
    \begin{align*}
        \Os(2^{H(\frac{1-\eps}{2})n} + 2^{(1-\eta_\eps)n}) = \Os(2^{(1-\delta)n})
    \end{align*}
    for $\delta \coloneq \min \{1-H(\frac{1-\eps}{2}), \eta_\eps\}$. Note that $0 < \frac{1-\eps}{2} < 1/2$ so $0 < H(\frac{1-\eps}{2}) < 1$ and thus $\delta \in (0, 1)$ as required.

    We argue correctness. Assume the algorithm outputs \textsc{Yes}, i.e.~we found $S_1 \in \Ff_1$, $S_2 \in \Ff_2$ and $S_3 \in \Ff_3$ such that $S_1 \uplus S_2 \uplus S_3 = [n]$ for some choice of $\ell_1, \ell_2, \ell_3$. Then by definition of the families, for $j \in \{1, 2, 3\}$, we have $\BP[\ell_j, I(S_j)]=0$. So let $X_1, \dots, X_{\ell_1}$ be the partition of $I(S_1)$ corresponding to $\BP[\ell_1, I(S_1)]$, $X_{\ell_1 + 1}, \dots, X_{\ell_1 + \ell_2}$ be the partition of $I(S_2)$ corresponding to $\BP[\ell_2, I(S_2)]$, and $X_{\ell_1 + \ell_2 + 1}, \dots, X_{m}$ be the partition of $I(S_3)$ corresponding to $\BP[\ell_3, I(S_3)]$. We have $\Sigma(X_i) \leq t$ for every $i \in [m]$. Furthermore, since $S_1, S_2, S_3$ are disjoint, $X_1, \dots, X_m$ are also disjoint. Hence, $I$ can indeed be packed into $m$ bins.

    Conversely, assume that there exists a solution $X_1, \dots, X_m$ with cardinalities $b_1 \geq \dots \geq b_m$ such that $b_1 \leq (1-\eps)\cdot n / 2$ and $\sum_{i=1}^6 b_i < (1-\eps) \cdot n$.
    We later show the following claim.
    \begin{claim}\label{claim:ARC_buckets}
        There is a partition $B_1, B_2, B_3$ of $[m]$ such that  $\sum_{i \in B_k} b_i \leq (1-\eps)n/2$ for every $k \in \{1, 2, 3\}$.
    \end{claim}
    Assume that \cref{claim:ARC_buckets} holds and set $S_k \coloneq \bigcup_{i \in B_k} X_i$ and $\ell_k \coloneq |B_k|$ for $k \in \{1, 2, 3\}$.
    Then $S_1, S_2, S_3$ is a partition of $I$ such that every part has size at most $(1-\eps)n/2$.
    Let $\hat S_1, \hat S_2, \hat S_3$ be the index sets of $S_1, S_2, S_3$, and observe that $\hat S_1, \hat S_2, \hat S_3$ is a partition of $[n]$ such that every part has size at most $(1-\eps)n/2$.
    Thus, the families $\Ff_1, \Ff_2, \Ff_3$ corresponding to $\ell_1, \ell_2, \ell_3$ are a \textsc{Yes} instance of \threepartition, and the above algorithm outputs \textsc{Yes} for \binpacking. It remains to prove the above claim.

    \begin{claimproof}[Proof of \cref{claim:ARC_buckets}]
        We differentiate between two cases by comparing $b_1$ to $(1-2\eps)n/3$. In each case, we describe an iterative construction of $B_1, B_2, B_3$ with the invariant that
        \begin{equation}\label{eq:ARC_invariant}
            \sum_{i \in B_k} b_i \leq (1-\eps)n/2 \qquad \text{ for every } k \in \{1, 2, 3\}
        \end{equation}
        holds in each iterative step.

        \textbf{Case 1:} $b_1 \leq (1-2\eps)n/3$.
        Initialize $B_1 \gets \{1, 6\}$, $B_2 \gets \{2, 5\}$ and $B_3 \gets \{3, 4\}$ (see \cref{fig:b1_small}).
        Then, for every $j \in \{7, \dots, m\}$ assign $j$ to the set $B_k$ that minimizes $\sum_{i \in B_k} b_i$, i.e.~$B_k \gets B_k \cup \{j\}$ where $k = \argmin_{k \in \{1, 2, 3\}}\{\sum_{i \in B_k} b_i \}$.
        This way, we obtain a partition $B_1 \uplus B_2 \uplus B_3 = [m]$.
        We verify that \cref{eq:ARC_invariant} holds in every step of the construction.
        At initialization, we have
        \begin{align*}
            \sum_{i \in B_1} b_i & = b_1 + b_6 \leq b_1 + b_5,                                                                                                        \\
            \sum_{i \in B_2} b_i & = b_2 + b_5 \leq b_1 + b_5,                                                                                                        \\
            \sum_{i \in B_3} b_i & = b_3 + b_4 \leq \frac{1}{2}\sum_{i=1}^4 b_i \leq \frac{1}{2}\sum_{i=1}^6 b_i \leq (1-\eps)n/2 \tag{since $b_3+b_4 \leq b_1+b_2$}.
        \end{align*}
        Furthermore, since $b_1 \leq (1-2\eps)n/3$ (we are in Case 1), we can bound
        \begin{align*}
            b_1 + b_5
             & \leq b_1 + (n-b_1)/4 \tag{because $b_1 \geq \dots \geq b_5$} \\
             & \leq \frac{n}{4} + \frac{3}{4} (1-2\eps)\frac{n}{3}          \\
             & =(1-\eps)n/2.
        \end{align*}
        Hence, \cref{eq:ARC_invariant} holds at initialization.
        Now consider the step of the construction corresponding to assigning $j \in \{7, \dots, m\}$ to $B_k$. Before the assignment, we have $\sum_{k=1}^3 \sum_{i \in B_k} b_i = \sum_{i = 1}^{j-1} b_i \leq n$ and so, the set $B_k$ that minimizes $\sum_{i \in B_k} b_i$ has $\sum_{i \in B_k} b_i \leq n/3$. Furthermore, notice that $b_m \leq \dots \leq b_7 \leq n/7$. So, after assigning $j$ to $B_k$, we obtain $\sum_{i \in B_k} b_i \leq n/3 + n/7 \leq (1-\eps)n/2$  (since $\eps < \frac{1}{21}$), i.e.~\cref{eq:ARC_invariant} is indeed preserved throughout the steps.

        \textbf{Case 2:} $b_1 \geq (1-2\eps)n/3$.
        Initialize $B_1 \gets \{1\}$, $B_2 \gets \{2\}$ and $B_3 \gets \{3, 4\}$ (see \cref{fig:b1_large}).
        Then, for every $j \in \{5, \dots, m\}$ assign $j$ to the set $B_k$ that minimizes $\sum_{i \in B_k} b_i$, i.e.~$B_k \gets B_k \cup \{j\}$ where $k = \argmin_{k \in \{1, 2, 3\}}\{\sum_{i \in B_k} b_i \}$.
        This way, we obtain $B_1 \uplus B_2 \uplus B_3 = [m]$.
        We verify that \cref{eq:ARC_invariant} holds in every step of the construction.
        At initialization, we have
        \begin{align*}
            \sum_{i \in B_1} b_i & = b_1 \leq (1-\eps)n/2,                                                                                                            \\
            \sum_{i \in B_2} b_i & = b_2 \leq b_1 \leq (1-\eps)n/2,                                                                                                   \\
            \sum_{i \in B_3} b_i & = b_3 + b_4 \leq \frac{1}{2}\sum_{i=1}^4 b_i \leq \frac{1}{2}\sum_{i=1}^6 b_i \leq (1-\eps)n/2. \tag{since $b_3+b_4 \leq b_1+b_2$}
        \end{align*}
        So \cref{eq:ARC_invariant} indeed holds at initialization.
        Now consider the first step of the construction, i.e.~we assign $5$ to $B_k$ and obtain
        \begin{align*}
            \sum_{i \in B_k} b_i
             & = \min\{b_1, b_2, b_3 + b_4\} + b_5                                                                                                                                                \\
             & \leq \min\{b_2, b_3 + b_4\} + b_5                                                                                                                                                  \\
             & \leq \frac{b_2 +b_3 +b_4}{2} + b_5                                                                                                                                                 \\
             & =\frac{5}{8}\left( \left(\frac{4}{5} b_2 + \frac{1}{5} b_5\right) + \left(\frac{4}{5} b_3 + \frac{1}{5} b_5\right) + \left(\frac{4}{5} b_4 + \frac{1}{5} b_5\right) +  b_5 \right) \\
             & \leq \frac{5}{8} (b_2 + b_3 + b_4 +b_5) \tag{since $b_5 \leq b_4 \leq b_3 \leq b_2$}                                                                                               \\
             & \leq \frac{5}{8}(n-b_1)                                                                                                                                                            \\
             & \leq \frac{5}{8}(n-(1-2\eps)n/3) \tag{since we are in Case 2}                                                                                                                      \\
             & = (1+\eps)\frac{5}{12}n \leq (1-\eps)n/2
        \end{align*}
        where the last inequality holds because $\eps \leq \frac{1}{11}$. So \cref{eq:ARC_invariant} holds after the first iterative step.
        Now consider the step of the construction corresponding to assigning $j \in \{6, \dots, m\}$ to $B_k$. Before the assignment, we have $\sum_{k=1}^3 \sum_{i \in B_k} b_i = \sum_{i = 1}^{j-1} b_i \leq n$ and so, the set $B_k$ that minimizes $\sum_{i \in B_k} b_i$ has $\sum_{i \in B_k} b_i \leq n/3$. Furthermore, notice that
        \begin{align*}
            b_m \leq \dots \leq b_6 & \leq \frac{n-b_1}{5}                                   \\
                                    & \leq \frac{1}{5}(n - (1-2\eps)n/3) \tag{by assumption} \\
                                    & = (1+\eps)\frac{2}{15}n.
        \end{align*} So, after assigning $j$ to $B_k$, we obtain $\sum_{i \in B_k} b_i \leq n/3 + (1+\eps)\frac{2}{15}n \leq (1-\eps)n/2$  (since $\eps < \frac{1}{19}$), i.e.~\cref{eq:ARC_invariant} is indeed preserved throughout the steps.

        \begin{figure}
            \begin{minipage}[t]{0.43\textwidth}
                \centering
                \includegraphics[height=0.25\textheight]{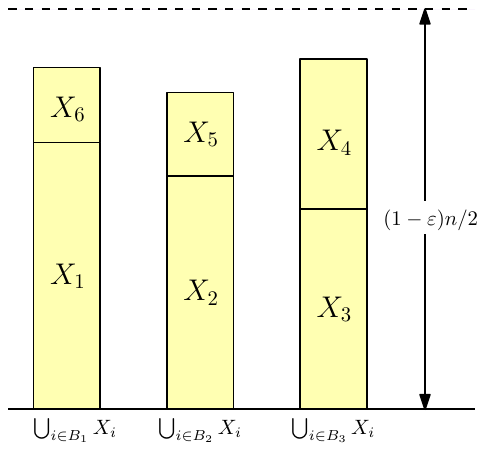}
                \caption{
                    In \cref{lem:case_ARC}, if $b_1 \le (1 - 2\eps)n/3$ then we initialize $B_1$, $B_2$ and $B_3$  to $B_1 \gets \{1, 6\}$, $B_2 \gets \{2, 5\}$ and $B_3 \gets \{3, 4\}$.} \label{fig:b1_small}
            \end{minipage} \hfill
            \begin{minipage}[t]{0.43\textwidth}
                \centering
                \includegraphics[height=0.25\textheight]{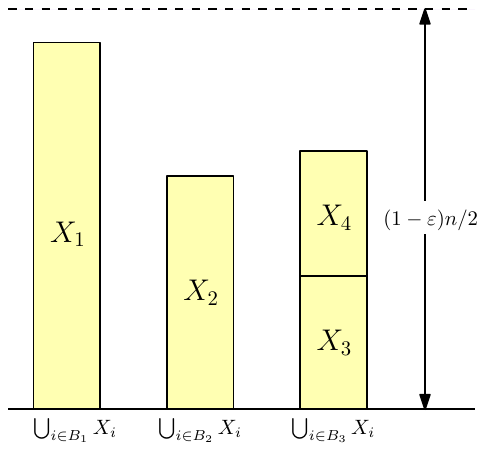}
                \caption{
                    In \cref{lem:case_ARC}, if $b_1 \ge (1 - 2\eps)n/3$ then we initialize $B_1$, $B_2$ and $B_3$ to $B_1 \gets \{1\}$, $B_2 \gets \{2\}$ and $B_3 \gets \{3, 4\}$.} \label{fig:b1_large}
            \end{minipage}
        \end{figure}
    \end{claimproof}
    This concludes the correctness proof.
\end{proof}
We now have all the ingredients to prove \cref{thm:bin-packing-arc}, that we restate for convenience.

\BPARC*
\begin{proof}
    Recall that we chose $\eps >0$ as a rational constant such that $\eps \leq \frac{1}{100}$ and $H(\eps) < \eps_6$, where $\eps_6$ is as in \cref{thm:nederlof_binpacking} and $H(\cdot)$ is the entropy function.
    Given any \binpacking instance $(m, t, I)$, we can solve it in the desired time using \cref{thm:nederlof_binpacking} if $m \leq 6$.
    This algorithm has no false positives, and correctly detects \text{Yes} instances with high probability.

    Otherwise, we have $m \geq 7$.
    If $|I| = n$ is odd, or $\eps n/2$ is not an integer, then we add dummy items of weight $0$ until the number of items is even and $\eps n/2$ is an integer, and let $(m,t,I')$ be the resulting instance.
    We have chosen $\eps$ as a rational, so $\eps = \frac{a}{b}$ for some $a,b \in \mathbb{N}$ such that $\gcd(a,b) = 1$.
    Let $n' \geq n$ be the smallest number such that $2b$ divides $n'$.
    Then, clearly $n'$ is even.
    Moreover, we have that $\eps n'/2 = \frac{a n'}{2b}$ is an integer.
    Hence, we must add at most $2b-1$ dummy items to $I$ to create instance $I'$.
    Since all added items have weight $0$, $(m,t,I)$ is a yes-instance if and only if $(m,t,I')$ is a yes-instance.

    If there is a solution of cardinalities $b_1 \geq \dots \geq b_m$, then one of the following cases holds.
    \begin{caseenum}
        \item $b_1 \geq (1+\eps)\cdot n / 2$
        \item $\sum_{i=1}^6 b_i \geq (1-\eps)\cdot n$
        \item $b_1 \in [\frac{n}{2} \pm \eps\cdot \frac{n}{2}]$ and $\sum_{i=1}^6 b_i < (1-\eps) \cdot n$
        \item $b_1 \leq (1-\eps)\cdot n / 2$ and $\sum_{i=1}^6 b_i < (1-\eps) \cdot n$
    \end{caseenum}
    Hence, since $m \geq 7$ and $n$ is even and $\eps n/2$ is an integer, we can run the algorithms of \cref{lem:case_ARC,lem:case_balanced,lem:case_cstBP,lem:case_DP}
    and output \textsc{No} if and only if all the algorithms output \textsc{No}.
    If there is a solution to the given \binpacking instance this algorithm outputs \textsc{Yes} with high probability, and \text{No} if there is no solution.

    Indeed, none of the four algorithms have false positives, so we always output \textsc{No} if there is no solution.
    If there is a solution, it falls into one of the four cases, and then we run an algorithm for the corresponding case, and this algorithm outputs \textsc{Yes} with high probability.
    Each algorithm used, as well as the algorithm of \cref{thm:nederlof_binpacking} runs in time $O((2-\gamma)^{|I'|})$, for some sufficiently small $\gamma > 0$, and hence the overall algorithm also runs in time $O((2-\gamma)^{|I'|}) = O((2-\gamma)^{|I| + 2b}) = O((2-\gamma)^{|I|})$, since $b$ is a constant, concluding the proof.
\end{proof}
 
\bibliography{references}

\end{document}